\documentclass[pra,twocolumn,superscriptaddress,notitlepage,longbibliography,showkeys,showpacs,floatfix]{revtex4-1}

\usepackage[toc,page]{appendix}
\usepackage{bm, hyperref}
\usepackage{graphics}
\usepackage{amsmath,amsthm,amsfonts,amssymb,physics}
\usepackage{epsfig}
\usepackage{graphicx}
\usepackage[capitalise]{cleveref}
%
%
\newcommand{\argmin}{\operatornamewithlimits{argmin}}
\newcommand{\argmax}{\operatornamewithlimits{argmax}}
\newcommand{\spin}{{\bf S}}
\newcommand{\Span}{{\rm span}}
\newcommand{\hilbert}{\mathbb{H}}
\newcommand{\hilbertTensor}{\mathbb{H}^{\otimes}}
\newcommand{\compatstates}[2]{\mathfrak{C}_{#1}(#2)}

\usepackage[dvipsnames]{xcolor}

\newtheorem{prop}{Proposition}[section]
\newtheorem{lem}[prop]{Lemma}
\newtheorem{cor}[prop]{Corollary}
\newtheorem{obs}[prop]{Observation}

\begin{document}
\title{Quantum Covariance Scalar Products and Efficient Estimation of Max-Ent Projections}

\author{F. T. B. Pérez}
\author{J. M. Matera}
\affiliation{IFLP - CONICET, Departamento de Física, Facultad de Ciencias Exactas, Universidad Nacional de La Plata, C.C. 67, 1900 La Plata, Argentina}
\date{\today}
\begin{abstract}
The maximum-entropy principle (Max-Ent) is a valuable and extensively used tool in statistical mechanics and quantum information theory. It provides a method for inferring the state of a system by utilizing a reduced set of parameters associated with measurable quantities. However, the computational cost of employing Max-Ent projections in simulations of quantum many-body systems is a significant drawback, primarily due to the computational cost of evaluating these projections.
In this work, a different approach for estimating Max-Ent projections is proposed. The approach involves replacing the expensive Max-Ent induced local geometry, represented by the Kubo-Mori-Bogoliubov (KMB) scalar product, with a less computationally demanding geometry. Specifically, a new local geometry is defined in terms of the quantum analog of the covariance scalar product for classical random variables. Relations between induced distances and projections for both products are explored. Connections with standard variational and dynamical Mean-Field approaches are discussed.
The effectiveness of the approach is calibrated and illustrated by its application to the dynamic of excitations in a XX Heisenberg spin-$\frac{1}{2}$ chain model.
\end{abstract}
\pacs{03.67.Mn, 03.65.Ud, 75.10.Jm}
\maketitle
The field of Quantum Simulation in physics has gained significant attention in recent years \cite{feynman_simulating_1982, shor_algorithms_1994, nielsen_quantum_2000} due to its profound implications for the study of efficient and reliable control of large-scale quantum many-body systems. Consequently, the development of efficient simulation techniques for quantum many-body systems has become closely intertwined with this interdisciplinary domain, as understanding the evolution and dynamical properties of such systems is crucial for effective control strategies \cite{mansell_cold-atoms_2014, gordon_universal_2007}.

Nonetheless, studying the exact dynamics of open and closed quantum many-body systems remains a fundamental challenge in quantum mechanics \cite{nielsen_quantum_2000, shor_algorithms_1994, Hi.2013, HS.2017}. The main obstacle in solving exact dynamics lies in the coupling of the equations governing the evolution of expectation values of the observables of interest, encompassing all possible $n$-body correlations.

The dynamics of non-interacting systems, as well as the so-called \emph{Gaussian dynamics} (i.e. free quantum field theories) are exceptions, as the former preserves product states, while in the latter case, any correlation is a function of the pairwise correlations, giving rise to the famous Martin-Schwinger hierarchy of $n$-body correlations \cite{bruus2004many}. For this reason, non-perturbative schemes like the Mean Field Theory (MFT), in its different flavors and variants \cite{Auerbach.1994, Ring.2005, MRC.10, BRCM.15}, offer (a family of) prescriptions for building approximate and analytically solvable dynamics and equilibrium states, by exploiting the features of these types of tractable systems. 


An extension of these non-perturbative methods was proposed by Balian et al. \cite{BAR.86}, based on Jaynes's Max-Ent principle \cite{Jaynes1, Jaynes2, Balian.1991}. 

The Max-Ent principle posits that given knowledge about the expectation values of a certain reduced set of \emph{relevant}/\emph{accesible} observables, the state of the system is the one with maximum (von Neumann's) entropy, consistent with the known expectation values. 
Each set of (linearly independent) relevant observables defines, therefore, a family of Max-Ent states, continuously parametrized by the expectation values. 
Then, the idea is to approximate the exact dynamics by a Max-Ent dynamics, i.e. a restricted dynamics over the family of the Max-Ent states.
In a similar way to the Nakajima-Zwanzig (NZ) projection technique \cite{heinz_theory_2002,Nakajima.58,Mori.65,IM.19arxiv},
the effective equations of motion are obtained via a linearized projection over the original ones.
However, unlike the NZ formulation, it does not necessarily rely on the division system/environment, making it more versatile. Also, 
unlike perturbative expansions, the accuracy of this approximation not only depends on the number of terms in the expansion but also on the choice of the relevant observables.
Nevertheless, the main challenge in this approach lies in the implementation of the constraint itself: the projection is expressed in terms of an orthogonal expansion with respect to the state-dependent Kubo-Mori-Bogoliubov (KMB) scalar product\cite{petz1993bogoliubov}. This scalar product depends on the spectral decomposition of the state that defines it, making its evaluation computationally very expensive. This
limits the applicability of the approximation to cases covered by the Time-Dependent Mean Field Theory (TDMFT) approaches, i.e. where separable or Gaussian states are assumed.

In this work, we delve into an alternative method for implementing the instantaneous projection, which enables the efficient solution of restricted dynamics for a wider range of sets of relevant observables. 
To accomplish this, we critically reexamine the key properties of the KMB scalar product and explore other computationally less demanding scalar products (and their induced geometries). 
Specifically, we focus on the quantum generalization of the \emph{covariance} scalar product \, \emph{covar}, which exhibits similar metric properties to the KMB scalar product, while, at the same time, being computationally less demanding. 

Furthermore, it is worth mentioning that both the KMB scalar product, as well as the solutions of the restricted dynamics, play a role in several branches of physics, including transport theory, linear response theory, the Kondo problem, non-commutative probability theory, and condensed matter physics \cite{naudts1975linear,petz1993bogoliubov,tanaka.06,scandi2019}, making the development of computable bounds and approximations a valuable tool in these areas.

The work is organized as follows. In the first section, we provide a brief review of the Max-Ent restricted dynamic formalism.
Then, in \cref{sec:computable}, we thoroughly reexamine the properties of the KMB and \, \emph{covar} scalar products. 
\cref{sec:test_ex} presents a detailed comparison between the exact dynamics and the approaches discussed. 
Finally, in  \cref{sec:discussion}, we present a general discussion of the results and perspectives. The Appendix contains proofs of the statements and mathematical details.

\section{Max-Ent Dynamics}

In this section, a review of the main concepts and results of the theory of Max-Ent states \cite{Jaynes1, Balian.1991} and Max-Ent restricted dynamics \cite{BAR.86} is presented. 

\subsection{Max-Ent Principle}\label{section_max_ent}

\paragraph{The Max-Ent principle in Quantum Mechanics}\smallbreak{}

Consider a quantum many-body system, with Hamiltonian ${\bf H}$
and algebra of observables ${\cal A}$ acting on a Hilbert space $\hilbert$, with space of states
\begin{equation}
  {\mathcal S}(\hilbert)=\{\rho\;|\;\rho
  \in {\cal B}(\hilbert),\;
  \rho\geq 0, \;\Tr\rho=1\},
\end{equation}
\noindent with ${\cal B}({\hilbert})$ the set of bounded operators acting on
$\hilbert$~\cite{Hall2013,nielsen_quantum_2000}.
The expectation value of an observable ${\bf O}\in{\cal A}$ is, then, given by
\begin{equation}
  \label{eq:av}
\langle {\bf O}\rangle_\rho=\Tr(\rho {\bf O}).  
\end{equation}
Conversely, if $B=\{{\bf Q}_{1},\ldots, {\bf Q}_{\dim(\hilbert)^2-1}\}$\footnote{This basis has
  dimension $\dim(\hilbert)^2-1$ since the identity operator $\mathbf{id}_{\hilbert}$ is fixed due to the
  constraint $\Tr\rho=1$. Notice however that for some infinity dimensional algebras (like the bosonic algebra), in order to satisfy the closeness condition, ${\bf id}\in{\cal A}$.
  } is a complete set of linearly independent operators s.t.\ ${\cal A}_{B} \equiv \Span(B) = {\cal A}$,
then knowledge about the system is \emph{complete}. Therefore, $\rho$ is completely determined
by the values of $\langle {\bf Q}_{\alpha}\rangle$~\cite{nielsen_quantum_2000}.
However, in many-body systems, the dimension of the algebra ${\cal A}$ grows geometrically with the number of components, making it unfeasible to access the expectation values of even a small fraction of the observables in ${\cal A}$.
On the other hand, by choosing $B=\{{\bf Q}_{1},\ldots,{\bf Q}_N\}$, with $N<\dim(\hilbert)^2-1$, as an independent set of \emph{accessible} observables, the information of their expectation values does not specify a \emph{single} density operator but a convex subset of ${\cal S}(\hilbert)$:
\begin{equation}
  \compatstates{B}{\mu_\alpha} = \left \{ \sigma \; |  \;\sigma \in {\cal S}(\hilbert),
         \Tr \sigma {\bf Q}_{\alpha}=\mu_\alpha, 
         {\bf Q}_{\alpha} \in B
\right\}
\end{equation}
\noindent with $\mu_\alpha=\langle {\bf Q}_\alpha\rangle$ the values of the known expectation values.  
With respect to the operators ${\bf Q} \in B$, $\compatstates{B}{\mu_{\alpha}}$ is an equivalence class, meaning that all of its elements are physically and statistically equivalent. 
However, this is not true for other operators. 
In particular, the dynamics of the state -- and of its expectation values -- depends on $[{\bf H},{\bf Q}]/{\bf i}$, which in general is not in ${\cal A}_B$. 

Subsequently, a crucial question arises: when estimating the expectation value of any other observable ${\cal Q} \notin B$, which state $\sigma$ would provide the fairest and unbiased choice for $\rho$? 
As shown in subsequent sections of this article, this question holds particular significance in the context of the evolution of expectation values. \smallbreak

One possible answer, and the one which will be explored in this article, lies in the Maximum Entropy (Max-Ent) principle~\cite{Jaynes1,canosa1990MaxEnt}. 
This principle states that the optimal choice is the one that maximizes the von Neumann entropy~\cite{nielsen_quantum_2000}
\begin{equation}\label{eq:vnEntropy}
    S(\rho)=-\Tr \rho \log \rho, 
\end{equation}
\noindent over $\compatstates{B}{\mu_\alpha}$.
In other words,
\begin{equation}
    \sigma^\star = \argmax_{\sigma \in \compatstates{B}{\mu_\alpha}} S(\sigma). 
\end{equation}
Note that, if ${\cal A}_B={\cal A}$ (namely, the set is equal to its closure), $\sigma^\star=\rho$ is the unique element in $\compatstates{B}{\mu_\alpha}$. On the other hand, for a generic basis $B$,  $\sigma^\star$ represents (in some sense) an even statistical mixture of the states in $\compatstates{B}{\mu_\alpha}$. 
Due to the convexity of this set, the convexity of the von Neumann entropy, and the linearity of the map between states and expectation values, it is easy to verify that ${\cal P}_B(\rho)=\sigma^*$ defines a smooth non-linear projection map from ${\cal S}(\hilbert)$ onto a manifold of Max-Ent states ${\cal M}_B$ associated to $B$ ~\cite{walet_thermal_1990,jost2017riemannian,HB.2017arxiv}, i.e. 

\begin{equation*}
    \begin{split}
        {\cal P}_B:{\cal S}(\hilbert)&\rightarrow {\cal M}_B,\\
        & \textnormal{s.t. } {\cal P}_B({\cal P}_B (\rho))={\cal P}_B( \rho).
    \end{split}
\end{equation*}

Moreover, one appealing property of the Max-Ent states, and of the Max-Ent manifolds as well, is given by the following proposition,

\begin{prop}\label{prop:Max-Ent_are_gibbs}
  Let $\sigma^*$ be the Max-Ent state associated with the observables in $B$. 
  \begin{equation}
    \label{eq:gibbsform}
    \sigma^\star = {e^{-{\bf K}}},\;\;{\bf K}\in {\cal A}_B \equiv \textnormal{span}(B),
  \end{equation}
\end{prop}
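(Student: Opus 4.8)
The plan is to treat this as a constrained optimization problem and solve it by the method of Lagrange multipliers. I would maximize the von Neumann entropy \eqref{eq:vnEntropy} over $\compatstates{B}{\mu_\alpha}$, enforcing the expectation-value constraints $\Tr(\sigma\mathbf{Q}_\alpha)=\mu_\alpha$ and the normalization $\Tr\sigma=1$ through multipliers $\lambda_\alpha$ and $\lambda_0$, respectively. Concretely, I would look for stationary points of the functional
\begin{equation*}
\mathcal{L}(\sigma)=-\Tr(\sigma\log\sigma)-\sum_\alpha\lambda_\alpha\big(\Tr(\sigma\mathbf{Q}_\alpha)-\mu_\alpha\big)-\lambda_0\big(\Tr\sigma-1\big).
\end{equation*}

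The central computation is the first variation of the entropy term. Using that for an analytic function $f$ one has $\delta\,\Tr f(\sigma)=\Tr\big(f'(\sigma)\,\delta\sigma\big)$ under the trace (which disposes of the noncommutativity between $\sigma$ and $\delta\sigma$), and that $f(x)=x\log x$ gives $f'(x)=\log x+1$, the stationarity condition $\delta\mathcal{L}=0$ for arbitrary Hermitian $\delta\sigma$ reads
\begin{equation*}
-\log\sigma-\mathbf{id}-\sum_\alpha\lambda_\alpha\mathbf{Q}_\alpha-\lambda_0\,\mathbf{id}=0.
\end{equation*}
Solving for $\log\sigma$ and exponentiating yields $\sigma^\star=e^{-\mathbf{K}}$ with $\mathbf{K}=(1+\lambda_0)\,\mathbf{id}+\sum_\alpha\lambda_\alpha\mathbf{Q}_\alpha$, which is manifestly a linear combination of the elements of $B$ together with the identity. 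Note that positivity $\sigma^\star\geq 0$ is automatic, since $\mathbf{K}$ is Hermitian and hence $e^{-\mathbf{K}}$ is positive definite; the inequality constraint never becomes active and requires no separate multiplier.

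Two points would still need care. First, the multiplier $\lambda_0$ (equivalently, the $\log Z$ term with $Z=\Tr e^{-\sum_\alpha\lambda_\alpha\mathbf{Q}_\alpha}$) contributes the identity component of $\mathbf{K}$; to land in $\mathcal{A}_B\equiv\textnormal{span}(B)$ exactly as stated, I would invoke the convention of the footnote that $\mathbf{id}\in\mathcal{A}$, i.e. that the identity is included in the relevant set so that the normalization is absorbed into $\mathbf{K}$. Second, and this is where I expect the only real subtlety, I would have to confirm that the stationary point is the global maximizer rather than merely a critical point. The hard part will be invoking the strict concavity of $S$ together with the convexity of $\compatstates{B}{\mu_\alpha}$ to guarantee both existence (the feasible set is compact in finite dimension and $S$ is continuous) and uniqueness of the maximizer, so that the first-order condition is both necessary and sufficient. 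Making the operator-calculus variation fully rigorous — differentiating the noncommutative map $\sigma\mapsto\Tr(\sigma\log\sigma)$ and controlling domain issues in the infinite-dimensional case — is the main technical obstacle, though in finite dimensions it reduces to the standard spectral identity quoted above.
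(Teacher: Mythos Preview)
Your proposal is correct and follows essentially the same Lagrange-multiplier route as the paper's proof in \cref{subsec:mfgibbs}: both set up the stationarity condition for $S(\sigma)-\sum_\alpha\lambda_\alpha(\Tr\sigma\mathbf{Q}_\alpha-\mu_\alpha)$, compute $\delta S=-\Tr\delta\sigma(\log\sigma+\mathbf{id})$, and read off $\sigma^\star=e^{-\mathbf{K}}$ with $\mathbf{K}\in\mathcal{A}_B$. The only cosmetic difference is that the paper derives the entropy variation via the spectral identity $\langle i|\delta\log\sigma|j\rangle=\frac{\log p_i-\log p_j}{p_i-p_j}\langle i|\delta\sigma|j\rangle$, whereas you invoke the trace identity $\delta\Tr f(\sigma)=\Tr(f'(\sigma)\delta\sigma)$ directly; your additional remarks on concavity and uniqueness go slightly beyond what the paper actually proves.
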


\noindent where the ${\bf K}$ operator is chosen s.t. ${\Tr \sigma^\star} = 1$.
Hence, Max-Ent states are (quantum) Gibbs states for a system with an effective Hamiltonian, ${\bf K}/\beta$, given as a linear combination of the operators $B = \{{\bf Q}_{\alpha}\}$ with real coefficients ~\cite{kubo_statistical-mechanical_1957,kubo_statistical-mechanical_1957,canosa1990MaxEnt,rossignoli_spin-projected_1996,HHW.67}. 
A rigorous proof of this statement is given in \cref{subsec:mfgibbs}.
By making use of these results,  ${\cal S}(\hilbert)$ (${\cal M}_B$) can be thought of as the image of the exponential map onto ${\cal A}$ (${\cal A}_B$), which is a real vector (sub)space. Namely, 

\begin{equation*}
\begin{split}
    \exp: {\cal A} \rightarrow {\cal S}(\hilbert), \\
    \exp: {\cal A}_B \rightarrow {\cal M}_B.
\end{split}
\end{equation*}

Moreover, the projection ${\cal P}_B$ naturally induces a (non-linear) smooth projection operator $\Pi_B:{\cal A}\rightarrow {\cal A}_B$, s.t. 

\begin{equation}\label{eq:compatconddiffbis}
\exp(-\Pi_B {\bf K})=  {\cal P}_{B}\left(e^{-{\bf K}}\right) .
\end{equation}

Notice that ${\cal P}_B$ can also be characterized in terms of the quantum relative entropy 
\begin{equation}\label{eq:RE}
S(\rho\|\sigma)=\Tr\left[\rho (\log \rho-\log \sigma)\right],
\end{equation}
as 
\begin{equation}\label{eq:REPDef}
{\cal P}_B(\rho) =\argmin_{\sigma \in {\cal M}_B} S(\rho\|\sigma).
\end{equation}

\noindent characterizing $\sigma$ as the less statistically distinguishable state from $\rho$, provided a fixed number of copies of the state are accessible, and for any observable \cite{Vedral.02}.

\smallbreak

\paragraph{Example.} Consider the state space of a single spin-$\frac{1}{2}$ system, which is the well-known Bloch sphere.
The full algebra of observables ${\cal A}$ is generated by ${B}={{\bf S}_{\mu=x,y,z}}$, representing the spin components along the Cartesian axes. If the set of accessible observables is restricted, for example, to only $B=\{{\bf S}_x, {\bf S}_y\}$, then $\compatstates{B}{\mu_{\alpha=x,y}}$ corresponds to the intersection of a straight line, parallel to the $z$-axis and containing $\rho$, with the Bloch sphere. This intersection yields the Max-Ent manifold as the intersection of the Bloch sphere with the $xy$-plane, as depicted in \cref{fig:Max-Ent1}.\smallbreak{}

\begin{figure}[htbp]
    \centering
    \includegraphics[width=\linewidth]{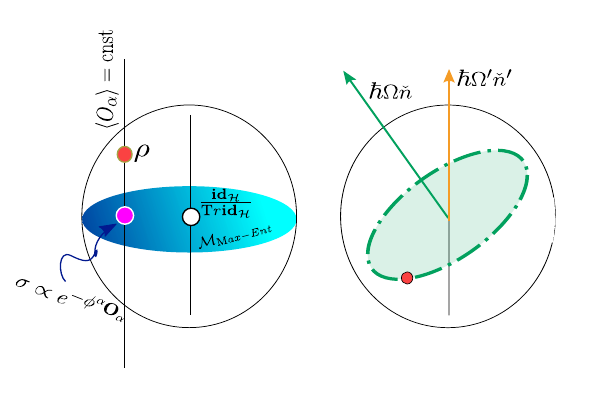}
    \caption{\label{fig:Max-Ent1}~Max-Ent dynamics in the Bloch's sphere. Left: Max-Ent construction. The Max-Ent manifold ${\cal M}_{\textnormal{Max-Ent}}$ spans all the states with defined $\langle{\bf Q}_{x,y}\rangle = \langle {\bf S}_{x,y}\rangle$ mean values and maximum entropy. The state $\sigma$ is the state with maximum entropy that shares these averages with $\rho$.
    Right: the ideas of exact and Max-Ent dynamics are contrasted, where the latter is an approximation of the former.
    }
\end{figure} 

\paragraph{Comment.} It is important to note that while $\compatstates{B}{\mu_\alpha}$ is a convex set, this is not necessarily true for ${\cal M}_{B}$ in general. 
In particular, if $S(\hilbert)$ represents the state space of a Spin-1 system and $B=\{{\bf S}_x,{\bf S}_y\}$, the states
$$
    \rho_{\mu=x,y}=\exp(-\ln(3){\bf S}_{\mu})/\Tr \exp(-\ln(3){\bf S}_{\mu}),
$$
belong to ${\cal M}_{B}$, while

$$
\rho=\frac{1}{2}\rho_x+\frac{1}{2}\rho_y\propto e^{-\lambda ({\bf S}_x+{\bf S}_y)-\kappa \{({\bf S}_x,{\bf S}_y)\}-\xi {\bf S}_z{\bf S}_z},
$$

\noindent with $\lambda\approx -0.526$, $\kappa\approx 0.137$, and $\xi\approx 0.125$ does not belong to ${\cal M}_{B}$, as it cannot be written in the form of \cref{eq:gibbsform}.\smallbreak

\subsection{Linear Projections and Geometry of  ${\cal M}_B$}\label{subsec:geometry}

Evaluating ${\cal P}_B$ is, however, a challenging optimization problem, primarily due to the high computational cost associated with the exact evaluation of $\rho$. 
Nonetheless, in certain cases, our interest lies in projecting states $\rho = \exp(-{\bf K})$ onto the neighborhood of a specific $\rho_0 =  \exp(-{\bf K}_0)\in {\cal M}_B$.
In such scenarios, it becomes reasonable to approximate $\Pi_{B}({\bf K})$ using a linear projector $\pi\equiv\pi_{B,\rho_0}$\cite{Amari2000,walet_thermal_1990,IM.19arxiv}, 

\begin{equation}\label{eq:compatconddiff}
    \pi_{B,\rho_0} \left(\Delta {\bf K}\right)\equiv  \left.\frac{\partial}{\partial \lambda}\Pi_B\left({\bf K}_0 + \lambda \Delta {\bf K}\right)\right|_{\lambda\rightarrow 0},
\end{equation}
\noindent for any $\Delta {\bf K}\in {\cal A}$. 
In order to explicitly construct $\pi_{B,\rho_0}$, observe that

\begin{equation}
  \label{eq:nlProj}
\langle {\bf O}\rangle_{{\cal P}(\rho)}=\langle {\bf O}\rangle_{\rho}, \quad \forall {\bf O} \in{\cal A}_B.
\end{equation}

Assuming that $\rho \approx {\cal P}(\rho) \rightarrow {\bf K}\approx \Pi({\bf K})$ (i.e., the exact state of the system lies close to the Max-Ent manifold), \cref{eq:nlProj} can be linearized  around ${\cal P}(\rho)$ using the following property:

\begin{prop}\label{prop:kubo1}
  Let $\rho_\lambda = \exp(-{\bf K}_0 + \lambda \Delta{\bf K})$ with ${\bf K}_0={\bf K}_0^\dagger\in {\cal A}$. 
  Then,
\begin{equation}\label{eq:KWBvar}
    \frac{\partial}{\partial \lambda} \Tr \rho_\lambda {\bf O}  
    = (\Delta{\bf K}^\dagger, {\bf O})_{\rho_\lambda}^{\textnormal{KMB}},
\end{equation}

\noindent with

\begin{equation}\label{eq:KWBdef1}
    ({\bf Q}_1,{\bf Q}_2)_{\sigma}^\textnormal{KMB}
        =
    \int_0^1  \Tr[\sigma^{1-\tau}{\bf Q}_1^\dagger \sigma^\tau {\bf Q}_2] d\tau.
\end{equation}

\noindent the Kubo-Mori-Bogoliubov (KMB) scalar product~\cite{petz1993bogoliubov} relative to $\sigma$.
\end{prop}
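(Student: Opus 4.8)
The plan is to recognize the left-hand side of \cref{eq:KWBvar} as a directional derivative of an operator exponential and to evaluate it with the Duhamel (Dyson) formula, after which it will coincide with \cref{eq:KWBdef1} essentially by inspection. Writing $A(\lambda) = -{\bf K}_0 + \lambda\,\Delta{\bf K}$, so that $\rho_\lambda = e^{A(\lambda)}$ and $\partial_\lambda A = \Delta{\bf K}$, the central step is to establish
\begin{equation}\label{eq:duhamel_plan}
    \frac{\partial}{\partial\lambda}\,e^{A(\lambda)} = \int_0^1 e^{(1-\tau)A(\lambda)}\,\Delta{\bf K}\,e^{\tau A(\lambda)}\,d\tau .
\end{equation}
I would prove \cref{eq:duhamel_plan} by introducing the auxiliary operator $G(s) = e^{-sA}\,(\partial_\lambda e^{sA})$ and differentiating it with respect to $s$. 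Using $\partial_s e^{sA} = A\,e^{sA} = e^{sA}\,A$ together with the fact that $A$ commutes with $e^{-sA}$, the two terms proportional to $A$ cancel, leaving $G'(s) = e^{-sA}\,\Delta{\bf K}\,e^{sA}$. Since $G(0)=0$, integrating from $0$ to $1$ gives $e^{-A}\,\partial_\lambda e^{A} = \int_0^1 e^{-sA}\,\Delta{\bf K}\,e^{sA}\,ds$, and multiplying on the left by $e^{A}$ yields \cref{eq:duhamel_plan}. An equivalent route is to expand $e^{A(\lambda)}$ as a norm-convergent power series and differentiate term by term, summing over every position at which the single factor $\Delta{\bf K}$ can be inserted.

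Taking the trace of \cref{eq:duhamel_plan} against ${\bf O}$ and identifying $e^{\tau A(\lambda)} = \rho_\lambda^{\tau}$ then gives
\begin{equation}\label{eq:trace_plan}
    \frac{\partial}{\partial\lambda}\,\Tr[\rho_\lambda\,{\bf O}] = \int_0^1 \Tr[\rho_\lambda^{1-\tau}\,\Delta{\bf K}\,\rho_\lambda^{\tau}\,{\bf O}]\,d\tau .
\end{equation}
Comparing this with \cref{eq:KWBdef1} evaluated at $\sigma = \rho_\lambda$, ${\bf Q}_1 = \Delta{\bf K}^\dagger$ (so that ${\bf Q}_1^\dagger = \Delta{\bf K}$) and ${\bf Q}_2 = {\bf O}$, the right-hand side of \cref{eq:trace_plan} is exactly $(\Delta{\bf K}^\dagger,{\bf O})_{\rho_\lambda}^{\textnormal{KMB}}$, which closes the argument.

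The main obstacle is the careful justification of \cref{eq:duhamel_plan}: because $A(\lambda)$ and $\Delta{\bf K}$ do not commute in general, one cannot reduce $\partial_\lambda e^{A}$ to $\Delta{\bf K}\,e^{A}$, and the entire content of the proposition resides in the ordered integral that replaces this naive guess; the cancellation in $G'(s)$ is what makes the computation work. Two secondary points deserve a remark. First, the fractional powers $\rho_\lambda^{\tau}$ should be read as $e^{\tau A(\lambda)}$, which agrees with the usual positive power whenever the exponent is Hermitian (for instance at $\lambda=0$, where $A(0)=-{\bf K}_0$), so that \cref{eq:trace_plan} is consistent with \cref{eq:KWBdef1} as written for a genuine state. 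Second, \cref{eq:KWBvar} is stated for the unnormalized exponential $\rho_\lambda = \exp(-{\bf K}_0 + \lambda\,\Delta{\bf K})$; had one instead normalized $\rho_\lambda$ to unit trace, differentiating the partition function would contribute an extra term $\propto \langle{\bf O}\rangle_{\rho_\lambda}\,\langle\Delta{\bf K}\rangle_{\rho_\lambda}$, which is absent in the present formulation.
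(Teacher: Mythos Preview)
Your proposal is correct and follows essentially the same route as the paper: both arguments rest on the Duhamel (Dyson) identity $\partial_\lambda e^{A(\lambda)}=\int_0^1 e^{(1-\tau)A}\,\Delta{\bf K}\,e^{\tau A}\,d\tau$, after which tracing against ${\bf O}$ reproduces the KMB pairing. The only difference is that the paper simply asserts the first-order expansion $\rho_{\lambda+\delta\lambda}=\rho_\lambda+\delta\lambda\int_0^1 \rho_\lambda^{1-\tau}\Delta{\bf K}\,\rho_\lambda^{\tau}\,d\tau+{\cal O}(\delta\lambda^2)$, whereas you supply an explicit derivation via the auxiliary function $G(s)=e^{-sA}\partial_\lambda e^{sA}$; your extra remarks on the interpretation of $\rho_\lambda^\tau$ and on the absence of a normalization term are also accurate and not discussed in the paper's proof.
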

The proof of this proposition can be found in  \cite{BAR.86,petz1993bogoliubov} and is included in \cref{app:proof_kubosp} for completeness. \smallbreak
\cref{prop:kubo1} allows for the characterization of $\pi_{B,\sigma}$ as an orthogonal projector wrt. the KMB product.

\begin{prop}\label{prop:ortproj}
\cref{eq:compatconddiff} is satisfied if $\pi_{B,\rho_0}$ is an \emph{orthogonal projection} with respect to $(\cdot,\cdot)^\textnormal{KMB}_{\rho_0}$, meaning that for all states $\rho_0 \in {\cal M}_B$:

\begin{equation}\label{eq:ortprojdef}
    \left({\bf Q},\pi_{B,\rho_0} {\bf O}\right)_{\rho_0}^{\textnormal{KMB}}=({\bf Q},{\bf O})_{\rho_0}^{\textnormal{KMB}}\; \quad \begin{array}{cc}
         \forall{\bf Q} \in {\cal A}_B \\
         {\bf O}\in {\cal A}. \\
    \end{array}
\end{equation} 
\end{prop}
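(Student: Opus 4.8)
The plan is to assume that $\pi_{B,\rho_0}$ is the orthogonal projector characterized by \cref{eq:ortprojdef} and to show that it then coincides with the tangent map of $\Pi_B$ at ${\bf K}_0$, i.e.\ that it satisfies \cref{eq:compatconddiff}. First I would fix ${\bf K}_0\in{\cal A}_B$, so that $\rho_0=e^{-{\bf K}_0}\in{\cal M}_B$ and, because the projection leaves Max-Ent states invariant (idempotency of ${\cal P}_B$ together with \cref{prop:Max-Ent_are_gibbs}), $\Pi_B({\bf K}_0)={\bf K}_0$. For a Hermitian variation $\Delta{\bf K}$ I would introduce the two one-parameter families $\rho_\lambda=e^{-({\bf K}_0+\lambda\Delta{\bf K})}$ and $\sigma_\lambda={\cal P}_B(\rho_\lambda)=e^{-\Pi_B({\bf K}_0+\lambda\Delta{\bf K})}$, which agree at $\lambda=0$. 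The defining property \cref{eq:nlProj} of the non-linear projection states that $\Tr(\sigma_\lambda{\bf Q})=\Tr(\rho_\lambda{\bf Q})$ for every ${\bf Q}\in{\cal A}_B$ and every $\lambda$; differentiating this identity at $\lambda=0$ is the engine of the proof.

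To evaluate the two derivatives I would invoke \cref{prop:kubo1}. On the right-hand side the exponent is already linear in $\lambda$, so \cref{prop:kubo1} applies directly and gives $\partial_\lambda\Tr(\rho_\lambda{\bf Q})|_{0}=-(\Delta{\bf K},{\bf Q})^{\textnormal{KMB}}_{\rho_0}$. On the left-hand side the exponent $\Pi_B({\bf K}_0+\lambda\Delta{\bf K})$ depends non-linearly on $\lambda$; here I would use the smoothness of $\Pi_B$ to write the first-order expansion $\Pi_B({\bf K}_0+\lambda\Delta{\bf K})={\bf K}_0+\lambda D+O(\lambda^2)$ with $D\equiv\partial_\lambda\Pi_B({\bf K}_0+\lambda\Delta{\bf K})|_{0}\in{\cal A}_B$. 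Since the $O(\lambda^2)$ remainder does not contribute to the first derivative, $\Tr(\sigma_\lambda{\bf Q})$ and $\Tr(e^{-({\bf K}_0+\lambda D)}{\bf Q})$ have the same derivative at $\lambda=0$, and \cref{prop:kubo1} then yields $\partial_\lambda\Tr(\sigma_\lambda{\bf Q})|_{0}=-(D,{\bf Q})^{\textnormal{KMB}}_{\rho_0}$. Equating the two derivatives and using the conjugate symmetry of the KMB product gives
\begin{equation}
    ({\bf Q},D)^{\textnormal{KMB}}_{\rho_0}=({\bf Q},\Delta{\bf K})^{\textnormal{KMB}}_{\rho_0},\qquad\forall\,{\bf Q}\in{\cal A}_B .
\end{equation}

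Finally I would compare this with the hypothesis \cref{eq:ortprojdef}, which asserts $({\bf Q},\pi_{B,\rho_0}\Delta{\bf K})^{\textnormal{KMB}}_{\rho_0}=({\bf Q},\Delta{\bf K})^{\textnormal{KMB}}_{\rho_0}$ for all ${\bf Q}\in{\cal A}_B$. Subtracting the two relations shows that $D-\pi_{B,\rho_0}\Delta{\bf K}\in{\cal A}_B$ is KMB-orthogonal to all of ${\cal A}_B$; since the KMB scalar product is positive definite, its restriction to ${\cal A}_B$ is non-degenerate, and choosing ${\bf Q}=D-\pi_{B,\rho_0}\Delta{\bf K}$ forces $D=\pi_{B,\rho_0}\Delta{\bf K}$. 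As $\Delta{\bf K}$ is arbitrary (the general, non-Hermitian case following by linearity from the Hermitian one), this is precisely \cref{eq:compatconddiff}. I expect the main obstacle to be this last non-degeneracy step together with the first-order control of the non-linear exponent: one must ensure that only $D\in{\cal A}_B$ enters the derivative of $\sigma_\lambda$, and that $D$ and $\pi_{B,\rho_0}\Delta{\bf K}$, both living in ${\cal A}_B$ and satisfying the same orthogonality relations, are genuinely pinned down to be equal rather than merely being two a priori distinct solutions.
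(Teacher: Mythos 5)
Your proposal is correct, and its computational engine is identical to the paper's: both differentiate the moment-matching condition \cref{eq:nlProj}, $\Tr({\bf Q}\rho_\lambda)=\Tr({\bf Q}\,{\cal P}_B\rho_\lambda)$, at $\lambda=0$ and evaluate both sides with \cref{prop:kubo1}, arriving at $({\bf Q},\Delta{\bf K})^{\textnormal{KMB}}_{\rho_0}=({\bf Q},D)^{\textnormal{KMB}}_{\rho_0}$ for all ${\bf Q}\in{\cal A}_B$, where $D$ is the tangent map of $\Pi_B$. The difference is the logical direction. The paper takes $\pi_{B,\rho_0}$ to be the tangent map (i.e.\ defined by \cref{eq:compatconddiff}) and concludes that it satisfies the orthogonality relation \cref{eq:ortprojdef}; strictly speaking, that is the converse of the proposition as stated, and it yields the stated implication only via the unstated but standard uniqueness of the orthogonal projector onto ${\cal A}_B$. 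You instead assume \cref{eq:ortprojdef}, derive the same identity, and close the argument with an explicit non-degeneracy step: $D-\pi_{B,\rho_0}\Delta{\bf K}$ lies in ${\cal A}_B$ and is KMB-orthogonal to all of ${\cal A}_B$, hence vanishes by positive-definiteness. This makes your write-up the more faithful proof of the implication as literally worded, at the cost of one extra (easy) step; the paper's version is shorter because defining $\pi_{B,\rho_0}$ as the derivative spares it the uniqueness argument. Two small points you handled correctly but should keep explicit: with your convention $\rho_\lambda=e^{-({\bf K}_0+\lambda\Delta{\bf K})}$, \cref{prop:kubo1} produces derivatives with an overall minus sign on both sides, which cancels; and, as in the paper's own proof, $\rho_\lambda$ is not normalized for $\lambda\neq 0$, so \cref{eq:nlProj} is being applied in its natural extension to unnormalized exponentials --- the same harmless liberty the paper takes.
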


A proof of this proposition can be found in \cref{sec:proofOrtProj}. \smallbreak

\cref{prop:ortproj} is very important for several reasons. 
The first one is practical because it allows for the explicit computation of the projection, $\pi_{B,\sigma} \,{\bf K}$, in terms of operators ${\bf Q}_\alpha\in B$ as a Bessel-Fourier expansion

\begin{equation}\label{eq:piasbesselfourier}
\pi_{B,\sigma}\,{\bf K}=\sum_{\alpha\beta} [({\cal G}_{B,\sigma}^{\textnormal{KMB}})^{-1}]^{\alpha\beta} ({\bf Q}_\alpha,{\bf K})^\textnormal{KMB}_{\sigma}{\bf Q}_\beta,
\end{equation}

with 

\begin{equation}\label{eq:coeffsGramm}
    [{\cal G}_{B,\sigma}^\textnormal{KMB}]_{\alpha\beta}=({\bf Q}_\alpha, {\bf Q}_\beta)^\textnormal{KMB}_{\sigma}, 
\end{equation}

\noindent the \emph{Gram's matrix} of the basis of accessible observables w.r.t the KMB scalar product.

On the other hand, \cref{prop:ortproj} provides a way to reformulate \cref{eq:compatconddiff} in geometrical terms, yielding very fruitful results in the way of bounds, approximations, and other metric properties.
For example, we notice that \cref{prop:ortproj} implies that

\begin{equation}\label{eq:linprojMinCondition}
    \pi_{B,\rho}({\bf K}) = \argmin_{{\bf K}'\in {\bf B}}\|{\bf K}'-{\bf K}\|^{ \rm KMB}_{\rho}, \,\, \rho = \exp(-{\bf K}),
\end{equation}

\noindent  with 

\begin{equation}\label{eq:inducedKMBmetric}
    \|{\bf A}\|_{\rho_0}^{\rm KMB}=\sqrt{({\bf A},{\bf A})_{\rho_0}^{\rm KMB}},
\end{equation}

\noindent the KMB \emph{induced distance} in the neighborhood of $\rho_0$. This metric is closely related to the relative entropy \cref{eq:RE} between states in the neighborhood of $\rho_0$, which it bounds, through the exponential map parametrization (see \cref{prop:kmbnorm_and_re} in the Appendix for further discussion on this subject).

\subsection{Projected Dynamics and Restricted Schr\"odinger Dynamics }\label{sec:projected_and_constrained_dynamics}
Until now, we have considered the Max-Ent projection (and its linearization) for an instantaneous state.
Let's consider now a system whose state was initially described by $\rho_0 = \rho(0)\in{\cal M}_B$ that undergoes a closed evolution governed by the Schrödinger equation:
\begin{equation} \label{eq:Schrodinger}
    {\bf i} \hbar \frac{d\rho}{dt} = [{\bf H}, \rho],
\end{equation}

\noindent where ${\bf H} \in {\cal A}$ represents the system's Hamiltonian.
For the present developments, it is convenient to work with the dynamics of ${\bf K}(t)=-\log(\rho(t))$, through the following

\begin{figure}    
\centering
\includegraphics[width=.9\linewidth]{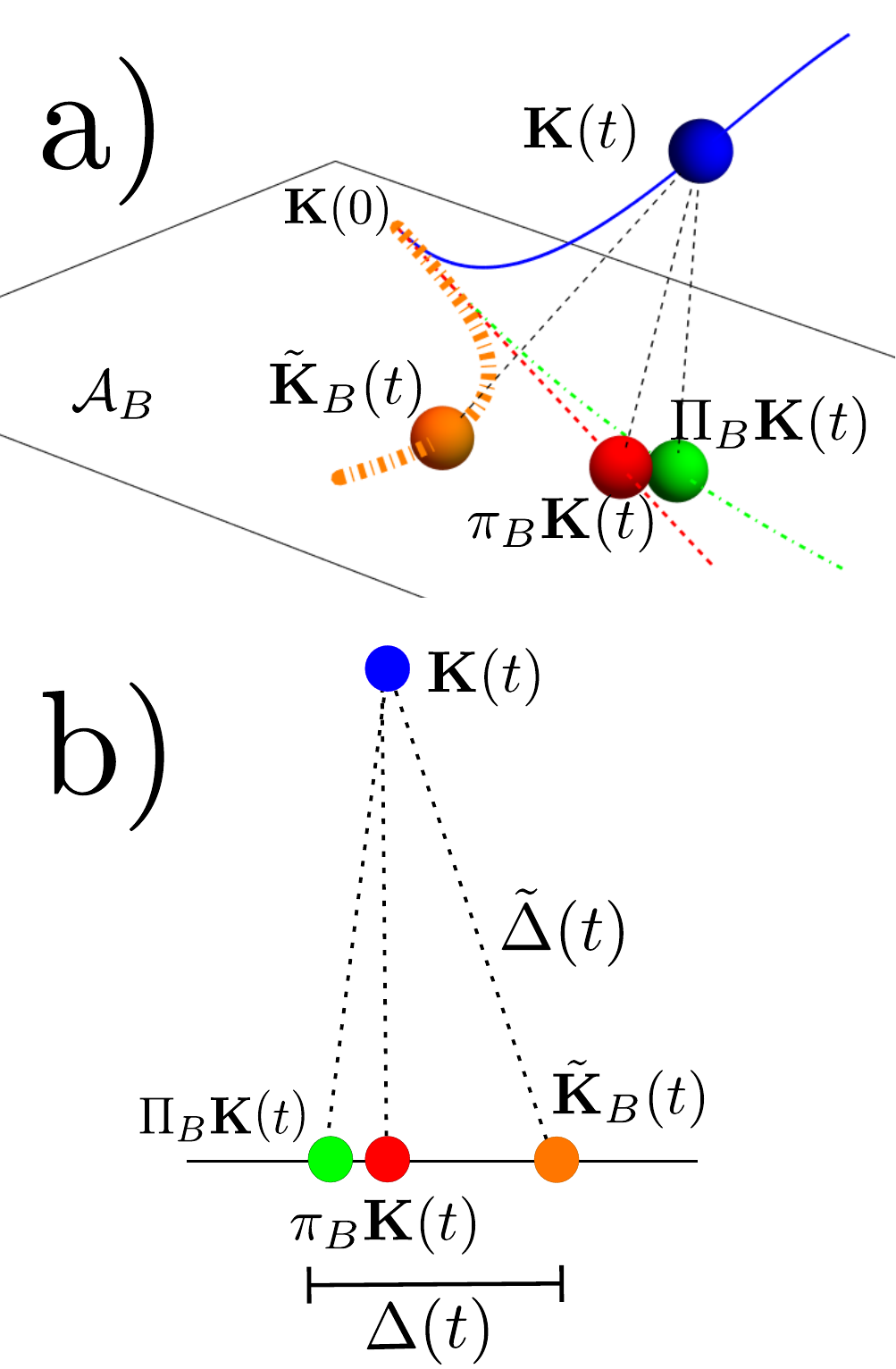}
\caption{\label{fig:trajectories}(Color online) Different evolution schemes. a) The solid curve (blue online) represents the trajectory of ${\bf K}(t)$ following the free Schr\"odinger evolution. 
The dot-dashed line (green online) curve and the dashed line (red online) represent the Max-Ent projection of the free evolution $\Pi_B {\bf K}(t)$  and its linearization $\pi_B {\bf K}(t)$, respectively. 
The dotted line (orange online) represents the restricted evolution $\tilde{\bf K}_B(t)$ \cref{eq:restricted}.
b) Relation among the distances $\Delta(t)$ and $\tilde\Delta(t)$ \cref{eq:deftildeDelta,eq:defDelta} 
between the instantaneous ${\bf K}(t)$, $\Pi_B{\bf K}(t)$ and $\tilde{\bf K}_B(t)$ and 
its different approximations. In the scheme, intrinsic KMB geometry around 
$\pi_{B,\rho(t)} {\bf K}(t)$ is identified with the Euclidean one.  
Note that the different states do not lie in the same trajectory. 
}
\end{figure}

\begin{lem}\label{lem:k_scr_dyn}
Let  ${\bf K}(t)=-\log(\rho(t))$, with $\rho(t)$ a solution of \cref{eq:Schrodinger} for a certain Hamiltonian ${\bf H}$. Then,

\begin{equation} \label{eq:SchrodingerK}
    {\bf i} \hbar \frac{d{\bf K}}{dt} = [{\bf H}, {\bf K}].
\end{equation}
\end{lem}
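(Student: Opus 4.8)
The plan is to integrate the Schr\"odinger equation \cref{eq:Schrodinger} explicitly and then exploit the fact that $-\log$ is a spectral (functional-calculus) function, hence commutes with unitary conjugation. First I would write the solution of \cref{eq:Schrodinger} in its standard integrated form for a time-independent Hamiltonian, $\rho(t) = {\bf U}(t)\rho(0){\bf U}(t)^\dagger$ with ${\bf U}(t)=\exp(-{\bf i}{\bf H}t/\hbar)$. Because ${\bf U}(t)$ is unitary, it merely rotates the eigenbasis of $\rho$ while leaving its spectrum invariant; in particular $\rho(t)$ stays strictly positive for all $t$ whenever $\rho(0)$ is, so that ${\bf K}(t)=-\log\rho(t)$ remains well defined. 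This positivity is precisely what is guaranteed for the Gibbs/Max-Ent states of interest (\cref{prop:Max-Ent_are_gibbs}).

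The key step is the conjugation identity ${\bf K}(t) = {\bf U}(t){\bf K}(0){\bf U}(t)^\dagger$. To establish it, I would take the spectral decomposition $\rho(0)=\sum_k p_k\ket{k}\bra{k}$, so that $\rho(t)=\sum_k p_k\,({\bf U}(t)\ket{k})(\bra{k}{\bf U}(t)^\dagger)$ exhibits the eigenprojectors of $\rho(t)$ with unchanged eigenvalues $p_k$. Applying $-\log$ spectrally then gives $-\log\rho(t)=\sum_k(-\log p_k)\,({\bf U}(t)\ket{k})(\bra{k}{\bf U}(t)^\dagger)={\bf U}(t)\bigl(-\log\rho(0)\bigr){\bf U}(t)^\dagger$, i.e.\ the claimed formula. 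Stated abstractly, for any $f$ defined on the spectrum one has $f({\bf U}\rho{\bf U}^\dagger)={\bf U}f(\rho){\bf U}^\dagger$, which we invoke with $f=-\log$.

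Finally I would differentiate the conjugation formula in $t$. Using $\dot{\bf U}(t)=-\tfrac{{\bf i}}{\hbar}{\bf H}{\bf U}(t)$ and $\dot{\bf U}(t)^\dagger=\tfrac{{\bf i}}{\hbar}{\bf U}(t)^\dagger{\bf H}$, the product rule yields
\begin{equation*}
\frac{d{\bf K}}{dt}=-\frac{{\bf i}}{\hbar}{\bf H}{\bf U}{\bf K}(0){\bf U}^\dagger+\frac{{\bf i}}{\hbar}{\bf U}{\bf K}(0){\bf U}^\dagger{\bf H}=-\frac{{\bf i}}{\hbar}\left[{\bf H},{\bf K}(t)\right],
\end{equation*}
which is exactly \cref{eq:SchrodingerK} after multiplying through by ${\bf i}\hbar$.

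The only genuine obstacle is the logical one of justifying that the functional calculus commutes with unitary conjugation and that $\log\rho(t)$ stays differentiable along the trajectory; both reduce to the spectrum of $\rho(t)$ being $t$-independent and bounded away from $0$, which the unitarity of ${\bf U}(t)$ guarantees. A self-contained alternative that avoids the explicit propagator would differentiate $\rho=e^{-{\bf K}}$ directly, but this forces one to handle the non-commutativity of ${\bf K}$ and $d{\bf K}/dt$ through the KMB-type derivative of \cref{prop:kubo1}, which is considerably more cumbersome; the conjugation argument sidesteps this entirely.
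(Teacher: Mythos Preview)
Your proof is correct and follows essentially the same route as the paper: write $\rho(t)$ as a unitary conjugation of $\rho(0)$, use that functional calculus commutes with unitary conjugation to get ${\bf K}(t)={\bf U}(t){\bf K}(0){\bf U}(t)^\dagger$, and differentiate. The only cosmetic difference is that the paper works directly with the defining equation ${\bf i}\hbar\,\dot{\bf U}={\bf H}{\bf U}$ (so the argument covers time-dependent ${\bf H}$ without writing the explicit exponential), whereas you specialize to ${\bf U}(t)=\exp(-{\bf i}{\bf H}t/\hbar)$; since your differentiation step only uses $\dot{\bf U}=-\tfrac{{\bf i}}{\hbar}{\bf H}{\bf U}$, your argument already extends verbatim to the general case.
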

The proof of this Lemma is shown in \cref{sec:proofSchK}. 

Given that the accessible observables are limited to those in ${\cal A}_{B} = \Span(B)$, it is meaningful to examine the evolution of the projection:

\begin{equation}\label{eq:projecteddynamicsbis}
    \Pi_B ({\bf K}(t))= - \log( {\cal P}_B (\rho(t)) ),
\end{equation}

\noindent which offers a more concise representation of the state with respect to the accessible observables. 
Evaluating $\Pi_B ({\bf K}(t))$ is, however, problematic since it necessitates solving the full Schr\"odinger equation, \cref{eq:Schrodinger}, and subsequently computing the projection itself.
Instead, in the neighborhood of the Max-Ent manifold ${\cal M}_B$, or equivalently of ${\cal A}_B$, the linearization of the Max-Ent projection $\Pi_B$ yields an orthogonal projection w.r.t. the KMB geometry $\pi_B$, as shown in the preceding sections. 
Thus, it makes sense to study the following \underline{\textit{(linear) projected evolution}}: 

\begin{equation}\label{eq:projecteddynamics}
    \pi_{B,\rho(t)} {\bf K}(t) = - \pi_{B,\rho(t)} \log(\rho(t))\,.
\end{equation}

\noindent Note that evaluation of \cref{eq:projecteddynamics} still necessitates solving the full Schr\"odinger equation and computing a (now-linear) projection. 
For many-body systems, solving the Schr\"odinger equation is not possible,
undermining the feasibility of employing a projection approximation.
Nevertheless, by assuming that $\rho(t)$ evolves in the neighborhood of ${\cal M}_B$, $\pi_B {\bf K}(t)$ can be approximated by a \underline{\emph{restricted} dynamics} ${\tilde{\bf K}_B}(t)$.
If $\tilde\sigma(t) = \exp \big(-\tilde{{\bf K}}_B(t) \big)$, then 

\begin{equation}\label{eq:restricted}
    {\bf i} \hbar \frac{d \tilde{\bf K}_B}{dt} = \pi_{B,\tilde\sigma(t)}\big([{\bf H}, \tilde{\bf K}_B]\big),
\end{equation}
represents a Schr\"odinger evolution restricted to the Max-Ent manifold ${\cal M}_B$ \cite{hackl_etal, Amari2000,walet_thermal_1990,IM.19arxiv}.

By definition, \cref{eq:restricted} is a closed evolution on ${\cal A}_B$, since $\pi_B$ acts trivially on it.
Moreover, \cref{eq:restricted} is a non-linear differential equation (since the projection itself is calculated w.r.t. the KMB inner product) but local in time and has a formal solution for ${\tilde{\bf K}_B}(t)$ (more on this below). 
\smallbreak
The relation between these operators is depicted in panel a) of \cref{fig:trajectories}. 
As the system evolves, the Hamiltonian evolution pulls ${\bf K}(t)$ out of the relevant subspace, ${\cal A}_B$. 
The Max-Ent projection $\Pi_B {\bf K}(t)$ follows a trajectory, over ${\cal A}_B$, of (minus the logarithm of) Max-Ent states, sharing the same instantaneous expectation values with the free evolution.
The linearized projection $\pi_{B,\sigma(t)}{\bf K}(t)$ provides an approximation for $\Pi_B{\bf K}(t)$, valid provided ${\bf K}(t)$ remains close to ${\cal A}_B$.

As far as $\tilde{\bf K}_B$ is a good approximation to ${\bf K}$, then $\tilde{\bf K}_B(t)\approx \pi_{B,\rho(t)}{{\bf K}}(t) \approx \Pi_B{{\bf K}}(t)$. 
As shown below, this condition can be achieved by expanding the relevant set of observables through a judicious choice of operators (see \cref{sec:apphierarch}), for short enough times. 
For longer times, if ${\bf K}(t)$ stays close to ${\cal A}_B$, $\Pi_B{\bf K}(t)\approx\pi_{B,\rho(t)}{\bf{K}}(t)\approx {\bf K}(t)$, but $\tilde{\bf K}_B(t)$, due to the accumulated differences in the derivatives, eventually moves away from ${\bf K}$. 
However, as detailed in \cref{prop:entropy_and_normalization_preserving}, given a sensible choice of $B$, the restricted evolution $\tilde{\bf K}_B$, like with $\Pi_B{\bf K}$, conserves both the normalization and the relevant constants of motion, even if ${\bf K}(t)$ moves away from ${\cal A}_B$. 
Also, even if the instantaneous states diverge, the qualitative behavior of the orbits can remain similar. 

\paragraph{Example}

To illustrate how this approach works, let's revisit the dynamics in the Bloch sphere. Consider $B=\{{\bf S}_x,{\bf S}_y\}$, and let the Hamiltonian have the form ${\bf H}= \check{n}\cdot \vec{\bf S}=\Omega {\bf S}_z + \omega {\bf S}_x$ (see panel \cref{fig:Max-Ent1} b).
The exact dynamics describe circular trajectories around $\check{n}$ with an angular frequency of $\sqrt{\Omega^2+\omega^2}$. In this specific case, the natural projection over ${\cal M}_B$ (\cref{eq:nlProj}) coincides with the Euclidean projection onto the $x$-$y$ plane.
If $\omega=0$, the projection of the exact trajectory onto ${\cal M}_B$ coincides with the trajectory of the restricted dynamics. 
However, if $\omega\neq 0$, the restricted dynamics would still be a circular trajectory with an angular frequency of $\Omega$, while the projection of the exact dynamics would result in an elliptical trajectory with an angular frequency of $\sqrt{\Omega^2+\omega^2}$. When $|\omega/\Omega|\ll 1$, the restricted dynamics would closely approximate the projection of the exact dynamics. 
On the other hand, if we would choose the basis $B'=\{ \log \rho(0), -{\bf i}[{\bf H}, \log \rho(0)] \}$ instead of $B$, the dynamics would be always exact, despite $B'$ being non-complete, since ${\cal A}_{B'}$ is closed under the action of $[{\bf H},\cdot]/{\bf i} \hbar$). \smallbreak

\subsection{Explicit computation of the restricted dynamics}\label{sec:restricteddynamics}

Using the orthogonal expansion \cref{eq:piasbesselfourier} regarding a \emph{fixed} basis $B$, \cref{eq:restricted} can be expressed as a set of differential equations for the expansion coefficients $\phi_\mu(t)$ (where $\tilde{\bf K}_B (t)=\sum_\mu \phi_\mu(t){\bf Q}_\mu$):

\begin{equation}\label{eq:coeffs}
    \sum_\beta [{\cal G}_{B,\sigma(t)}^\textnormal{KMB}]_{\alpha\beta}\frac{d \phi_{\beta}}{dt}=\sum_{\beta}{\cal H}_{\alpha \beta}\phi_{\beta}(t),\
\end{equation}

\noindent with 
\begin{equation}\label{eq:coeffsH}
{\cal H}_{\alpha \beta}\equiv[{\cal H}^\textnormal{KMB}_{\sigma(t)}]_{\alpha\beta}=\frac{1}{{\bf i}\hbar}({\bf Q}_\alpha,[{\bf H}, {\bf Q}_\beta])^\textnormal{KMB}_{\sigma(t)}, 
\end{equation}
representing a real matrix that governs the dynamics of the coefficients. It is important to note that both ${\cal H}_{\alpha\beta}$ and ${\cal G}_{\alpha\beta}$ are non-linear functions of the instantaneous state ${\bf K}$, as $(\cdot,\cdot)_B^\textnormal{KMB}$ depends on $\exp\big(-\tilde{\bf K}_B(t)\big)$. Consequently, \cref{eq:coeffs} becomes a set of \emph{non-linear} coupled equations.
\smallbreak

\paragraph{Convergency}  
In the previous analysis, it was assumed that $\rho(t)\approx {\cal P}(\rho(t))$, in order to approximate the non-linear projector $\Pi_B$ by its linear approximation $\pi_{B,\sigma}$, and the projected dynamics -- \cref{eq:projecteddynamics} -- by the restricted dynamics -- \cref{eq:restricted}. Let's discuss now more carefully how these conditions are quantified. 

$\Pi_B$ and $\pi_{B,{\sigma}}$ are defined in terms of the minimization of two functionals, the relative entropy and the KMB distance, respectively. 
According to  \cref{prop:kmbnorm_and_re}, as long as the second-order expansion is valid, both quantities are monotones of each other, in a way that
$\|\Pi_B{\bf K}-{\bf K}\|_{\rho(t)}$ and $\|\pi_{B,\rho(t)}{\bf K}-{\bf K}\|_{\rho(t)}$ are equal up to a higher order in $\|\pi_{B,\rho(t)}{\bf K}-{\bf K}\|_{\rho(t)}$. On the other hand, $\tilde{\bf K}_B(t)$ is a solution of \cref{eq:restricted} s.t. $\tilde{\bf K}_B(0)=\Pi_B {\bf K}(0)=\pi_{B,\sigma(0)} {\bf K}(0)$, so for short times, 
$\tilde{\bf K}_B(t)\approx \Pi_B {\bf K}(t)$. It is convenient then to introduce 

\begin{equation}\label{eq:defDeltaK}
  \Delta{\bf K}(t)={\bf K}(t)-\tilde{\bf K}_B(t),
\end{equation}

\noindent as the difference between the free and the restricted evolutions. The KMB distance 

\begin{equation}\label{eq:deftildeDelta}
\tilde\Delta(t)=\|\Delta{\bf K}(t)\|_{\sigma(t)}^{\rm KMB},
\end{equation}

\noindent measures the effect of this difference in the estimation of expectation values and the distinguishability of the associated states. 
On the other hand, if we focus just on the relevant observables, what we look for is to approximate $\Pi_B{\bf K}(t)$, in a way that the figure of merit is

\begin{equation}\label{eq:defDelta}
\Delta(t)=\|\Pi_B\Delta{\bf K}(t)\|_{\sigma(t)}^{\rm KMB}.
\end{equation}

The relation between these quantities is depicted in the panel b) of \cref{fig:trajectories}.
Since we do not have direct access to ${\bf K}(t)$, we need an expression of $\Delta{\bf K}$ as a functional of $\tilde{\bf K}(t)$. 
From the results in \cref{app:sccriteria}, it follows that these quantities can be bounded, during the simulation, without an important overhead by

\begin{equation}\label{eq:error_bound}
\tilde\Delta(t)<\int_0^t \|[{\bf H},\tilde{\bf K}_B(t')]\|_{\sigma(t)}^{\rm KMB}dt'.
\end{equation}

In general, along the evolution, the system develops correlations not contained in $B$.
For example, in typical interacting many-body systems, an initially uncorrelated state develops ${\cal O}(t^n)$ non-trivial $n$-body correlations. 
Then, if $B$ contains just local observables, correlations are neglected in the evolution of $\tilde{\bf K}_B(t)$, while it does affect the dynamics of $\rho(t)$. 

Still, if these correlations do not affect the dynamics of the relevant expectation values in an appreciable manner, then $\exp\big(-\tilde{\bf K}_B(t)\big)$ provides a good approximation to ${\cal P}(\rho(t))$, even if it does not approximate correctly $\rho(t)$. 
On the other hand, if some correlations do heavily affect the dynamics of the relevant variables, those correlations can be seen as the actually \emph{relevant} observables, which can be inferred by looking at the dynamics of other observables.

Therefore, by extending the basis $B$, including these new relevant observables, it is possible to make the projected and restricted dynamics closer.   
This statement can be made mathematically precise by considering a sequence of \emph{Hierarchical bases} \cite{Iserles2005, saad2003iterative},
  
  \begin{equation}
    \label{eq:hierarchical_basis}
  B_0 \subset B_1 \subset B_2 \subset \ldots \subset B_{\ell} \subset \ldots, 
  \end{equation}

in a way that, by construction, 
$$
\|{\bf K}(t)-\pi_\ell {\bf K}(t)\|_{\rho(t)}^{\rm KMB}< \|
{\bf K}(t)-\pi_{\ell-1} {\bf K}(t)\|_{\rho(t)}^{\rm KMB},
$$
with $\pi_\ell\equiv \pi_{B_\ell}$ the orthogonal projector associated to the subspace ${\cal A}_\ell\equiv {\cal A}_{B_\ell}$ regarding the KMB scalar product relative to $\rho(t)$. 
These so-called Hierarchical bases are related to the Hierarchical Lie algebras, see \cite{Iserles2005}, which arise in the study of efficient solutions to differential equations on manifolds.
In our present case, however, the Hierarchical bases are not, in general, Lie algebras, only sharing an iterative commutator-based structure. 

Notice that, for finite dimensional algebras, the distance converges to $0$ for large enough $\ell$. 
However, the particular way in which the convergence is achieved depends strongly on the choice of ${B}_\ell$. The discussion of these conditions in a general context is out of the scope of this work. 
For the present analysis, we are going to focus on the case of dynamics generated by a time-independent Hamiltonian ${\bf H}$, and the sequence ${\bf b}_{\ell} \in B_\ell$ with

\begin{equation}
  \label{eq:iterated_comm_basis}
  {\bf b}_0={\bf K}(0)\;\;\mbox{and}\;\;{\bf b}_m=\frac{[{\bf H},{\bf b}_{m-1}]}{{\bf i}\hbar}
\end{equation}

As such, the subspaces ${\cal A}_\ell$, spanned by the Hierarchical bases, can be understood as Krylov subspaces generated by the initial operator ${\bf b}_0$ and the operator $\frac{1}{i\hbar} {\rm ad}_{{\bf H}}^{(\ell)}(\cdot)$.
In \cref{sec:apphierarch}, it is shown that the projected and the restricted dynamics are consistent with an $\ell-$th order perturbative expansion, and hence for a fixed $t_{max}>t$, the KMB distances (and any other metric) converge as $t_{max}^{\ell+1}$.  On the other hand, numerical simulations presented in \cref{sec:test_ex} seem to suggest that for larger times, the KMB distance reaches an asymptotic value, that decreases with $\ell$.

\section{Computable general Max-Ent Dynamics}\label{sec:computable}

With the method above, in principle, it is possible to solve the restricted dynamics for any choice of the physical system and set of relevant observables, involving just as many dynamical variables as the considered relevant independent observables. However, to explicitly solve the dynamics, the challenge lies in computing the self-consistent projections via the evaluation of the KMB scalar product of operators with respect to the instantaneous state $\sigma(t)$: its computation requires the construction and explicit diagonalization of the instantaneous state $\sigma(t)$ at each step of the evolution. This process can only be carried out explicitly for Gaussian and product states, and for very low-dimensional systems \cite{vidal_computable_2002}.

One way to overcome this limitation arises from the observation that the same projector can be orthogonal regarding distinct scalar products. Moreover, even if two scalar products lead to different but similar orthogonal projectors, choosing a suitable basis $B$, it can be expected that the dynamics induced by the projectors will be similar. 
In this section, the desired requisites for a computable generalization of the KMB dynamics are discussed in depth, alongside a concrete proposal fulfilling these requisites.
\smallbreak

\subsection{Required properties.} 

In the upcoming sections, an alternative proposition to solve the Max-Ent projected dynamics equation \cref{eq:coeffs} is to replace the KMB geometry with a mathematically similar yet computationally efficient geometry.
To this end, one must, first, embark on a search for an alternative scalar product $(\cdot,\cdot)$ that can serve as a replacement for the KMB scalar product while possessing comparable metric properties. 
A comprehensive analysis of the mathematical properties of this scalar product can be found in \cite{amorim2020complete}. 
Additionally, for a more extensive exploration of the broader applicability of this geometry, particularly from the perspective of operator theory, refer to the comprehensive summary provided in \cite{balazs2006hilbertschmidt}.
By pursuing this avenue, an improved approach for computing scalar products and orthogonalization of bases of observables, with higher computational efficiency, is desired.
In order to achieve results similar to those obtained using the KMB scalar product, the proposed alternative must satisfy several significant conditions.
\smallbreak

\paragraph{Reality condition.} Firstly, a suitable candidate for a scalar product must meet the reality condition(see \cref{app:realsp}),
\begin{equation}\label{eq:realcond} 
({\bf A}, {\bf B})^{'*}=({\bf A}^\dagger, {\bf B}^\dagger)'=({\bf B},{\bf A})'.
\end{equation}

This condition ensures that $\pi_B({\bf Q})=\pi_B({\bf Q})^\dagger \in {\cal A}_B$ for any ${\bf Q}={\bf Q}^\dagger \in{\cal A}$  and for any choice of $B$ s.t. ${\bf Q}\in {\cal A}_B \Rightarrow {\bf Q}^\dagger \in {\cal A}_B$, see \cref{app:realsp}. Both the KMB scalar product and the Hilbert-Schmidt scalar product (HS), given by 
$$({\bf A},{\bf B})^\textnormal{HS}= \Tr \mathbf{A}^\dagger{\bf B},$$
\noindent are real-valued scalar products~\cite{Hall2013}. 
\paragraph{Tensor-Product compatibility condition.} The HS scalar product is particularly advantageous as it is much easier to compute than the KMB scalar product when ${\bf A}^\dagger$ and ${\bf B}^\dagger$ represent $k$-body correlations. Furthermore, the HS scalar product is compatible with the tensor product operation:
\begin{equation}
\label{eq:compatSPTP}
({\bf O}_1 \otimes {\bf O}_2,{\bf Q}_1 \otimes {\bf Q}_2)_{\textnormal{HS}}=({\bf O}_1, {\bf Q}_1)({\bf O}_2 , {\bf Q}_2).
\end{equation}
This property is not shared by the KMB product, even if $\sigma(t)$ is a product operator, which makes the evaluation of $k$-body correlation functions much harder than in the HS geometry.

\paragraph{Statistical weight.} However, simple substitution of the KMB scalar product by the HS scalar product in \cref{eq:coeffs} is not always a viable approach. 
The KMB scalar product assigns weights to operators based on their statistical significance, while the HS scalar product is unitarily invariant. As a result, two operators that are close in terms of the KMB-induced norm may appear very different according to the HS-induced norm. 
This discrepancy arises, for example, when the operators differ in the form $|i\rangle \langle j|$, with $|i\rangle$ and $|j\rangle$ being states with very low occupation probabilities ($\langle i|\rho|i\rangle, \langle j|\rho|j\rangle\ll 1$). 
For instance, in a bosonic system where $\hat{\bf n}={\bf a}^\dagger{\bf a}$ is the number operator 
and $\rho$ is a Gaussian state with $\langle n\rangle \approx 1$, 
$|\hat{\bf n}^2-\hat{\bf n}|_\textnormal{KMB}=2\sqrt{13}\approx 7.21$, but $|\hat{\bf n}^2-\hat{\bf n}|_\textnormal{HS}$ is unbounded.
\smallbreak

\subsection{Quantum Covariance scalar product(\, \emph{covar}) and  \, \emph{covar} geometry}\label{subsec:geometrycov}

A more suitable choice of scalar product is given by the \emph{quantum COVARiance} scalar product w.r.t. a certain reference state $\sigma$ --from now on, \, \emph{covar}--,

\begin{equation}
\label{eq:corrSP}
({\bf O},{\bf Q})_{\sigma}^{{\rm covar}}=\rm Tr\left[\sigma \frac{\{{\bf O}^\dagger, {\bf Q}\}}{2}\right]
\end{equation}

\noindent which, for Hermitian inputs, is a real-valued scalar product.

This scalar product, up to a constant factor, reduces to the HS when $\sigma\propto \textnormal{id}_{\hilbert}$. On the other hand, for normalized reference states $\Tr\sigma=1$, it has a simple statistical interpretation: the scalar product of an operator with the identity operator yields its expectation value,
$$
    ({\bf id}_{\hilbert},{\bf Q})_{\sigma}^{{\rm covar}}=\langle {\bf Q} \rangle_{\sigma},
$$
while the scalar product between two operators with zero expectation value (i.e.\ orthogonal to the identity) is given by its covariance:
$$
{\rm Cov}_{\sigma}({\bf O},{\bf Q})=\left\langle \frac{\{{{\bf O},{\bf Q}}\} }{2}\right\rangle_{\sigma}-\langle {\bf O}\rangle \langle {\bf Q}\rangle_\sigma.
$$

Additionally, the induced norm for an operator with zero expectation value is given by its standard deviation:
$$
||{\bf Q}-\langle {\bf Q}\rangle||_{\sigma}^{{\rm covar}}=\sqrt{\langle {\bf Q}^2\rangle_\sigma -\langle {\bf Q}\rangle_\sigma^2}.
$$
Hence, the \, \emph{covar} scalar product can be regarded as the quantum analog of the \emph{covariance} scalar product between classical random variables.

Notably, the  \, \, \emph{covar} scalar product offers an advantage over the KMB geometry, as its computation does not require the diagonalized form of the reference state, making it more computationally efficient. Furthermore, as it is a linear function w.r.t. the reference state, it can be efficiently computed for any separable reference state $\rho_0=\sum_i q_i \rho_i^{\otimes}$.

Although it does not satisfy the tensor-product compatibility condition \cref{eq:corrSP}, for self-adjoint operators, it can be computed as the real part of the Gelfand-Naimark-Sigal (GNS) scalar product~\cite{Hall2013,CM.2020}
$$
    ({\bf O},{\bf Q})_{\sigma}^\textnormal{GNS}=\Tr[\sigma{\bf O}^\dagger {\bf Q}],
$$
\noindent which does satisfy it. 
For instance, choosing $\sigma=\bigotimes_i \sigma_{i}$, the scalar product between ${\bf O}=\bigotimes_i{\bf o}_i$ and ${\bf q}=\bigotimes_i{\bf q}_i$ is simply given by
$$
    ({\bf O},{\bf Q})_{\sigma}^{\rm covar}=\Re \left[\prod_i ({\bf o}_i,{\bf q}_i)_{\sigma_{i}} \right].
$$
Another important feature of the \, \, \emph{covar} scalar product is that, if one of the arguments commutes with the reference state, this product yields the same result that the KMB scalar product regarding the same reference state., i.e.

$$
[{\bf K},{\bf A}]=0 \Rightarrow  ({\bf A},{\bf B})_{\sigma}^{\rm KMB}=({\bf A},{\bf B})_{\sigma}^{\rm covar}.
$$

Therefore, \cref{prop:entropy_and_normalization_preserving} and \cref{prop:trivial_projection_in_expect_values} are also valid if the KMB product and orthogonal projectors are replaced by their corresponding \, \, \emph{covar} counterparts. As a result,
if instead of the KMB projector a \, \, \emph{covar} projector is used in \cref{eq:restricted}, both the KMB and the \, \, \emph{covar} trajectories lie over the same constant entropy submanifold of ${\cal M}_B$, and automatically preserve the normalization.

On the other hand, the  \, \emph{covar} geometry shares with KMB a common orthogonal basis of ${\cal A}$, with the norms of each vector related by a ${\cal O}(1)$ factor (see \cref{sec:KMBinducedNorm}):
\begin{eqnarray}
\big(|i\rangle\langle j|,|k\rangle\langle l|\big)_\sigma^\textnormal{KMB}&=& W_{ij} \times \big(|i\rangle\langle j|,|k\rangle\langle l|\big)_\sigma^{\rm covar},\\
W_{ij}&=&\frac{\tanh(\log(p_i/p_j)/2)}{\log(p_i/p_j)/2}\leq 1,
\end{eqnarray}
\noindent where $|i\rangle,|j\rangle,|k\rangle, |l\rangle$ are eigenvectors of $\sigma$ with eigenvalues $p_i,p_j,p_k,p_l$ respectively. As a result, both scalar products yield simular values for operators which connect states with similar probabilities.

The following proposition provides a useful tool to compare the induced geometries: 
\begin{prop}\label{lem:metricineq}
Let $\sigma \in {\cal S}(\hilbert)$ and ${\bf A}\in \cal A$. Then
\begin{equation}\label{eq:chain_norm_inequalities}
    \|{\bf A}\|\geq 
        ||{\bf A}||^{{\rm covar}}_{\sigma}   \geq 
        ||{\bf A}||^{\textnormal{KMB}}_{\sigma}  \geq  |S(\sigma\|e^{\log \sigma -{\bf A}})|\,.
 \end{equation}
 Notice that if $\Tr e^{\log \sigma -{\bf A}}=1$, then the absolute value in the last member is superfluous.
\end{prop}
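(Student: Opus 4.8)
The plan is to diagonalize all three geometries in one common orthogonal basis, which collapses the first two inequalities to elementary pointwise comparisons, and to obtain the last inequality from a Cauchy--Schwarz estimate against the identity. I read $\|\cdot\|$ (with no subscript) as the HS norm introduced earlier, and I take ${\bf A}={\bf A}^\dagger$, so that $\log\sigma-{\bf A}$ is self-adjoint and $e^{\log\sigma-{\bf A}}$ is positive; the first two inequalities in fact hold verbatim for arbitrary ${\bf A}$. Concretely, I would fix the spectral decomposition $\sigma=\sum_i p_i\,|i\rangle\langle i|$ and expand ${\bf A}=\sum_{ij}A_{ij}\,|i\rangle\langle j|$ in the operator basis $E_{ij}=|i\rangle\langle j|$. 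The key structural fact, established in \cref{sec:KMBinducedNorm}, is that $\{E_{ij}\}$ is \emph{simultaneously} orthogonal for the HS, \emph{covar}, and KMB products. A short direct computation (using $\{E_{ij}^\dagger,E_{ij}\}=|i\rangle\langle i|+|j\rangle\langle j|$ for \emph{covar}, and the $\tau$-integral of $p_i^\tau p_j^{1-\tau}$ for KMB) gives the diagonal weights $\|E_{ij}\|^2=\Tr[E_{ij}^\dagger E_{ij}]=1$, $(\|E_{ij}\|^{\rm covar}_\sigma)^2=\tfrac{p_i+p_j}{2}$, and $(\|E_{ij}\|^{\rm KMB}_\sigma)^2=\tfrac{p_i-p_j}{\log p_i-\log p_j}$ (the logarithmic mean, equal to $p_i$ when $p_i=p_j$). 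By orthogonality each squared norm becomes the weighted sum $\sum_{ij}|A_{ij}|^2\,w_{ij}$ with the respective nonnegative weights $w_{ij}$.

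With this diagonalization the first two inequalities reduce to the pointwise scalar chain $1\ge \tfrac{p_i+p_j}{2}\ge \tfrac{p_i-p_j}{\log p_i-\log p_j}$, summed against $|A_{ij}|^2\ge 0$. The left step holds because $\sigma$ is a \emph{normalized} state, so $0\le p_i\le 1$ and the arithmetic mean of $p_i,p_j$ is at most $1$; this yields $\|{\bf A}\|\ge\|{\bf A}\|^{\rm covar}_\sigma$. The right step is the arithmetic--logarithmic mean inequality, equivalently $W_{ij}=\tanh(x)/x\le 1$ with $x=\tfrac12\log(p_i/p_j)$ as in \cref{sec:KMBinducedNorm}; this yields $\|{\bf A}\|^{\rm covar}_\sigma\ge\|{\bf A}\|^{\rm KMB}_\sigma$.

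For the last inequality I would first evaluate the relative-entropy member. Since $\log\!\big(e^{\log\sigma-{\bf A}}\big)=\log\sigma-{\bf A}$, definition \cref{eq:RE} gives $S(\sigma\|e^{\log\sigma-{\bf A}})=\Tr[\sigma(\log\sigma-(\log\sigma-{\bf A}))]=\Tr[\sigma{\bf A}]=\langle{\bf A}\rangle_\sigma$. Next I note the two elementary identities $({\bf id}_{\hilbert},{\bf A})^{\rm KMB}_\sigma=\Tr[\sigma{\bf A}]$ and $\|{\bf id}_{\hilbert}\|^{\rm KMB}_\sigma=\sqrt{\Tr\sigma}=1$, both following from $\sigma^{1-\tau}\sigma^\tau=\sigma$. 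Cauchy--Schwarz for the positive-definite KMB product then gives $|\langle{\bf A}\rangle_\sigma|=|({\bf id}_{\hilbert},{\bf A})^{\rm KMB}_\sigma|\le \|{\bf id}_{\hilbert}\|^{\rm KMB}_\sigma\,\|{\bf A}\|^{\rm KMB}_\sigma=\|{\bf A}\|^{\rm KMB}_\sigma$, which is precisely the claimed bound. For the closing remark: if $\Tr e^{\log\sigma-{\bf A}}=1$, then $e^{\log\sigma-{\bf A}}$ is a genuine state and Klein's inequality forces $S(\sigma\|\,\cdot\,)\ge 0$, so $\langle{\bf A}\rangle_\sigma\ge 0$ and the absolute value becomes redundant.

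The main obstacle is bookkeeping rather than a deep estimate: everything hinges on establishing the simultaneous orthogonality of $\{E_{ij}\}$ together with the exact diagonal weights, after which the first two inequalities collapse to the scalar chain (unity $\ge$ arithmetic mean $\ge$ logarithmic mean) and the third to a one-line Cauchy--Schwarz. The two points I would state carefully are the normalization input $p_i\le 1$, which is exactly what powers the HS bound, and the self-adjointness assumption on ${\bf A}$, needed for the relative-entropy member and the identity $\log(e^{X})=X$ to be unambiguous.
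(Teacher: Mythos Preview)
Your arguments for the second and third inequalities are essentially the paper's: the paper also reduces $\|\cdot\|^{\rm covar}_\sigma\ge\|\cdot\|^{\rm KMB}_\sigma$ to the diagonal comparison $(p_i+p_j)/2\ge (p_i-p_j)/\log(p_i/p_j)$ via $\tanh(x)/x\le 1$, and obtains the relative-entropy bound by writing $S(\sigma\|e^{\log\sigma-{\bf A}})=\Tr\sigma{\bf A}=({\bf id}_\hilbert,{\bf A})^{\rm KMB}_\sigma$ and applying Cauchy--Schwarz with $\|{\bf id}_\hilbert\|^{\rm KMB}_\sigma=1$.

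There is, however, a genuine gap in your treatment of the first inequality: you read $\|\cdot\|$ as the Hilbert--Schmidt norm, but the paper (see \cref{sec:KMBinducedNorm}) explicitly takes $\|{\bf A}\|=\max_{|\psi\rangle}|\langle\psi|{\bf A}|\psi\rangle|/\langle\psi|\psi\rangle$, the \emph{spectral} norm. Since $\|{\bf A}\|_{\rm HS}\ge\|{\bf A}\|_{\rm spectral}$ in general, what you establish ($\|{\bf A}\|_{\rm HS}\ge\|{\bf A}\|^{\rm covar}_\sigma$) is strictly weaker than the stated claim. More importantly, your diagonalization strategy does not adapt: the spectral norm is not the $\ell^2$-sum $\sum_{ij}|A_{ij}|^2$ with unit weights, so the pointwise comparison $1\ge(p_i+p_j)/2$ no longer controls it. The paper's argument is instead direct and does not use the $E_{ij}$ decomposition for this step: from
\[
(\|{\bf A}\|^{\rm covar}_\sigma)^2=\sum_k p_k\,\frac{\langle k|{\bf A}^\dagger{\bf A}|k\rangle+\langle k|{\bf A}{\bf A}^\dagger|k\rangle}{2}
\]
and the operator-norm bound $\langle k|{\bf A}^\dagger{\bf A}|k\rangle\le\|{\bf A}\|^2$, $\langle k|{\bf A}{\bf A}^\dagger|k\rangle\le\|{\bf A}^\dagger\|^2=\|{\bf A}\|^2$, one gets $(\|{\bf A}\|^{\rm covar}_\sigma)^2\le\sum_k p_k\,\|{\bf A}\|^2=\|{\bf A}\|^2$. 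Note that this uses only $\sum_k p_k=1$, not $p_k\le 1$.
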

The proof of \cref{lem:metricineq} can be found in \cref{sec:KMBinducedNorm}. From this proposition, and the minimum distance property of orthogonal projectors regarding the corresponding induced norm,
the following chain of inequalities holds:

\begin{eqnarray}\label{eq:chain_inequalities_norm}
\|\pi^\textnormal{KMB}({\bf Q})\|^\textnormal{KMB}&\leq&
\|\pi^{\rm covar}({\bf Q})\|^\textnormal{KMB}\\ 
&\leq&\|\pi^{\rm covar}({\bf Q})\|^{\rm covar}\\
&\leq& \|\pi^\textnormal{KMB}({\bf Q})\|^{\rm covar},
\end{eqnarray}

\noindent for any ${\bf Q} \in {\cal A}$. 
Equality holds when $B\subset \{|i\rangle\langle j|\}$, and hence the associated orthogonal projectors over ${\cal A}_B$ for $\pi\equiv \pi_B^\textnormal{KMB}$ and $\pi_B^{\rm covar}$ are identical.

From \cref{eq:chain_inequalities_norm} and since 
$${\pi}^{\rm KMB}{\pi}^{\rm covar}{\bf Q}={\pi}^{\rm covar}{\bf Q}\;\;\mbox{and}\;\; {\pi}^{\rm covar}{\pi}^{\rm KMB}{\bf Q}={\pi}^{\rm KMB}{\bf Q}\,,$$
if follows that
\begin{equation}
\delta^{\rm KMB}({\bf Q})\leq \delta^{\rm covar}({\bf Q})
\end{equation}
for $\delta^{\stackrel{\rm KMB}{\rm covar}}({\bf Q})=\|\pi^{\rm KMB}{\bf Q}-\pi^{\rm covar}{\bf Q}\|^{\stackrel{\rm KMB}{\rm covar}}$ and, using the triangular inequality,

\begin{equation}
    \begin{split}
        \delta^{{\rm KMB}}({\bf Q}) <  2 \|\pi_{\perp}^{{\rm covar}}{\bf Q}\|^{{\rm KMB}} \\
        \delta^{{\rm covar}}({\bf Q}) <  2 \|\pi_{\perp}^{\rm KMB}{\bf Q}\|^{{\rm covar}},
    \end{split}
\end{equation}

with $\pi_{\perp}^{^{\stackrel{\rm covar}{\rm KMB}}}{\bf Q}={\bf Q}-\pi^{^{\stackrel{\rm covar}{\rm KMB}}}{\bf Q}$ the corresponding projection onto the orthogonal complement of ${\cal A}_B$.

\subsection{Connection with standard formulations of Mean Field Theory and equivalence of projections in the Gaussian case}

As shown in \cref{subsec:proof_mf_as_proj}, for some special choices of $B$, our formalism is equivalent to the (self-consistent) Time-Dependent Mean Field Theory (TDMFT). \smallbreak
 
The simplest case is the one in which $\hilbert=\hilbertTensor\equiv \bigotimes_i \hilbert_i$ and the basis $B$ of accessible observables is a basis of local observables

\begin{equation}\label{eq:localbasis}
   B= \bigsqcup_i B_i,
\end{equation}

\noindent with $B_i$ complete bases of the local algebras of operators ${\cal A}_i$ acting over $\hilbert^{(i)}$. For this case, the formalism is equivalent to the Hartree (product-state based) mean-field approach ~\cite{Balian.1991,ULW.98,Bruus.2004,MRC.10}.

In a similar way, if $\hilbert = \hilbert^{\textnormal{Fock}}$, and

\begin{eqnarray*}\label{eq:gaussianbasis}
     B&=&B^{\textnormal{F}}_{1}\sqcup B^{\textnormal{F}}_{2}, \\
     B^{\textnormal{F}}_{1}&=&\{{\bf q}_1, {\bf p}_1, {\bf q}_2, {\bf p}_2,  \ldots\},     \\
     B^{\textnormal{F}}_{2}&=&\{{\bf Q}\textnormal{ }|\textnormal{ } {\bf Q}=[{\bf z}_i,{\bf z}_j]_{\mp}-\langle [{\bf z}_i,{\bf z}_j]_{\mp} \rangle, \textnormal{ } {\bf z}_{i,j}\in {B}^{\textnormal{F}}_{1} \},
\end{eqnarray*}

\noindent with ${\bf q}_i$, ${\bf p}_i$, observables s.t. $\langle {\bf p}_i\rangle=\langle {\bf q}_i\rangle=0$, ($[{\bf A},{\bf B}]_{+}=[{\bf A},{\bf B}]$ and $[{\bf A},{\bf B}]_{-}=\{{\bf A},{\bf B}\}$ correspond to the commutator and anticommutator of the operators)  satisfying canonical commutation/anti-commutation relations
  
   $$
        [{\bf p}_i,{\bf p}_j]_{\pm}=   [{\bf q}_i,{\bf q}_j]_{\pm}=0,\;  [{\bf q}_i,{\bf p}_j]_{\pm}={\bf i}\hbar \delta_{ij},
   $$
   
\noindent As a result, our formalism is equivalent to the Time-Dependent Hartree-Fock-Bogoliubov (Gaussian-state-based) Mean Field theory \cite{Auerbach.1994,Ring.2005}.
In both cases, the self-consistency condition -- for the stationary case -- is given by

$$
\Pi_{\mathcal{M}_B}(\sigma)=\sigma\;,\;\langle {\bf H}\rangle =
\langle \pi_B\left({\bf H}\right)\rangle.
$$

In other words, for the bosonic Gaussian case, both geometries yield exactly the same projection. \smallbreak

\paragraph{Possible simplifications using fixed referential mean-field states}

While beyond the scope of this article, there are further improvements that can be made to \cref{eq:coeffs}, besides altering the inner product. 
Specifically, instead of considering time-dependent scalar products w.r.t. the instantaneous state of the system, $\sigma(t)$, a single fixed and carefully chosen reference state $\sigma_0$ can be considered.

This proposal offers several advantages. 
For instance, by employing in \cref{eq:coeffs} a  \, \, \emph{covar} scalar product w.r.t. a fixed reference state $\sigma_0$, the resulting system of differential equations becomes linear. 
As a result, its solution becomes analytically tractable and numerically stable.

For this proposal to yield results comparable to the exact ones, the reference state $\sigma_0$ must exhibit a certain degree of similarity to the instantaneous states $\sigma(t)$ throughout the evolution. 
One way to achieve this is by considering a mean-field state as the reference state, i.e. $\sigma_0$ must be chosen s.t. 

$$
    \pi^{\textnormal{MF}}(\sigma_0) = \sigma_0,
$$

\noindent where $\pi^{\textnormal{MF}}: {\cal A} \rightarrow {\cal A}_B$ is the Mean-Field projector for the relevant basis of observables $B$. 
These ideas are discussed in depth in \cref{subsec:proof_mf_as_proj}. 
In the upcoming sections, these ideas will not be employed, and the scalar product will be computed w.r.t. the instantaneous state of the system.

\section{Test Example} \label{sec:test_ex}

By replacing the KMB scalar product with the correlation scalar product as depicted in \cref{eq:coeffsGramm}, one can derive expressions completely analogous to those presented  \cref{eq:coeffs} and \cref{eq:coeffsH}, albeit w.r.t. the aforementioned alternative scalar product.
Although the correlation scalar product exhibits mathematical similarity to the KMB scalar product and offers computational advantages, it remains to be seen whether it yields accurate results, when compared to both exact outcomes and those obtained through the KMB geometry. 
These ideas will be tested on a simple physical system, specifically the one-dimensional Heisenberg spin-$\frac{1}{2}$ chain, which will be summarized in the subsequent section. 
The objective is, then, to compare the exact results, obtained through numerical solutions of the Schr\"odinger equation \cref{eq:Schrodinger}, with those derived from the KMB geometry and the geometry induced by the correlation scalar product.

\subsection{XX Heisenberg Model}\label{section_heisenberg}

\smallbreak 

As an illustrative instance of the preceding formalism, let us contemplate a spin-$\frac{1}{2}$ nearest-neighbour Heisenberg XX model on a periodic chain one-dimensional lattice composed of $N$ sites.
The system is governed by a Hamiltonian, given by:

\begin{equation}
    {\bf H} = -J \bigg(\sum_{j=1}^{N} \spin_j^x \spin_{j+1}^x + \spin_j^y \spin_{j+1}^y\bigg). 
    \label{XX_hamiltonian}
\end{equation}
s.t.\ $\spin_{N+1}^{x,y,z} \equiv \spin_1^{x,y,z}$ and 
where $\{\spin_j^x, \spin_j^y, \spin_j^z\}$ are the usual spin-$\frac{1}{2}$ operators and where $J$ is the strength of the flip-flop term $\spin_{j}^x\spin_{j+1}^x + \spin_j^y \spin_{j+1}^y$. 
Note that the $\spin_n^{x,y,z}$ operators act non-trivially just on the $n$-th site.
This state of the system can be described using (linear combinations of) tensor products of $N$ 
$\mathfrak{s}\mathfrak{u}(2)$ 
representations, with the identity operator, added for each lattice site. 
Its Hilbert space is $2^N$-dimensional, where one possible configuration is $\ket{\uparrow_1 \uparrow_2 \cdots \downarrow_N}$. 
In a quantum information context, these states are known as the  computational basis vectors. 
Moreover, both the XX and the more general, XY model can be analytically diagonalized via a Jordan-Wigner transformation~\cite{JW.28,LSM.61}. 
However, computing time-dependent numerical correlations, which are important for understanding these model's low-temperature behavior -amongst other important physical features-~\cite{HHSV.17,Lee_2007}, requires a numerical computation, wherein the previous technological difficulties readily become apparent. \smallbreak{}

\paragraph{Observables and quantum numbers} \smallbreak{}

Since the total magnetization ${\bf S}^z_T = \sum_{i} \spin_i^z$ commutes with the Hamiltonian, all states may be labelled with an additional quantum number, indicating the total number of excitations present in a given configuration, relative to the reference state~\cite{Auerbach.1994}
$$|\downarrow \downarrow \ldots \rangle\equiv |0\rangle.$$ 
Furthermore, the magnetization is a conserved quantity and, hence, the  Schr\"odinger evolution preserves it, i.e. a state with $n$ excitations will evolve in time to states with exactly $n$ excitations, as well. 

Consider, then, the following operator, basically a redefinition of the global magnetization, 
\begin{equation}
    {\bf n} = \sum_{j=1}^{N} \bigg(
        {\bf S}^z_j + \frac{1}{2}
    \bigg).
\end{equation}
This operator, the \emph{occupation operator}, measures how many flipped excitations the system contains, w.r.t to the reference state $|\downarrow \downarrow \ldots \rangle$, and is a constant of motion. 
In particular, consider a system with initial state $\rho_0$ s.t.\ $\langle \hat{\bf n} \rangle_{\rho_0} = 1$, undergoing a Schr\"odinger evolution.
Then, $\langle \hat{\bf n} \rangle_{\rho(t)} = 1$ at all times. 

A second quantum number of interest  is the average (normalized) localization of the excitations, given by a position operator
\begin{align}
     &{\bf x} = \sum_{j = 1}^{N} \bigg(2\frac{j-1}{N-1}-1\bigg) \bigg(\spin^z_j + \frac{1}{2}\bigg),
\end{align}

\noindent which measures which lattice site contains the excitation. 
This accessible observable will be of relevance in the following section. \smallbreak

\subsection{Numerical Exploration of the Projected Dynamics} 

Thus far, two potential alternatives for dynamics involving projections have been introduced, the projected and restricted evolutions. 
The former are derived by projecting the exact (free) dynamics \cref{eq:SchrodingerK} onto the Max-Ent manifold, as described in equation \cref{eq:piasbesselfourier}.
On the other hand, the restricted evolutions are obtained by solving the restricted equation of motion, as stated in \cref{eq:coeffs}, utilizing different types of linear projectors, $\pi^s_B$ with $s = \textnormal{KMB}/{\rm covar}$.
This also serves to gauge how well justified the hypothesis of substituting the KMB geometry by the  \, \, \emph{covar} geometry is. 
For the comparisons, we are going to consider the ferromagnetic case $J=1$ of a six-site chain with periodic boundary conditions (here, physical quantities are given in natural units, so $\hbar=1$).
The corresponding Hilbert space $\hilbertTensor = \otimes_{j} \hilbert^{(j)}$ is 64-dimensional, high-dimensional enough for exact numerical methods to be applicable but cumbersome and computationally expensive as well.
The free dynamics was obtained by numerically solving \cref{eq:SchrodingerK}, using the Quantum Toolbox in Python's (QuTip) function master equation solver~\cite{johansson_qutip_2013}. 
Restricted dynamics were computed using the explicit Runge-Kutta 5th-order solver, from the Scipy library.

\paragraph{Max-Ent manifold.}
In the examples, it was considered a basis of relevant observables including the constants of motion ${\bf n}$, ${\bf n}^2$, and ${\bf H}$, as well as the position operator ${\bf x}$ and its square ${\bf x}^2$.
From \cref{prop:trivial_projection_in_expect_values}, ${\bf id}_{\hilbert}$ must be included to ensure that, in the asymptotic limit, the action of the projection does not modify the expectation value of any operator. On the other hand, the constants of motion ${\bf n}$, ${\bf n}^2$, and ${\bf H}$ are included both because we want to study its behavior in the projected dynamics, and because from \cref{prop:ehrenfest}, its inclusion ensures its conservation also in the KMB restricted dynamics. Finally, the pseudo-position operator ${\bf x}$ and its square are included as an example of relevant quantity that is not conserved in the free dynamics.

This set is enlarged by including the iterated commutators ${\rm ad}^{(\ell)}_{{\bf H}}({\bf K})$ up to $\ell=4$ (see \cref{sec:apphierarch}), in a way that $B=B^{it}_4$ with 

\begin{equation}\label{eq:iteraterB}
    \begin{split}
        &B_\ell = B_0 + B^{it}_\ell, \\
        &B_0 = \{\textnormal{id}_{\hilbert}, {\bf n}, {\bf n}^2, {\bf x}, {\bf x}^2,{\bf K}_0\}, \\
        &B^{it}_\ell = \{\underbrace{[{\bf H}_0, {\bf K}_0]/({\bf i}), \big[{\bf H}_0, [{\bf H}_0, {\bf K}_0]\big]/({\bf i})^2, \cdots}_{\textnormal{a total of $\ell$ times}} \}.
    \end{split}
\end{equation}

\smallbreak

\paragraph{Initial Conditions.}The initial state of the system $\rho_0 = \rho(0)$ is chosen to lie in the Max-Ent manifold and is given by

\begin{equation}\label{eq:K0test}
\begin{split}
    \rho(0) &\propto e^{-{\bf K}(0)}, \\
    &{\bf K}(0) = \beta {\bf H} + c_1(\hat{\bf n} - \zeta)^2 + c_2 (\hat{\bf x} - {\bf x}_0)^2,
\end{split}
\end{equation}

\noindent where $\beta$ is the inverse temperature.
Two values of $\beta$ are of interest: $\beta = J$ and $\beta = J/10$. 
Here, $\rho(0)$ is a Max-Ent state regarding the observables $B = \{{\bf H}, {\bf n},{\bf n}^2, {\bf x}, {\bf x}^2\}$.
The other coefficients are chosen s.t. $\langle \hat{\bf n}\rangle_{\rho_{0}} \approx 1$.
In the first case, $c_1 = 3\beta, c_2 = 3 \beta, \zeta = 1$ and $x_0 = -.3$ have been chosen, 
while in the second case, $c_1 = 10\beta, c_2 = 10\beta, \zeta = 1$ and $x_0 = -.3$.

\smallbreak 

\subsubsection{Projected Evolutions}

Having defined the test case, the first question is whether or not linearized projections provide a sensible approximation to the Max-Ent projection. 
From the analysis in \cref{subsec:geometrycov}, this is assured if the projections $\sigma=\pi(\rho(t))$ are close enough to the original state $\rho(t)$. 
For our choice of basis $B$, this is asymptotically true for the short-time evolutions. 
This is also assured when the restricted evolution is close to the free dynamics.

\paragraph{Geometric distance between different projections.}

As a first step, we are interested in quantifying the loss of accuracy in the results when switching from the KMB to the  \, \, \emph{covar} geometry, for different temperatures. 
These results are depicted in \cref{fig:KMBKnorms}.
It is evident from the data that, for short-term evolutions, all three evolutions exhibit minimal, albeit non-zero, differences.
Given our choice of basis, $B = B_\ell$ in \cref{eq:iteraterB}, which includes up-to the $\ell$-th iterated commutator of ${\bf K}(0)$ and the Hamiltonian ${\bf H}$, $\pi_B$ acts trivially over the ${\bf K}$-power expansion up-to ${\cal O}(t^\ell)$.
As a result, the projected dynamics (and the expectation values derived from it) deviate from the exact free evolution in amounts of the same order. 
On the other hand, as the evolutions extend to longer durations, the discrepancy between the projected states and the exact states increases, eventually reaching a saturation point at around the $tJ\simeq 10$ mark. 
In contrast, the geometric distance between the  \, \, \emph{covar}- and KMB- ${\bf K}$-states remains minuscule, in comparison, during the entirety of the simulation.
In general, one notes that the projected states remain in close proximity to the exact states, albeit at a growing distance.
These observations support our proposal of substituting the computationally expensive KMB geometry with the  \, \, \emph{covar} geometry, at least for short-term evolutions.

\begin{figure}[htp]
    \centering
    \includegraphics[width=\linewidth]{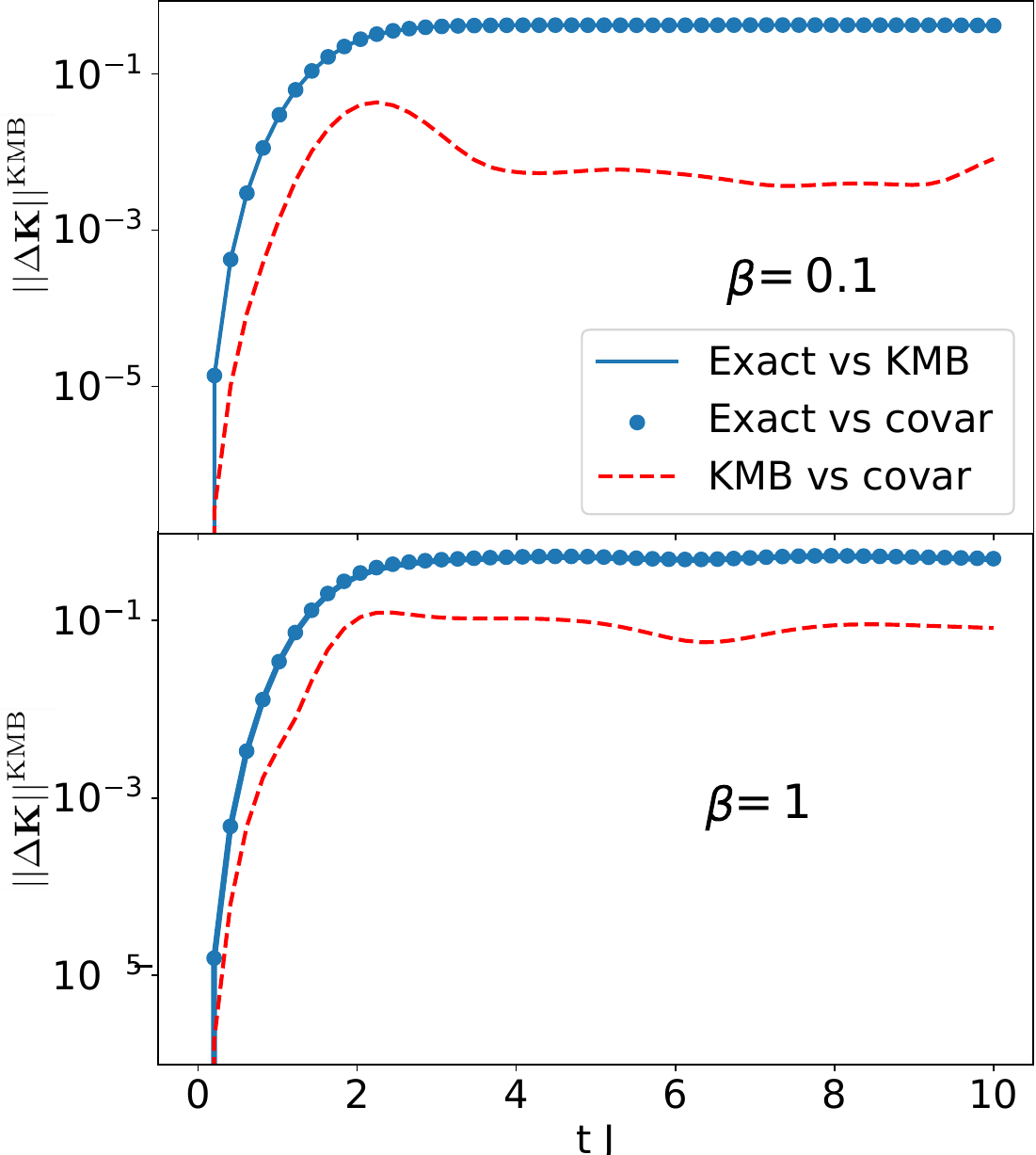}
    \caption{Evolution of the KMB-induced norms between the exact logarithm of the states and their KMB/correlation projections, in logarithmic scale, obtained from a $t=10/J$ simulation with 200 steps. We show these results for two inverse temperatures, $\beta = J/10$ (top) and $\beta = J$ (bottom). 
For the short-term evolution, both the KMB and  \, \, \emph{covar} projected states exhibit remarkable similarities amongst themselves and with the exact state.}
\label{fig:KMBKnorms}
\end{figure}

\smallbreak

\paragraph{Time evolution of Expectation Values.} 

As established by \cref{prop:ortproj}, KMB distances provide bounds to the deviations in the estimation of \emph{any} possible observable regarding the original and the projected state.
In  \cref{fig:obs_n_ev,fig:obs_H_ev,fig:obs_x_ev,fig:obs_Hk5_ev} the time evolution of the expectation values associated with some representative observables, regarding the different projections, are depicted. 
These plots correspond to a simulation of duration $t = 10/J$, employing a grid of 200 points.

By construction, the Max-Ent (non-linear) projection \cref{eq:nlProj} preserves the expectation value of any observable in the relevant space ${\cal A}_B$. 
On the other hand, linear projections $\pi^{\rm KMB}$ and $\pi^{\rm(covar)}$ satisfy \cref{eq:nlProj} only in the neighborhood of ${\cal M}_B$. 
 
\begin{figure}[!ht]
    \centering
    \includegraphics[width=\linewidth]{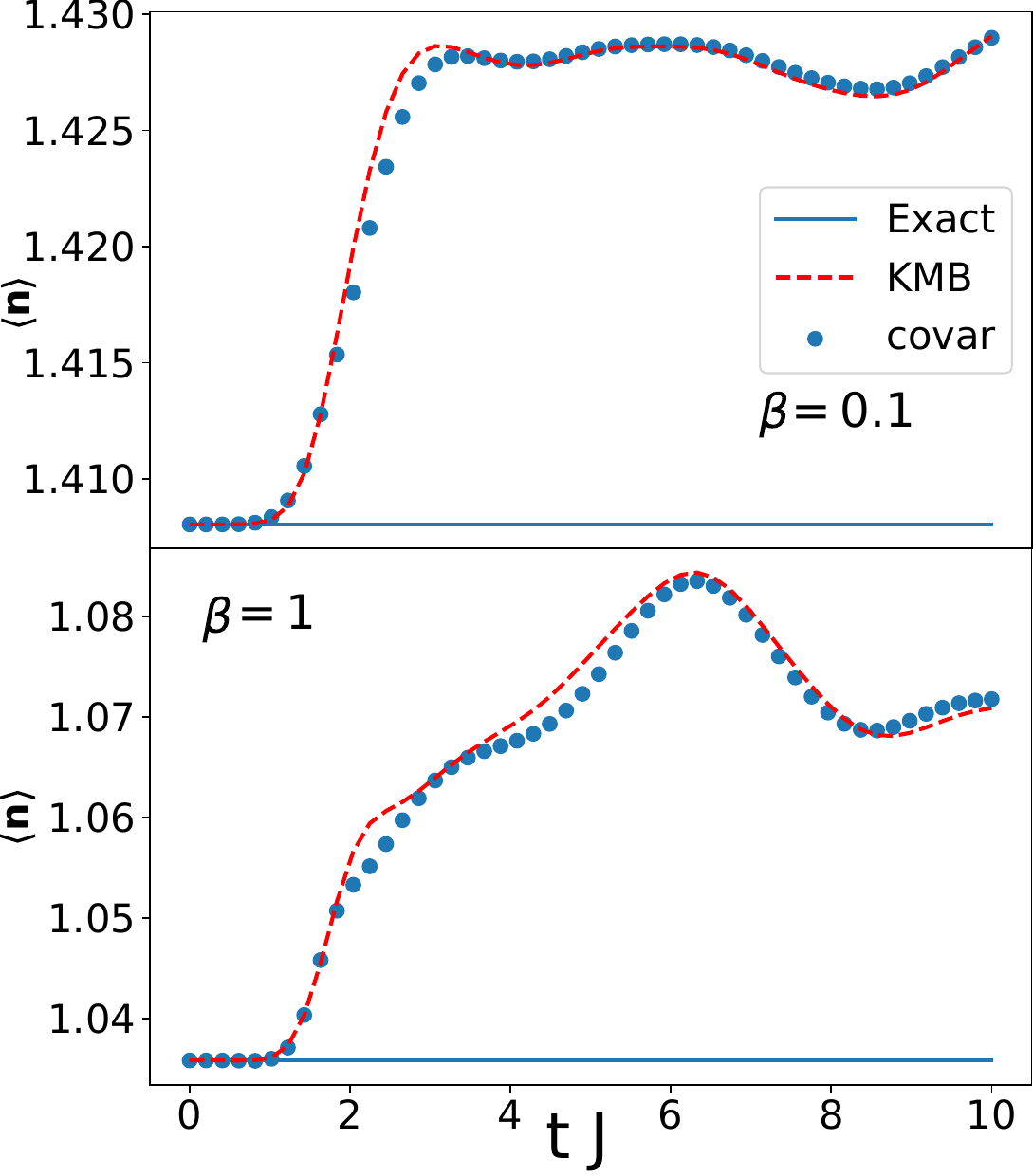}
    \caption{
Evolution of the expectation values for the occupation operator ${\bf n}$ at inverse temperatures $\beta = 0.1$ (top) and $\beta = 1$ (bottom), regarding the exact state and their linearized Max-Ent projections (see \cref{eq:piasbesselfourier}) concerning the basis $B=B_4$ given by \cref{eq:iteraterB}.
In the short-term regime ($t J \lessapprox 2$), the three dynamics yield highly similar outcomes. 
The subsequent lack of conservation is a consequence of the departure of the exact state trajectory from the corresponding Max-Ent manifold ${\cal M}_B$ and the limitations of the linear approximation.
}
\label{fig:obs_n_ev}
\end{figure}

\begin{figure}[htp]
    \centering
    \includegraphics[width=\linewidth]{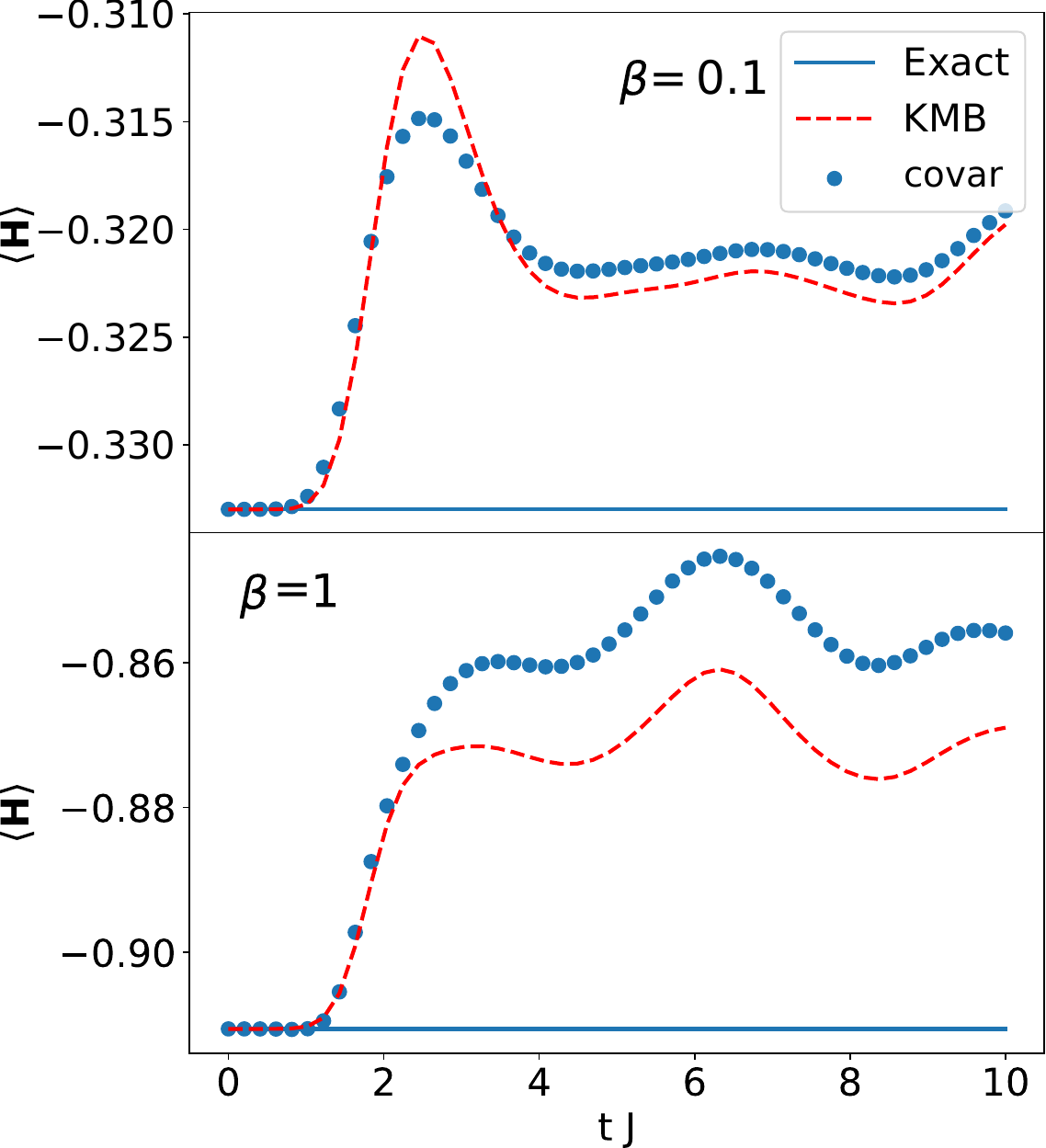}
    \caption{
Evolution of the expectation values for the Hamiltonian operator ${\bf H}$ at inverse temperatures $\beta = 0.1$ (top) and $\beta = 1$ (bottom), regarding the exact state and their linearized Max-Ent projections (see \cref{eq:piasbesselfourier}) concerning the basis $B=B_4$ given by \cref{eq:iteraterB}.
In the short-term regime ($t J \lessapprox 2$), the three dynamics yield highly similar outcomes. 
The subsequent lack of conservation is a consequence of the departure of the exact state trajectory from the corresponding Max-Ent manifold ${\cal M}_B$ and the limitations of the linear approximation.}
\label{fig:obs_H_ev}
\end{figure}

\begin{figure}[htp]
    \centering
    \includegraphics[width=\linewidth]{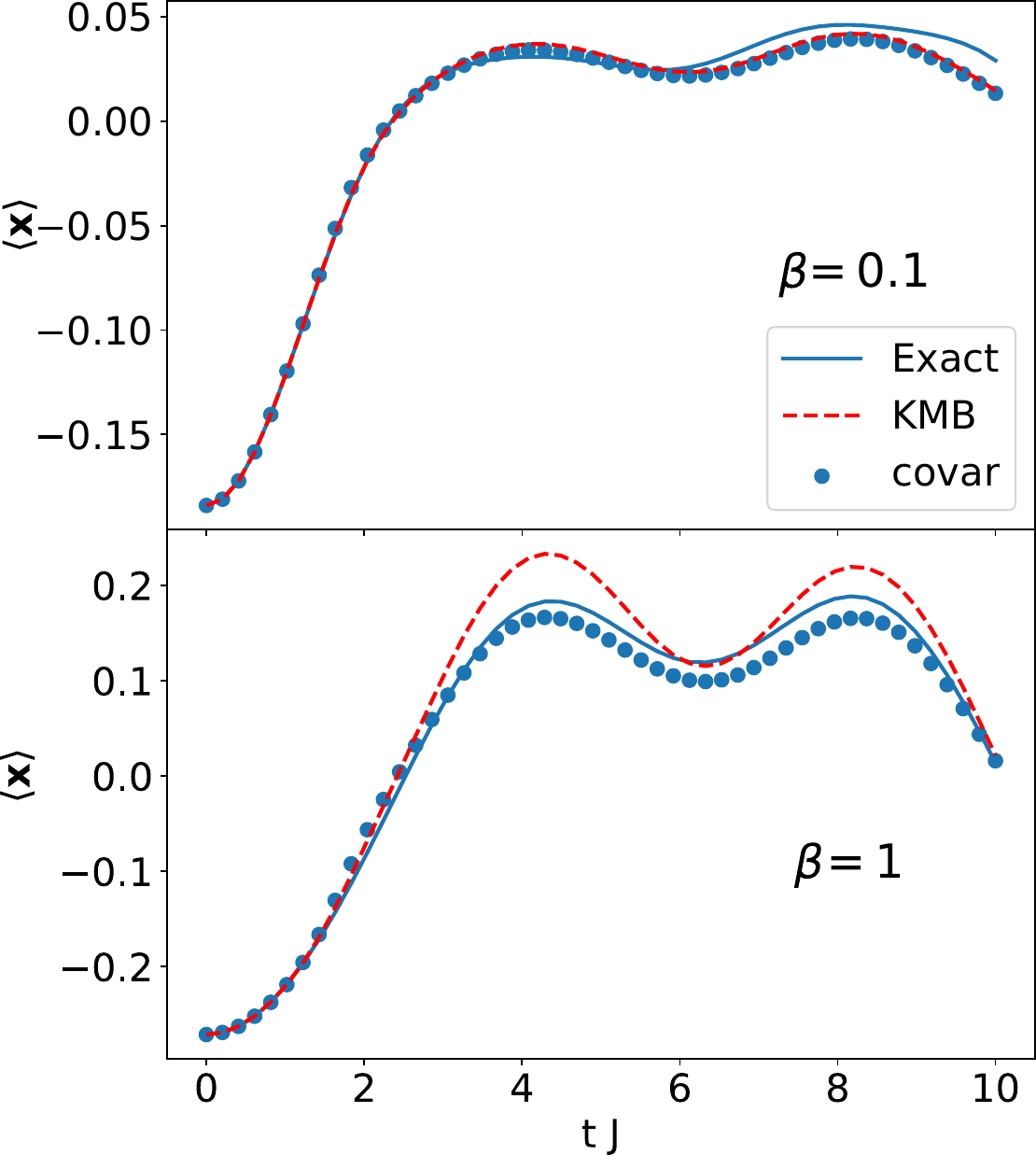}
    \caption{
Evolution of the expectation values for the operator ${\bf x}$ at inverse temperatures $\beta = 0.1$ (top) and $\beta = 1$ (bottom), regarding the exact state and their linearized Max-Ent projections (see \cref{eq:piasbesselfourier}) concerning the basis $B=B_4$ given by \cref{eq:iteraterB}.
In the short-term regime ($t J \lessapprox 2$), the three dynamics yield highly similar outcomes.
The subsequent lack of conservation is a consequence of the departure of the exact state trajectory from the corresponding Max-Ent manifold ${\cal M}_B$.}
\label{fig:obs_x_ev}
\end{figure}

\begin{figure}[htp]
    \centering
    \includegraphics[width=\linewidth]{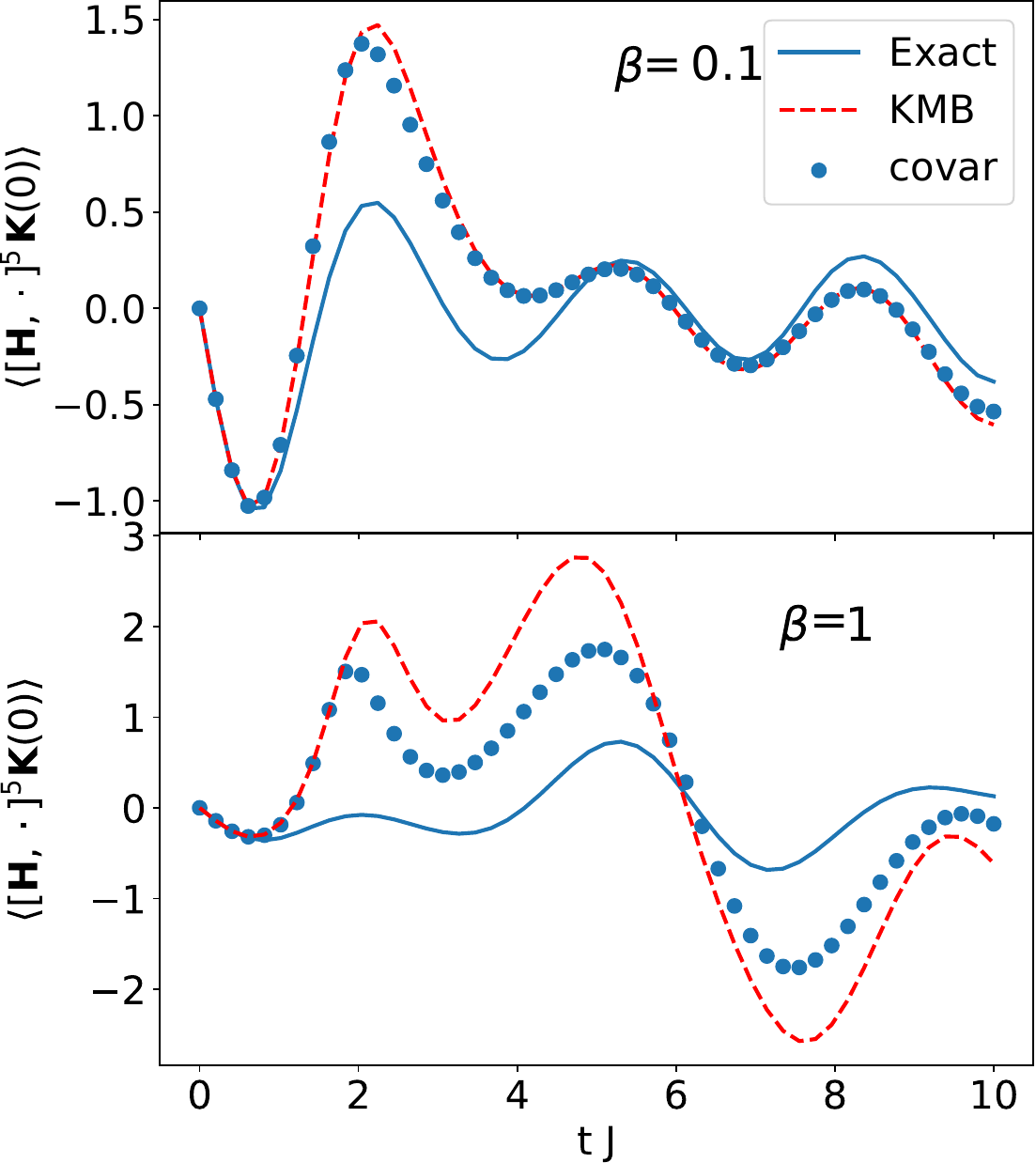}
    \caption{
Evolution of the expectation values for the $5-$times iterated commutator of the Hamiltonian with ${\bf K}(0)$ at inverse temperatures $\beta = 0.1$ (top) and $\beta = 1$ (bottom), regarding the exact state and their linearized Max-Ent projections (\cref{eq:piasbesselfourier}) concerning the basis $B=B_4$\cref{eq:iteraterB}.
In the short-term regime ($t J \lessapprox 2$), the three dynamics yield highly similar outcomes. 
The subsequent lack of conservation is a consequence of the departure of the exact state trajectory from the corresponding Max-Ent manifold ${\cal M}_B$.}
\label{fig:obs_Hk5_ev}
\end{figure}

Before the $t J = 2$  mark, all three frameworks exhibit highly similar constant outcomes, indicating a strong agreement between the projected and exact frameworks. 
However, as the simulation progresses, discrepancies between the projected and exact frameworks become more pronounced and eventually reach a saturation point around the $t=10/J$ mark. 
Notice that the no-conservation of ${\bf n}$ and ${\bf H}$ is an effect of the linearization of the Max-Ent projection ${\cal P}_{B}$, which is more important as the state moves away the Max-Ent manifold ${\cal M}_B$. 
Nevertheless, the difference between the KMB- and \emph{covar}- projections remain small throughout the evolution, meaning that both projections produce comparable results. Moreover, for the larger deviations, it can be noticed that \, \emph{covar} projection sometimes produces better results than the KMB projection.
This is a consequence of the competition of the error introduced by the linearization, and the one resulting from approximating the KMB projection by the \, \emph{covar} projection.

Additionally, notice that we have not established any asymptotic behavior for the \, \emph{covar} scalar product.
This is not problematic since we are examining \textit{large} regions of ${\cal A}$ in short-term evolutions.
In the three cases, it is observed that the deviations from the exact values are larger at lower temperatures (larger $\beta$). This is an expected behavior since the error bound \cref{eq:error_bound} is proportional to ${\bf K}$ and therefore, to the inverse temperature $\beta$.

For the case of the occupation number ${\bf n}$ (see \cref{fig:obs_n_ev}),  which commutes with both ${\bf H}$ and ${\bf K}(t)$, deviations can only be attributed to the effect of neglecting the nonlinear terms in the projectors. 
For $\beta=0.1$ the non-conservation is below $1.5\%$, while for $\beta=1$ it is under $4\%$ of the initial value. 
Notice that these fluctuations are also affected by the non-conservation of the normalization $\Tr \mathbf{id}_\hilbert\exp(-\pi_B {\bf K}(t))\neq \Pi_B \exp(-{\bf K}(t))$ which suffers the same effect.
We also notice that the deviations obtained from both the KMB and the  \, \, \emph{covar} linear projections lead to very similar values.

For the case of the Hamiltonian ${\bf H}$ (see \cref{fig:obs_H_ev}), which does not commute with ${\bf K}(t)$, the behavior is similar, but differences between the values obtained with the two projectors become larger, especially at the lower temperature.
Deviations regarding the initial value are below $7\%$ for both temperatures.

The case of the (pseudo) position operator (${\bf x}$), which commutes with neither ${\bf H}$ nor ${\bf K}(t)$ (\cref{fig:obs_x_ev}) presents a similar behavior, with an excellent agreement in the short term regime, but with larger deviations for longer times, which for $\beta=1$ becomes close to the $30\%$.
Interestingly, the \, \, \emph{covar} projection provides in this case closer values to the exact ones than the KMB.
Again, this seems to be the result of error cancellations happening beyond the linear regime.

Finally, in \cref{fig:obs_Hk5_ev}, the expectation value of the $5-$times iterated commutator
${\bf b}_5={\rm ad}_{{\bf H}/i}^{(5)} {\bf K}_0$ is depicted for the free state, its (non-linear) Max-Ent projection, and the linear projections.
In this case, the operator does not belong to ${\bf A}_B$, and hence, the expectation value for the free state and its Max-Ent projection do not necessarily match.
Again, as predicted, all the averages coincide in the short time regime ($t \lessapprox 1/J$) but start showing deviations at shorter times. 
For larger times, the expectation values corresponding to different projection schemes show larger fluctuations than those corresponding to the free state. 
Interestingly, deviations from the free dynamics result larger for the KMB projection, and even for the true Max-Ent projection, than those obtained from the \, \emph{covar} projection.

This underscores the effectiveness of linearization as a reliable quantitative approximation.
\smallbreak{}

\paragraph{Relative Entropies}

Finally, we are interested in quantifying relative entropies between the exact free evolution and both kind of projections. 
In particular, the following relative entropies are of interest: 

\begin{enumerate}
    \item the relative entropy between the exact and the KMB-projected states, $S(\rho || \sigma_{\textnormal{KMB}})$,
    \item the relative entropy between the exact and the correlation-projected states, $S(\rho || \sigma_{{\rm covar}})$,
    \item and both types of relative entropies between the correlation- and KMB-projected states, $S(\sigma_{{\rm covar}} || \sigma_{\textnormal{KMB}})$ and  $S(\sigma_{\textnormal{KMB}} || \sigma_{{\rm covar}})$.
\end{enumerate}

\begin{figure}[htp]
    \centering
    \includegraphics[width=\linewidth]{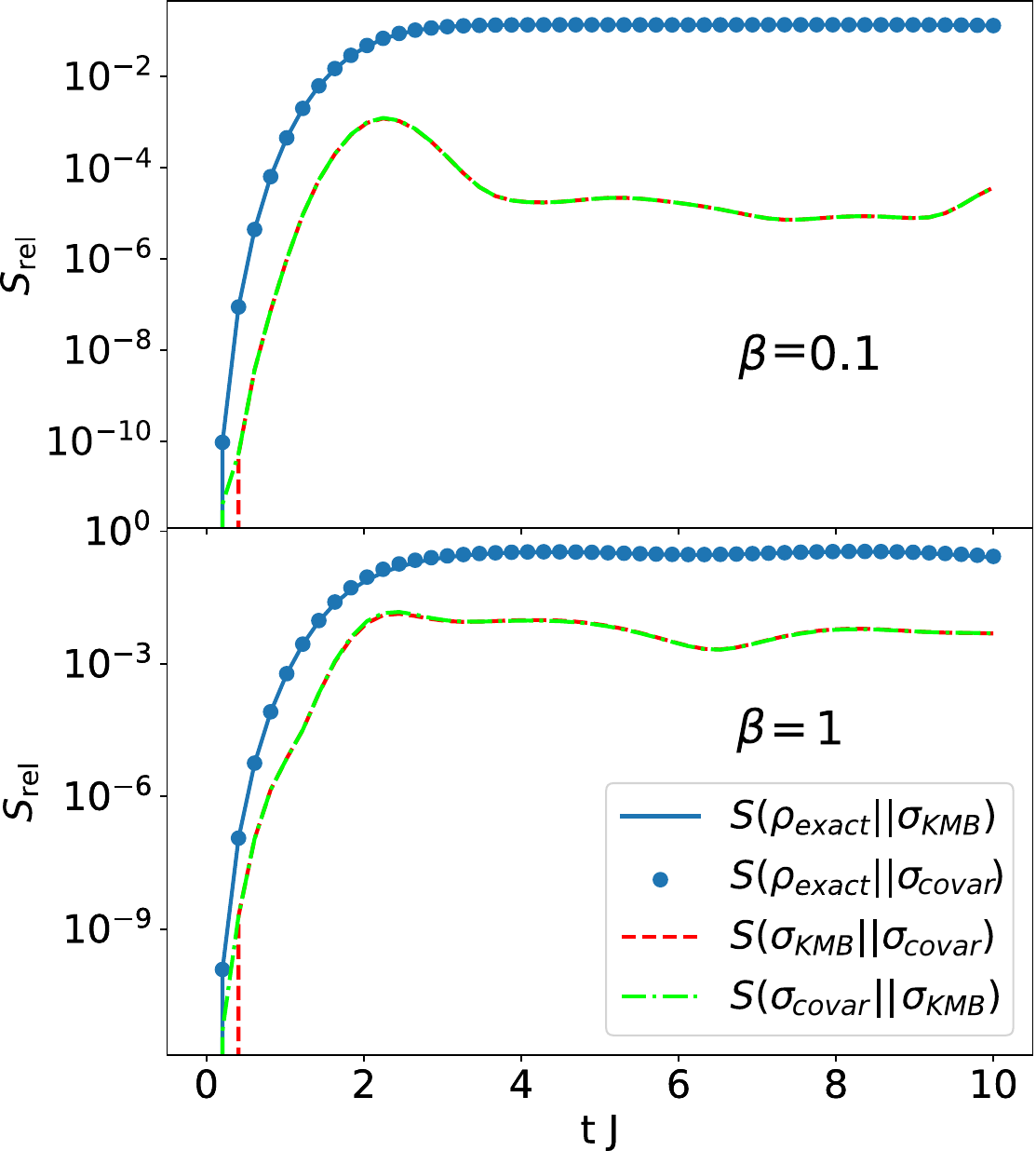}
    \caption{\label{fig:relent}
    Relative entropies between exact and projected states, and between the projected states at inverse temperatures $\beta = 0.1$ (top) and $\beta = 1$ (bottom).
Once again, for short-term evolutions, the three frameworks yield very similar results. 
For longer-term evolutions, the relative entropies saturate. }
\end{figure}

In \cref{fig:relent}, the evolution of the relative entropies are depicted for two different temperatures.
Consistent with the findings presented in \cref{fig:KMBKnorms}, it can be observed that for short-term evolutions, the relative entropies between different states exhibit very small values, indicating a high level of similarity among these states. However, for larger times, the relative entropies between the exact and projected states become more noticeable, eventually reaching a saturation point. This behavior aligns closely with the trends observed in the geometric distances between the three classes of states, as shown in \cref{fig:KMBKnorms}. Furthermore, it is worth noting that the relative entropies between the KMB-projected and correlation-projected states remain consistently negligible throughout the entire evolution, further underscoring the strong agreement between these two frameworks.\smallbreak

\subsection{Projected vs restricted Dynamics}

So far, the comparison has been focused on the exact (free) dynamic and its KMB and \, \emph{covar} projections over the Max-Ent manifold. 
Let's compare them, now, against the solutions to the restricted equation of motion (see \cref{eq:coeffs}) obtained from the KMB and \, \emph{covar} instantaneous projections, computed using the orthogonal expansion of \cref{eq:piasbesselfourier}.

\cref{fig:exactvsconstrained} illustrates the KMB-induced norm between the state of the system (top) and the expectation value of the ${\bf x}$ operator (bottom) for both the exact and projected dynamics and the KMB/ \, \emph{covar} restricted dynamics.

\begin{figure}
\includegraphics[clip,width=\linewidth]{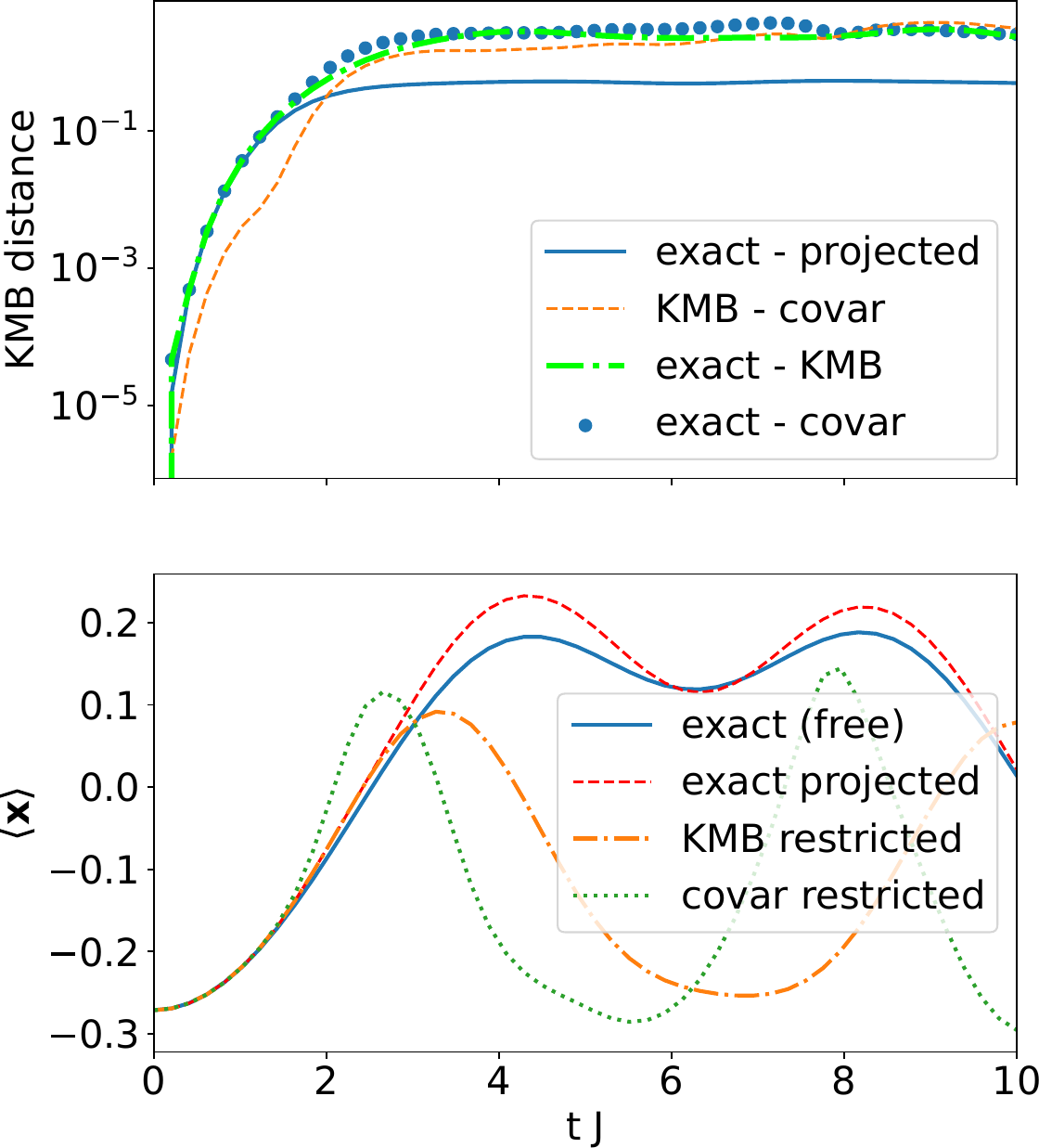}
\caption{Comparison of the KMB-induced distance between the state, the projected state, and the state arising from the restricted evolution according to the KMB and  \, \emph{covar} geometries (top). 
Evolution of the expectation value of the ${\bf x}$ operator for the corresponding states for $\beta=1$ (bottom). 
For short-term evolutions, lasting less than $tJ=2$, all four evolutions yield highly similar results. Subsequently, noticeable distinctions become apparent in the outcomes.
}\label{fig:exactvsconstrained}
\end{figure}

As is expected, for short times, the exact free dynamic is asymptotically close to both the KMB projected and the restricted dynamics, disregarding the choice of the projectors. For longer times, the behavior of the restricted dynamics is similar among them, being sometimes the covariance-restricted evolution is slightly closer to the exact dynamics than the KMB. In the figure, it happens at $tJ\approx 8$. In all the cases, as expected from \cref{eq:linprojMinCondition}, the KMB linear projection is always closer to the exact than any one of the restricted evolutions. On the other hand, the instantaneous states obtained from the restricted evolutions are typically closed between them than to the exact state. 

Similar conclusions can also be drawn concerning the expectation value of the position operator: In this case, the projected dynamic reproduces its behavior more closely than the restricted evolutions, both in phase and amplitude. Width and distances between peaks and valleys in the expectation value also present differences among the exact and the different restricted evolutions but keep a qualitative agreement.

Notice however that the computational cost of solving \eqref{eq:restricted} with the KMB projector is much larger than the required effort to solve it using the  \, \emph{covar} projector. 
For example, to compute the values depicted in \cref{fig:exactvsconstrained}, solving the KMB restricted dynamics involved 36hs of computations, against the 10 minutes required to solve the same equation for the \, \emph{covar} projection \footnote{All the computations implemented using Qutip 4.7.1 and ODE solver of Scipy 1.10.1 over Python 3.11, on a Intel i5-12400 core.}.

While not addressed in depth in this article, it is worth mentioning that both the  \, \emph{covar}-restricted and KMB-restricted evolutions could be enhanced in terms of computational efficiency by approximating the instantaneous reference state with product states by means of a mean-field approximation of the instantaneous state. Additionally, the \, \emph{covar} enables the introduction of additional correlations using separable states as reference states.
These approximations could allow us to solve much larger problems without significant overhead. However, the aim here is to compare the effect of the different choices of projections, which could be masked by these further simplifications.

On the other hand, to understand and quantify the origin of the deviation between the exact free evolution and the projected and restricted evolutions, it is worth inspecting the effect of the projections over the expansion of ${\bf K}$ as a power series on $t$. As we have mentioned before, by including the basis elements $B_{L}$, see \cref{eq:iteraterB}, the free, projected, and restricted evolutions coincide up to ${\cal O}(t^L)$. The restricted evolution case can be seen as an approximation of the exact evolution that consists of inferring the $L+1$ order time-derivative of the observables defining the state in terms of the lower-order derivatives. 
\cref{fig:obs_Hk5_ev} depicts how the estimation of this derivative obtained by projection departs from the obtained from the free evolution. The results of the simulations suggest that the estimation resulting from the KMB and \, \emph{covar} linearized projections are similar in accuracy, which allows us to choose just the more computationally convenient.  
\smallbreak

\section{Discussion}\label{sec:discussion}

In this article, we discuss a framework for constructing stable and efficient approximations for Max-Ent states and the associated dynamics.
To achieve this, we reformulate the Max-Ent optimization problem in terms of nonlinear projecting functions and their linear approximations, in the neighborhood of the corresponding Max-Ent manifold.
The requirement of local exactness in the linearized projection leads to identifying it with the orthogonal projection associated with the KMB metric space. 
The substantial computational cost in the evaluation of this projection motivates us to reexamine the relevant features of the KMB geometry and propose a different linearization scheme, which was the main subject of this work. 

Following this line, the  \, \emph{covar} geometry (\cref{subsec:geometrycov} ) was introduced. 
This scalar product shares fundamental characteristics with the KMB product, while offering analytical properties that allow a more stable and efficient evaluation, and for a larger class of reference states, such as the separable states. 
This stems from the fact that its evaluation does not rely on the explicit evaluation of the spectral decomposition of the reference state.

Thereafter, analytical relations and bounds between the induced geometries of both products were studied.
In particular, the equivalence of the induced orthogonal projections, for Max-Ent manifolds of product and Gaussian states, was shown. 
It has also been proven that the self-consistent and time-dependent mean-field approximation can be equivalently expressed in terms of orthogonal projectors, defined with either of the two scalar products. 

Based on these results, the application of this formalism to the study of the dynamics of closed quantum systems projected on Max-Ent manifolds was considered, and its approximation by the restricted dynamics on said manifolds. Also, convergency criteria among the exact, the Max-Ent projected and the proposed restricted evolutions were discussed.  This led to expressions analog to those discussed in \cite{BAR.86}, but regarding the \, \emph{covar} product. 

As an application, the dynamics of excitations over a spin chain evolving with a Heisenberg XX Hamiltonian were analyzed.

Similarities and differences between the evolved state and the result of applying various linear projectors over it were investigated. 
As expected, for the considered cases, it was observed that, according to the relative entropy, for short times, both schemes of projection are indistinguishable from the free evolving state, while at longer times, the deviation between the projected and exact dynamics reached saturation values, significantly larger than the discrepancies between the two projections. 
The estimation of expectation values presents a consistent behavior with the distinguishability measures.

Another noteworthy observation in these plots is that the expectation values obtained under the KMB projection, are not consistently closer to those obtained from the original state than the values obtained from the \, \emph{covar} projection. 
This observation may initially appear counterintuitive, since as the KMB-based projection is locally exact, it would be the best approximation to the true projector. 
However, the regions where this inversion in the expected order happens coincide with the parts of the trajectory in which both approximations are furthest from the exact value. 
This suggests that for states not-too-close to the Max-Ent manifold, i.e, the region where non-perturbative effects happen,
both linear projectors provide similarly robust approximations. 

Subsequently, the free dynamics, its KMB projection, and the restricted dynamics regarding KMB and \, \emph{covar} projections were compared.
Once again, it was observed that the restricted dynamics are qualitatively similar and converge asymptotically for short times.
Nevertheless, the dynamics diverge at longer times, until a saturation distance is reached. 
Additionally, it is noted that although the plots of the expectation values exhibit similar behaviors, the observed oscillations in the graphs are noticeably different. 
This observation aligns with the discussion about the restricted dynamics for an individual spin.
Furthermore, it was found that the restricted dynamics obtained from both geometries are remarkably more similar to each other than either of them in the exact projection. 
This supports the hypothesis that the \emph{covar}-restricted dynamics is a good approximation to the KMB dynamics.

These findings suggest that, while the KMB orthogonal projection represents the actual, consistent local linearization of the Max-Ent projection, the {\color{magenta} \, \emph{covar}} orthogonal projection can yield similar results with less computational cost. 
Moreover, when the exact dynamics are not strictly confined in the close neighborhood of the Max-Ent manifold, both approximations to the Max-Ent projection are similarly good.

To make this approach suitable for efficiently computing quantum simulations of larger quantum systems, the next step would involve replacing the reference states with more efficient approximations of the instantaneous, correlated states. 
Exploring such alternatives is the focus of our upcoming work, which is currently in preparation.

Another aspect that deserves future work is how this formalism can be applied to open quantum systems. For the sake of clarity, in this work, all the discussion was constrained to the case of closed quantum systems, which do not present non-trivial fixed points. This avoided making stronger statements about the convergence in the long-time dynamics. In return, it helped to highlight aspects related to the effect of the choice of the geometry and the linearized projections in the Max-Ent approximate dynamics, the main topic of the article. The relevant case of open dynamics is going to be addressed in forthcoming research.

\section{Acknowledgments}

The authors would like to thank Joaquín Pelle for interesting discussions on the topology of quantum manifolds, Matías Bilkis for useful remarks about this work, Tomás Crosta for interesting discussions concerning the numerical aspects of our work, as well.
The authors acknowledge support from CONICET of Argentina.
Work supported by CONICET PIP Grant No. 11220200101877CO.

\begin{appendix} 

\section{Proof of \cref{prop:Max-Ent_are_gibbs}}\label{subsec:mfgibbs}

To provide a self-contained presentation, in this section, a proof of \cref{prop:Max-Ent_are_gibbs}, which can be found in other references (see for instance \cite{Jaynes1, Balian.1991}), is reproduced.

\begin{proof}
Given that $\sigma^\star \in {\cal B}(\hilbert)$, an open set, and given that both the target function and the constraints are continuous, differentiable functions, $\sigma^{\star}$ must satisfy the stationary condition, 

$$
    \delta\left. \bigg(S(\sigma)-\sum_{\alpha=0}^N \lambda_{\alpha} C_{\alpha}(\sigma)\bigg)\right|_{\sigma=\sigma^\star}=0,
$$

\noindent where  $\lambda_{\alpha}$ are Lagrange multipliers reinforcing the conditions  $C_{\alpha}(\sigma)=\Tr\sigma {\bf Q}_{\alpha}-\mu_\alpha=0$ that define $\compatstates{B}{\mu_\alpha}$. 
For simplicity, ${\bf Q}_{0}={\mathbf{id}_{\hilbert}}$ is included in $\mathcal{A}$, 
in a way that the normalization of $\sigma$ is fixed by taking $\langle \mathbf{id}_{\hilbert} \rangle=\mu_0=1$. 
Notice again, however, that ${\bf Q}_0$ is not a true observable.

Using the identity
$$
    \langle i|\delta \log(\sigma)|j\rangle=  \frac{\log(p_i)-\log(p_j)}{p_i-p_j}  \langle i |\delta \sigma|j\rangle,
$$

where $\sigma=\sum_i p_i |i\rangle\langle i|$, it follows that

$$
    \delta S(\sigma) = -\Tr\delta \sigma \log(\sigma)  - \Tr \delta \sigma, 
$$

and hence,

$$
    \left.  -\Tr\left[\delta \sigma \left(\log(\sigma)+ (\lambda_{0}+1)  + \sum_{\alpha=1}^N \lambda_{\alpha} {\bf Q}_{\alpha}\right)\right]\right|_{\sigma=\sigma^\star}=0
$$
As a result, it follows that
$$
\sigma^\star = e^{-{\bf K}-(\lambda_0+1)}=\frac{e^{-{\bf K}}}{\Tr e^{-{\bf K}}},
$$
with ${\bf K}=\sum_{\alpha=1}^N \lambda_{\alpha}{\bf Q}_{\alpha}\in {\cal A}_B$ and $\lambda_0=\log(\Tr e^{-{\bf K}})-1$ to fix the normalization.

\end{proof}

\section{Properties of the KMB product}

This section presents a compilation of properties pertaining to the KMB scalar product.

\begin{lem} \underline{The \textnormal{KMB} scalar product.}
Let $\sigma>0$ be a normalized density operator s.t. $\Tr \sigma=1$.
Then

$$
    ({\bf A},{\bf B})_\sigma^{\rm KMB} = \int_0^1 d\tau \, \Tr \sigma^{1-\tau}{\bf A}^\dagger \sigma^\tau {\bf B},
$$

is a scalar product.
\end{lem}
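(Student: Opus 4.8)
The plan is to verify the three defining axioms of a complex inner product: sesquilinearity, conjugate symmetry, and positive definiteness. Sesquilinearity is immediate and I would dispose of it in a single line: the map $({\bf A},{\bf B})\mapsto ({\bf A},{\bf B})_\sigma^{\rm KMB}$ is linear in ${\bf B}$ and antilinear in ${\bf A}$, since the trace and the $\tau$-integral are linear and ${\bf A}$ enters only through ${\bf A}^\dagger$.

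For conjugate symmetry I would use $(\Tr X)^*=\Tr X^\dagger$ together with the Hermiticity of $\sigma^\tau$ for real $\tau$ (which follows from $\sigma=\sigma^\dagger\geq 0$). Taking the adjoint of the integrand gives $(\sigma^{1-\tau}{\bf A}^\dagger\sigma^\tau{\bf B})^\dagger={\bf B}^\dagger\sigma^\tau{\bf A}\sigma^{1-\tau}$, and one application of the cyclic property of the trace rewrites this as $\sigma^{1-\tau}{\bf B}^\dagger\sigma^\tau{\bf A}$. Integrating over $\tau$ then yields $({\bf A},{\bf B})^{*}=({\bf B},{\bf A})$, as required.

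The substantive step, and the one I expect to be the main obstacle, is strict positivity. Here I would diagonalize $\sigma=\sum_i p_i\ket{i}\bra{i}$ with all $p_i>0$ (using $\sigma>0$) and expand ${\bf A}$ in this eigenbasis, $A_{ij}=\bra{i}{\bf A}\ket{j}$. Inserting the spectral decompositions of $\sigma^{1-\tau}$ and $\sigma^{\tau}$ collapses the trace to
\[
\Tr\,\sigma^{1-\tau}{\bf A}^\dagger\sigma^\tau{\bf A}=\sum_{i,k}p_i^{1-\tau}p_k^{\tau}\,\abs{A_{ki}}^2,
\]
so that integrating over $\tau$ gives
\[
({\bf A},{\bf A})_\sigma^{\rm KMB}=\sum_{i,k}L(p_i,p_k)\,\abs{A_{ki}}^2,\qquad L(p_i,p_k)=\int_0^1 p_i^{1-\tau}p_k^{\tau}\,d\tau.
\]
The weight $L(p_i,p_k)$ is the logarithmic mean of $p_i$ and $p_k$: it equals $p_i$ when $p_i=p_k$, and $(p_k-p_i)/(\log p_k-\log p_i)$ otherwise. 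Since all $p_i>0$, every $L(p_i,p_k)>0$, so the quadratic form is a strictly positive combination of the moduli $\abs{A_{ki}}^2$.

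The one point requiring care is ensuring these logarithmic-mean weights are genuinely positive, so that non-degeneracy is not lost: this is precisely where the hypothesis $\sigma>0$ enters, since a vanishing eigenvalue would produce a zero weight and allow a nonzero ${\bf A}$ supported on the kernel to have zero norm. With $\sigma$ strictly positive this cannot occur, and one concludes $({\bf A},{\bf A})_\sigma^{\rm KMB}\geq 0$ with equality iff every $A_{ki}=0$, i.e.\ iff ${\bf A}=0$. Combining the three axioms establishes that $(\cdot,\cdot)_\sigma^{\rm KMB}$ is an inner product.
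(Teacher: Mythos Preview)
Your proof is correct and follows essentially the same route as the paper: sesquilinearity is dismissed as evident, and positive definiteness is established by diagonalizing $\sigma$ and showing that the quadratic form $({\bf A},{\bf A})_\sigma^{\rm KMB}$ reduces to a sum of $|A_{ki}|^2$ weighted by the strictly positive logarithmic means $L(p_i,p_k)=(p_k-p_i)/(\log p_k-\log p_i)$ (the paper phrases this as the Gram matrix being diagonal with these entries in the basis $\{\ket{i}\bra{j}\}$). Your treatment is slightly more complete in that you verify conjugate symmetry within the same argument and explicitly handle the degenerate case $p_i=p_k$, whereas the paper defers the symmetry property to a separate lemma.
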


\begin{proof}
Linearity in ${\bf B}$ and antilinearity in ${\bf A}$ is evident. 
In order to show the positivity, let's consider the following basis on ${\cal A}$, $\{{\bf b}_{ij}/{\bf b}_{ij}=|i\rangle\langle j|\}$ s.t. $\sigma |i\rangle = p_i |i\rangle$.
Then, the Gram's matrix

\begin{eqnarray*}
[{\cal G}_{\sigma(t)}^{\rm KMB}]_{ij,kl}&=&({\bf b}_{ij},{\bf b}_{kl})_{\sigma}^{\rm KMB}=\int_0^{1} d\tau \, \Tr \sigma^{1-\tau} {\bf b}_{ij}^\dagger \sigma^{\tau}{\bf b}_{kl} \\
&=&\int_0^1 d\tau \, \langle j|l\rangle\langle i|k\rangle  p_i (p_j/p_i)^{\tau}\\
&=&\langle j|l\rangle\langle i|k\rangle \frac{p_j-p_i}{\log(p_j/p_i)},
\end{eqnarray*}
is diagonal in this basis, with positive entries $\frac{p_j-p_i}{\log(p_j/p_i)}$, in a way that the associated form is positive definite.
\end{proof}

\begin{lem}\label{lem:KMBandexpectationvalues}
For any $\sigma$, the associated $\textnormal{KMB}$ scalar product satisfies, for any operator ${\bf Q}\in{\cal A}$,

$({\bf id}_\hilbert,{\bf Q})_\sigma=\Tr \sigma {\bf Q}=\langle {\bf Q}\rangle_\sigma$. 

\noindent In particular, if ${\bf Q}={\bf id}_\hilbert$, then $({\bf id}_\hilbert,{\bf id}_\hilbert)=\Tr \sigma=1$.
\end{lem}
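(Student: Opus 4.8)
The plan is to substitute ${\bf Q}_1={\bf id}_\hilbert$ and ${\bf Q}_2={\bf Q}$ directly into the integral definition \cref{eq:KWBdef1} of the KMB scalar product and simplify the resulting integrand. First I would use ${\bf id}_\hilbert^\dagger={\bf id}_\hilbert$ to write
\[
({\bf id}_\hilbert,{\bf Q})_\sigma^{\rm KMB}=\int_0^1 \Tr[\sigma^{1-\tau}\,{\bf id}_\hilbert\,\sigma^{\tau}{\bf Q}]\,d\tau=\int_0^1 \Tr[\sigma^{1-\tau}\sigma^{\tau}{\bf Q}]\,d\tau.
\]

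The key step is the observation that $\sigma^{1-\tau}$ and $\sigma^{\tau}$ are functions of the single operator $\sigma$, so they commute and satisfy $\sigma^{1-\tau}\sigma^{\tau}=\sigma^{(1-\tau)+\tau}=\sigma$ for every $\tau$. This collapses the integrand to the $\tau$-independent quantity $\Tr[\sigma{\bf Q}]$, and since $\int_0^1 d\tau=1$ I obtain
\[
({\bf id}_\hilbert,{\bf Q})_\sigma^{\rm KMB}=\Tr[\sigma{\bf Q}]=\langle{\bf Q}\rangle_\sigma,
\]
which is the assertion. Note that no appeal to the cyclicity of the trace is required here, because the two powers of $\sigma$ combine \emph{before} the trace is taken. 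The particular case then follows immediately by setting ${\bf Q}={\bf id}_\hilbert$, giving $({\bf id}_\hilbert,{\bf id}_\hilbert)_\sigma^{\rm KMB}=\Tr\sigma$, which equals $1$ whenever $\sigma$ is normalized.

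Honestly, there is no real obstacle in this statement: the computation is immediate once the composition law for the powers of $\sigma$ is used. The only point deserving a line of care is the well-definedness of the fractional powers $\sigma^{\tau}$ and their multiplicativity $\sigma^{a}\sigma^{b}=\sigma^{a+b}$, both of which are guaranteed by $\sigma\geq 0$ through a common spectral decomposition $\sigma=\sum_i p_i|i\rangle\langle i|$. If one prefers a fully explicit check, the identity can be verified termwise on this eigenbasis, where $\sigma^{1-\tau}\sigma^{\tau}|i\rangle=p_i^{1-\tau}p_i^{\tau}|i\rangle=p_i|i\rangle$, so that $\sigma^{1-\tau}\sigma^{\tau}=\sigma$ as operators, reproducing the same conclusion.
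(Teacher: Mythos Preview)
Your proof is correct and follows exactly the approach the paper intends: the paper's own proof consists of the single line ``The proof is straightforward from the definition,'' and your substitution of ${\bf id}_\hilbert$ into \cref{eq:KWBdef1} together with $\sigma^{1-\tau}\sigma^{\tau}=\sigma$ is precisely that straightforward computation, just written out in full.
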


\begin{proof}
The proof is straightforward from the definition.
\end{proof}

\begin{lem}\label{lem:KMBisRealSP}
The \textnormal{KMB} scalar product, regarding the state $\sigma$, is a \emph{real} scalar product

$$
({\bf O},{\bf Q})^{\rm KMB}_\sigma = \left(({\bf O}^\dagger,{\bf Q}^\dagger)^{\rm KMB}_\sigma\right)^*.
$$
\end{lem}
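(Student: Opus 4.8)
The plan is to verify the claimed identity by directly manipulating the integral representation \eqref{eq:KWBdef1}, using only three ingredients: the relation $(\Tr {\bf A})^*=\Tr {\bf A}^\dagger$, the Hermiticity of the powers $\sigma^\tau$ for a density operator $\sigma=\sigma^\dagger\ge 0$ and real $\tau$, and the cyclicity of the trace. This is essentially the reality condition \eqref{eq:realcond} specialized to the KMB product, so the work is a short symmetry computation rather than a conceptual argument.

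First I would write out the right-hand side explicitly. Substituting ${\bf O}^\dagger$ and ${\bf Q}^\dagger$ into \eqref{eq:KWBdef1} and using $({\bf O}^\dagger)^\dagger={\bf O}$ gives $({\bf O}^\dagger,{\bf Q}^\dagger)^{\rm KMB}_\sigma=\int_0^1 \Tr[\sigma^{1-\tau}{\bf O}\,\sigma^\tau{\bf Q}^\dagger]\,d\tau$. Next I would take the complex conjugate under the integral sign (the integrand is continuous in $\tau$, so conjugation and integration commute) and replace $(\Tr {\bf A})^*$ by $\Tr {\bf A}^\dagger$. Since $\sigma^{1-\tau}$ and $\sigma^\tau$ are Hermitian, the adjoint of the product reverses the order of the factors while leaving the $\sigma$-powers untouched, so the integrand becomes $\Tr[{\bf Q}\,\sigma^\tau{\bf O}^\dagger\sigma^{1-\tau}]$.

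The final step is to recognize this as the integrand of $({\bf O},{\bf Q})^{\rm KMB}_\sigma$. Applying cyclicity of the trace together with the change of variable $\tau\mapsto 1-\tau$ (which maps $[0,1]$ onto itself) converts $\int_0^1 \Tr[{\bf Q}\,\sigma^\tau{\bf O}^\dagger\sigma^{1-\tau}]\,d\tau$ into $\int_0^1\Tr[\sigma^{1-\tau}{\bf O}^\dagger\sigma^\tau{\bf Q}]\,d\tau$, which is precisely $({\bf O},{\bf Q})^{\rm KMB}_\sigma$. This establishes the claim.

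I do not expect a genuine obstacle here, as the manipulation is routine. The only point requiring mild care is the identity $(\Tr {\bf A})^*=\Tr {\bf A}^\dagger$, which is immediate in finite dimensions but in the infinite-dimensional (e.g.\ bosonic) setting presupposes that the operators involved are trace-class so that all traces converge; this is guaranteed by $\sigma$ being a normalized density operator together with the standing assumption that the relevant observables have finite KMB norm.
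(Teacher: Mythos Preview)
Your proof is correct and follows essentially the same route as the paper: cyclicity of the trace combined with the substitution $\tau\mapsto 1-\tau$. The only cosmetic difference is that the paper first shows $({\bf O},{\bf Q})_\sigma^{\rm KMB}=({\bf Q}^\dagger,{\bf O}^\dagger)_\sigma^{\rm KMB}$ and then invokes the conjugate-symmetry axiom of inner products, whereas you compute the complex conjugate directly via $(\Tr {\bf A})^*=\Tr {\bf A}^\dagger$; the underlying manipulations are identical.
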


\begin{proof}

Using the cyclic property of the trace,

\begin{eqnarray*}
  ({\bf O},{\bf Q})_\sigma^{\rm KMB}&=&\int_0^{1} d\tau \,\Tr(\sigma^{1-\tau}{\bf O}^\dagger \sigma^{\tau}{\bf Q})\\
                                  &=&\int_0^{1} d\tau \,\Tr(\sigma^{\tau}{\bf Q}\sigma^{1-\tau}{\bf O}^\dagger )\\
                                    &=&\int_0^{1} d\tau \,\Tr(\sigma^{1-\tau}{\bf Q}\sigma^{\tau}{\bf O}^\dagger )\\
  &=&({\bf Q}^\dagger,{\bf O}^\dagger)_\sigma^{\rm KMB} =(({\bf O}^\dagger,{\bf Q}^\dagger)_\sigma^{\rm KMB})^*,
\end{eqnarray*}

meaning that $(\cdot,\cdot)_\sigma^{\rm KMB}$ is a \emph{real} scalar product.
\end{proof}

\begin{lem}
Let $(\cdot,\cdot)$ a \emph{real} scalar product,  and  ${\bf A}={\bf A}^\dagger, \,{\bf B}={\bf B}^\dagger\in {\cal A}$. Then, 
$
({\bf A},{\bf B})=({\bf A},{\bf B})^*\in \mathbb{R}.
$
\end{lem}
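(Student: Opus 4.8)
The plan is to obtain the claim as an immediate specialization of the reality condition \cref{eq:realcond}, which by hypothesis the scalar product satisfies. Recall that a \emph{real} scalar product is precisely one for which
\begin{equation*}
    ({\bf A},{\bf B})^*=({\bf A}^\dagger,{\bf B}^\dagger)
\end{equation*}
holds for all ${\bf A},{\bf B}\in{\cal A}$; this is exactly the property established for the KMB product in \cref{lem:KMBisRealSP} and is the defining feature being assumed here.

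First I would specialize this identity to self-adjoint arguments. Since ${\bf A}={\bf A}^\dagger$ and ${\bf B}={\bf B}^\dagger$ by assumption, the right-hand side $({\bf A}^\dagger,{\bf B}^\dagger)$ coincides with $({\bf A},{\bf B})$, so the reality condition collapses to
\begin{equation*}
    ({\bf A},{\bf B})^*=({\bf A},{\bf B}).
\end{equation*}
A complex number equal to its own complex conjugate has vanishing imaginary part, hence $({\bf A},{\bf B})\in\mathbb{R}$, which is precisely the assertion.

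Because the statement amounts to little more than unpacking the definition of a real scalar product, I do not expect any genuine obstacle. The only point requiring attention is to invoke the correct equality within \cref{eq:realcond} — namely the relation tying the complex conjugate to the daggered arguments — and then to use the self-adjointness of \emph{both} operators to eliminate the daggers; reality then follows at once.
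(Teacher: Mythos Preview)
Your proposal is correct and follows essentially the same approach as the paper: apply the defining reality condition $({\bf A},{\bf B})^*=({\bf A}^\dagger,{\bf B}^\dagger)$, substitute ${\bf A}^\dagger={\bf A}$ and ${\bf B}^\dagger={\bf B}$, and conclude that $({\bf A},{\bf B})$ equals its own conjugate. The paper's proof is the one-line version of exactly this argument.
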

\begin{proof}
$$({\bf A},{\bf B})=({\bf A}^\dagger,{\bf B}^\dagger)=({\bf A},{\bf B})^* \Rightarrow ({\bf A},{\bf B})\in \mathbb{R}.$$
\end{proof}

\subsection{Orthogonal projections and real scalar products}\label{app:realsp}
That $\pi$ is an orthogonal projection w.r.t. a real scalar product is an important property because, assuming that its image is closed under $\dagger$, it means that $\pi$ maps self-adjoint operators onto self-adjoint operators.  

\begin{lem}\label{lem:selfadjointpi}
Let $\pi$ a linear orthogonal projector w.r.t a real scalar product $(\cdot,\cdot)$ s.t.  $\pi((\pi {\bf A})^\dagger)=(\pi {\bf A})^\dagger$ for any ${\bf A}\in{\cal A}$. 
Then $\pi ({\bf A}^\dagger)=(\pi{\bf A})^\dagger$, for any ${\bf A}\in{\cal A}$.
\end{lem}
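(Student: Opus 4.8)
The plan is to exhibit $(\pi\mathbf{A})^\dagger$ as the orthogonal projection of $\mathbf{A}^\dagger$ and then invoke uniqueness of the orthogonal decomposition. Write $\mathcal{V}=\pi(\mathcal{A})$ for the range of $\pi$. The hypothesis $\pi((\pi\mathbf{A})^\dagger)=(\pi\mathbf{A})^\dagger$ says precisely that $\mathcal{V}$ is closed under $\dagger$: every element of $\mathcal{V}$ has the form $\pi\mathbf{A}$, and the hypothesis states that its adjoint is fixed by $\pi$, hence lies back in $\mathcal{V}$. In particular $(\pi\mathbf{A})^\dagger\in\mathcal{V}$, so it is a legitimate candidate for the projection of $\mathbf{A}^\dagger$.

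First I would recall the variational characterization of an orthogonal projector: $\pi\mathbf{O}$ is the unique element of $\mathcal{V}$ such that $\mathbf{O}-\pi\mathbf{O}$ is orthogonal to all of $\mathcal{V}$, i.e.\ $(\mathbf{Q},\mathbf{O}-\pi\mathbf{O})=0$ for every $\mathbf{Q}\in\mathcal{V}$ (cf.\ the defining property in \cref{eq:ortprojdef}). Hence it suffices to check that $\mathbf{A}^\dagger-(\pi\mathbf{A})^\dagger$ is orthogonal to $\mathcal{V}$; uniqueness of the decomposition then forces $\pi(\mathbf{A}^\dagger)=(\pi\mathbf{A})^\dagger$.

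The core computation takes an arbitrary $\mathbf{Q}\in\mathcal{V}$ and evaluates $(\mathbf{Q},\mathbf{A}^\dagger-(\pi\mathbf{A})^\dagger)$. The main technical step is to move the daggers off the second argument using the reality condition \cref{eq:realcond}, which (writing $\mathbf{Q}=(\mathbf{Q}^\dagger)^\dagger$ and applying $(\mathbf{U}^\dagger,\mathbf{V}^\dagger)=(\mathbf{U},\mathbf{V})^*$) turns each $(\mathbf{Q},\mathbf{X}^\dagger)$ into $(\mathbf{Q}^\dagger,\mathbf{X})^*$. This yields
\[
(\mathbf{Q},\mathbf{A}^\dagger-(\pi\mathbf{A})^\dagger)=\big[(\mathbf{Q}^\dagger,\mathbf{A})-(\mathbf{Q}^\dagger,\pi\mathbf{A})\big]^*.
\]
Now the closure of $\mathcal{V}$ under $\dagger$ is used a second time: since $\mathbf{Q}\in\mathcal{V}$ we have $\mathbf{Q}^\dagger\in\mathcal{V}$, and the orthogonality property of $\pi$ (\cref{eq:ortprojdef} with test vector $\mathbf{Q}^\dagger$) gives $(\mathbf{Q}^\dagger,\mathbf{A})=(\mathbf{Q}^\dagger,\pi\mathbf{A})$, so the bracket vanishes. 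As $\mathbf{Q}\in\mathcal{V}$ was arbitrary, $\mathbf{A}^\dagger-(\pi\mathbf{A})^\dagger\perp\mathcal{V}$, which completes the argument.

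I expect the only real obstacle to be the bookkeeping: the adjoint $\dagger$ is antilinear while the scalar product is sesquilinear, so the complex conjugations introduced by \cref{eq:realcond} must be tracked carefully, and one must remember to invoke the hypothesis twice — once to know $(\pi\mathbf{A})^\dagger\in\mathcal{V}$, and once to know $\mathbf{Q}^\dagger\in\mathcal{V}$ for the test vector. An equivalent route, should one prefer to avoid the explicit orthogonality check, is to verify that $\tilde\pi(\mathbf{A}):=(\pi(\mathbf{A}^\dagger))^\dagger$ is again a linear, idempotent operator that is self-adjoint with respect to the real product and has the same range $\mathcal{V}$ as $\pi$, and then to appeal to the uniqueness of the orthogonal projector onto $\mathcal{V}$ to conclude $\tilde\pi=\pi$.
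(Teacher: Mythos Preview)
Your argument is correct and is arguably more direct than the paper's. The paper proceeds by first decomposing $\mathbf{A}=\mathbf{A}_{+}+{\bf i}\mathbf{A}_{-}$ into self-adjoint parts and reducing the claim to the statement that $\pi$ maps self-adjoint operators to self-adjoint operators; it then establishes this by expanding $\|\pi\mathbf{A}-(\pi\mathbf{A})^\dagger\|^2$ and showing it vanishes via the reality condition together with the hypothesis $\pi((\pi\mathbf{A})^\dagger)=(\pi\mathbf{A})^\dagger$. Your route bypasses the Hermitian decomposition entirely: you verify directly that $(\pi\mathbf{A})^\dagger$ satisfies the variational characterization of $\pi(\mathbf{A}^\dagger)$, using the reality condition once to transfer the dagger across the scalar product and the closure of $\mathcal{V}$ under $\dagger$ twice, exactly as you flagged. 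The paper's approach has the minor conceptual payoff of isolating the self-adjoint case as the essential content (which ties into the surrounding discussion of real observables), while yours is shorter, handles the general $\mathbf{A}$ in one stroke, and makes the role of the uniqueness of orthogonal projection explicit. Both invoke the same two ingredients, just organized differently.
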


\begin{proof}
Every operator in ${\cal A}\in{\cal A}$ admits a decomposition  
\begin{eqnarray}\label{eq:sadecomposition}
{\bf A}&=&{\bf A}_{+}+{\bf i}{\bf A}_{-}\\
{\bf A}_{\pm}&=&\frac{{\bf A}\pm{\bf A}^\dagger}{2\sqrt{\pm 1}}={\bf A}_\pm^\dagger \in{\cal A}\,.
\end{eqnarray} 
Then, using the linearity of $\pi$,
$$
\pi {\bf A}^\dagger = (\pi {\bf A}_+)+{\bf i}(\pi {\bf A}_-)=  (\pi {\bf A})^\dagger \Leftrightarrow \pi {\bf A}_{\pm}=(\pi {\bf A}_{\pm})^\dagger,
$$

Hence, the proof is reduced to show that for ${\bf A}={\bf A}^\dagger \in {\cal A}$, $\pi {\bf A}-(\pi {\bf A})^\dagger=0$ or, using the positivity of the scalar product,
\begin{eqnarray*}
0 &=& (\pi {\bf A}-(\pi {\bf A})^\dagger,\pi {\bf A}-(\pi {\bf A})^\dagger)\\
   &=&(\pi {\bf A},\pi{\bf A})+((\pi {\bf A})^\dagger,(\pi{\bf A})^\dagger) -2 \Re (\pi{\bf A},(\pi {\bf A})^\dagger)\\
   &=&(\pi{\bf A},\pi{\bf A})+((\pi {\bf A}),(\pi{\bf A}))^* -2 \Re (\pi{\bf A},(\pi {\bf A})^\dagger)\\
   &=&2 \Re\left( ({\bf A}, \pi{\bf A})- ({\bf A},\pi(\pi {\bf A})^\dagger)\right),
\end{eqnarray*}
where, in the third line, we use the property of reality of the scalar product to rewrite the second term, and in the last line the orthogonality of $\pi$ regarding the scalar product.
Finally, using the hypothesis $\pi (\pi {\bf A})^\dagger =(\pi {\bf A})^\dagger$,
\begin{eqnarray*}
0 &=& \Re\left( ({\bf A}, \pi{\bf A})- ({\bf A}, (\pi {\bf A})^\dagger)\right)\\
  &=& \Re\left( ({\bf A}, \pi{\bf A})- ({\bf A}^\dagger, \pi {\bf A})^*\right)\\
  &=& \Re\left( ({\bf A}, \pi{\bf A})- ({\bf A}, \pi {\bf A})^*\right)\\
  &=&\Re\left( {\bf i}\Im({\bf A}, \pi{\bf A})\right)=0.
\end{eqnarray*}
\end{proof}
\begin{obs}
The condition $\pi(\pi {\bf A})^\dagger=(\pi {\bf A})^\dagger$ is equivalent to asking that, for any ${\bf A}\in{\cal A}$, 
${\bf A}\in\pi({\cal A}) \Leftrightarrow {\bf A}^\dagger \in\pi({\cal A})$, i.e., ${\cal A}_B=\pi({\bf A})$ is closed under $\dagger$.
\end{obs}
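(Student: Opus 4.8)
The plan is to reduce everything to the single structural fact that $\pi$ is a \emph{linear idempotent} projector, so that its range $\pi({\cal A})={\cal A}_B$ coincides with its fixed-point set. First I would record the elementary characterization
\begin{equation*}
{\bf B}\in\pi({\cal A})\iff \pi{\bf B}={\bf B},
\end{equation*}
which holds because any ${\bf B}=\pi{\bf C}$ in the range satisfies $\pi{\bf B}=\pi^2{\bf C}=\pi{\bf C}={\bf B}$, while conversely every fixed point trivially lies in the range. With this reformulation in hand, the condition $\pi((\pi{\bf A})^\dagger)=(\pi{\bf A})^\dagger$ is seen to be nothing more than the statement that $(\pi{\bf A})^\dagger$ is a fixed point of $\pi$, i.e.\ that $(\pi{\bf A})^\dagger\in\pi({\cal A})$. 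I would also state at the outset that $\pi({\bf A})$ in the claim is to be read as the range $\pi({\cal A})={\cal A}_B$, to fix the mild abuse of notation.

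Next I would establish the two implications separately. For the forward direction, assuming the condition holds, I would take an arbitrary ${\bf B}\in\pi({\cal A})$, write ${\bf B}=\pi{\bf A}$, and apply the hypothesis to obtain $\pi{\bf B}^\dagger=\pi((\pi{\bf A})^\dagger)=(\pi{\bf A})^\dagger={\bf B}^\dagger$; by the fixed-point characterization this gives ${\bf B}^\dagger\in\pi({\cal A})$, so the range is closed under $\dagger$. Since ${\bf A}\mapsto{\bf A}^\dagger$ is an involution, this one-sided closure automatically upgrades to the full equivalence ${\bf A}\in\pi({\cal A})\Leftrightarrow{\bf A}^\dagger\in\pi({\cal A})$.

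For the reverse direction, assuming $\pi({\cal A})$ is closed under $\dagger$, I would fix an arbitrary ${\bf A}\in{\cal A}$, observe that $\pi{\bf A}\in\pi({\cal A})$, and invoke closure to conclude $(\pi{\bf A})^\dagger\in\pi({\cal A})$; the fixed-point characterization then yields $\pi((\pi{\bf A})^\dagger)=(\pi{\bf A})^\dagger$ directly, which is exactly the desired condition.

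I do not anticipate a genuine obstacle, since the entire argument is linear bookkeeping once idempotency is used. The only point demanding care is the identification of the range with the fixed-point set: this is precisely where the projector property $\pi^2=\pi$ enters, and the equivalence would fail for a general linear map that is not idempotent. Notably, orthogonality of $\pi$ plays no role here and need not be invoked, so I would phrase the observation using only the idempotent projector structure.
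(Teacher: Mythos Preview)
Your proposal is correct. The paper states this as an \emph{Observation} without supplying any proof, treating it as immediate from the definition of a projector; your argument spells out exactly the elementary idempotency-based reasoning that the paper leaves implicit. There is nothing to compare against beyond noting that you have filled in the details the paper omitted, and your remark that orthogonality is irrelevant here is a useful clarification.
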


\begin{lem}\label{prop:trivial_projection_in_expect_values}
Let ${\bf Q}\in{\cal A}$ a certain observable, $\rho$ a given state, $B$ a set of independent operators such that ${\bf id}_\hilbert \in {\cal A}_B$ and $\pi_{B,\rho}$ an orthogonal projector regarding the KMB (cover) scalar product associated to $\rho$. Then
\begin{equation}
    \langle \pi_{B,\rho}{\bf Q}\rangle_\rho=\langle {\bf Q}\rangle_\rho.
\end{equation}
\end{lem}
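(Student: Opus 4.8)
The plan is to reduce the statement to two facts already in hand: first, that pairing any operator with the identity reproduces its expectation value (\cref{lem:KMBandexpectationvalues} for the KMB product, and the identically stated relation $({\bf id}_\hilbert,{\bf Q})_\sigma^{\rm covar}=\langle{\bf Q}\rangle_\sigma$ for the \emph{covar} product); and second, the defining orthogonality of the projector, \cref{eq:ortprojdef}, which says that for any argument already lying in ${\cal A}_B$ one may insert $\pi_{B,\rho}$ into the scalar product without altering its value.

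First I would observe that the hypothesis ${\bf id}_\hilbert\in{\cal A}_B$ is precisely what allows ${\bf id}_\hilbert$ to serve as the subspace argument in \cref{eq:ortprojdef}. Taking that first (subspace) argument to be ${\bf id}_\hilbert$ and the second to be the observable ${\bf Q}\in{\cal A}$ whose projection we are studying, the orthogonality relation yields
\[
({\bf id}_\hilbert,\pi_{B,\rho}\,{\bf Q})_\rho=({\bf id}_\hilbert,{\bf Q})_\rho .
\]
Here I use only that $\pi_{B,\rho}$ is the orthogonal projector onto ${\cal A}_B$ for the scalar product attached to $\rho$; no assumption that $\rho$ itself lies on ${\cal M}_B$ is required, and the step is identical whichever of the two products is in play.

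Next I would convert both sides back to expectation values using the identity-pairing fact: the left-hand side becomes $\langle\pi_{B,\rho}{\bf Q}\rangle_\rho$ and the right-hand side becomes $\langle{\bf Q}\rangle_\rho$, which is exactly the claimed equality. Because the identity-pairing relation holds verbatim for both the KMB and the \emph{covar} products, the same two substitutions establish the \emph{covar} version without any further argument.

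I do not expect a genuine obstacle: the result is essentially a one-line consequence of combining the projector's defining orthogonality with the distinguished role of ${\bf id}_\hilbert$. The only points that merit care are bookkeeping ones --- confirming that ${\bf id}_\hilbert$ is an admissible test vector (guaranteed by the hypothesis ${\bf id}_\hilbert\in{\cal A}_B$), keeping the dummy slot in \cref{eq:ortprojdef} distinct from the observable ${\bf Q}$ being projected, and noting that the derivation never uses properties specific to KMB beyond the expectation-value identity shared with \emph{covar}.
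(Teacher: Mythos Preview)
Your proof is correct and follows essentially the same route as the paper: write the expectation value as a scalar product with ${\bf id}_\hilbert$, use the orthogonality of $\pi_{B,\rho}$ together with ${\bf id}_\hilbert\in{\cal A}_B$ to drop the projector, and convert back. The paper phrases the middle step as moving $\pi_{B,\rho}$ onto ${\bf id}_\hilbert$ and then using $\pi_{B,\rho}{\bf id}_\hilbert={\bf id}_\hilbert$, while you invoke \cref{eq:ortprojdef} directly; these are the same fact.
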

\begin{proof}
Writing the expectation value as a KMB \, \emph{covar} scalar product, and using the orthogonality property of the projector,
\begin{eqnarray*}
        \langle \pi_{B,\rho}{\bf Q}\rangle_\rho&=&
        ({\bf id}_\hilbert, \pi_{B,\rho}{\bf Q})_\rho\\
        &=&(\pi_{B,\rho}{\bf id}_\hilbert, {\bf Q})_\rho\\
        &=&({\bf id}_\hilbert, {\bf Q})_\rho=\langle {\bf Q}\rangle_\rho.
\end{eqnarray*}
\end{proof}

\subsection{KMB scalar product and Heisenberg evolution}

The following lemma is going to be useful in the discussion of when a quantity conserved in the Schr\"odinger dynamics is also conserved in
its restricted counterpart:

\begin{lem}\label{lem:KMBandCommK}
Let ${\bf A},{\bf B},{\bf K} \in{\cal A}$ and $\sigma=\exp(-{\bf K})$ a normalized state s.t. $\Tr \sigma=1$. 
Then,
$({\bf A}^\dagger,[{\bf B},{\bf K}])_{\sigma}^{\rm KMB}=({\bf id}_\hilbert,[{\bf A},{\bf B}])_{\sigma}^{\rm KMB}.$
\end{lem}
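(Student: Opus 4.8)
The plan is to reduce both sides of the claimed identity to ordinary traces against $\sigma$ and then match them. The right-hand side is immediate: since ${\bf id}_\hilbert^\dagger={\bf id}_\hilbert$ and $\sigma^{1-\tau}\sigma^\tau=\sigma$ for every $\tau$, the kernel integral collapses and \cref{lem:KMBandexpectationvalues} gives $({\bf id}_\hilbert,[{\bf A},{\bf B}])^{\rm KMB}_\sigma=\Tr(\sigma[{\bf A},{\bf B}])=\langle[{\bf A},{\bf B}]\rangle_\sigma$. Thus the entire task is to show that the left-hand side, which carries the full $\tau$-integral of the KMB kernel, also collapses to this same expectation value of a commutator.

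The key tool I would establish first is a Kubo-type identity that converts a KMB-weighted commutator with ${\bf K}$ into an ordinary commutator with $\sigma$. Writing $\sigma^\tau=e^{-\tau{\bf K}}$ and using $\partial_\tau\sigma^\tau=-{\bf K}\sigma^\tau=-\sigma^\tau{\bf K}$, together with the fact that ${\bf K}$ commutes with every power of $\sigma$, one recognizes the integrand $\sigma^{1-\tau}[{\bf K},{\bf X}]\sigma^\tau$ as the total derivative $\partial_\tau\!\left(\sigma^{1-\tau}{\bf X}\sigma^\tau\right)$ for any ${\bf X}\in{\cal A}$. The fundamental theorem of calculus, evaluating the boundary terms at $\tau=0$ and $\tau=1$, then yields $\int_0^1\sigma^{1-\tau}[{\bf K},{\bf X}]\sigma^\tau\,d\tau=[{\bf X},\sigma]$.

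With this identity in hand, I would unfold the definition of the left-hand side, $({\bf A}^\dagger,[{\bf B},{\bf K}])^{\rm KMB}_\sigma=\int_0^1\Tr\!\left(\sigma^{1-\tau}{\bf A}\,\sigma^\tau[{\bf B},{\bf K}]\right)d\tau$, and use cyclicity of the trace together with the substitution $\tau\mapsto 1-\tau$ to move the factor ${\bf A}$ outside the integral and expose the block $\int_0^1\sigma^{1-\tau}[{\bf K},{\bf B}]\sigma^\tau\,d\tau=[{\bf B},\sigma]$. The surviving trace $\Tr({\bf A}\,[{\bf B},\sigma])$ then reduces, again by cyclicity, to $\Tr(\sigma[{\bf A},{\bf B}])$, matching the right-hand side. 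I expect the main obstacle to be purely bookkeeping: keeping the orientation of the inner commutator ($[{\bf B},{\bf K}]$ versus $[{\bf K},{\bf B}]$) and the several trace rearrangements mutually consistent, so that the commutator surviving on the left carries exactly the orientation of $[{\bf A},{\bf B}]$ required on the right. The quickest safeguard is a single low-dimensional check --- e.g.\ ${\bf A}=|1\rangle\langle 2|$, ${\bf B}=|2\rangle\langle 1|$ and ${\bf K}$ diagonal, for which the $\tau$-integral collapses through $\ln(p_2/p_1)$ equal to the level spacing --- to pin down all signs before committing to the general manipulation.
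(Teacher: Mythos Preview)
Your approach is sound and takes a genuinely different route from the paper's. The paper argues by brute force in the eigenbasis of $\sigma$: it sets ${\bf A}=|i\rangle\langle j|$, ${\bf B}=|k\rangle\langle l|$, expands ${\bf K}=-\sum_m\log p_m\,|m\rangle\langle m|$, and reduces the left-hand side by explicitly performing the $\tau$-integral, so that the characteristic factor $(p_j-p_i)/\log(p_j/p_i)$ cancels against the $\log p_k-\log p_l$ produced by $[{\bf B},{\bf K}]$; the resulting expression is then matched against $\Tr\sigma[{\bf A},{\bf B}]$ written in the same basis, and linearity in ${\bf A},{\bf B}$ extends the identity to all of ${\cal A}$. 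Your argument instead works at the operator level: the observation $\partial_\tau(\sigma^{1-\tau}{\bf X}\sigma^\tau)=\sigma^{1-\tau}[{\bf K},{\bf X}]\sigma^\tau$ integrates to the Kubo identity $\int_0^1\sigma^{1-\tau}[{\bf K},{\bf X}]\sigma^\tau\,d\tau=[{\bf X},\sigma]$, after which only cyclicity of the trace is needed. This is basis-free and makes the mechanism transparent---the KMB weight against $[{\bf K},\cdot]$ is exactly what turns $\sigma$-conjugation into a total $\tau$-derivative---whereas the paper's computation has the complementary virtue of being completely explicit and of exhibiting the diagonal KMB Gram matrix along the way. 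Your caution about commutator orientation is well placed: in your sketch the passage from $[{\bf B},{\bf K}]$ to the displayed block $[{\bf K},{\bf B}]$ silently absorbs a sign, and the low-dimensional check you propose is precisely the right way to pin it down before committing to the general manipulation.
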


This lemma is also useful to show the equivalence between the Schr\"odinger's and the Heisenberg's pictures at the level of the ${\bf K}$ dynamics, as well as to explore the effect of infinitesimal symmetry transformations.

\begin{proof}
From the linearity in ${\bf A}$ and ${\bf B}$, it is enough to show the identity for elements of the orthogonal basis ${\bf b}_{ii'}=|i\rangle\langle i'|$ associated to the eigenvectors of $\sigma=\sum_m p_m |m\rangle\langle m|$. 
Since $\sigma = \exp(-{\bf K})$, ${\bf K} \in {\cal A}$ can be expanded wrt. this basis as ${\bf K}=-\sum_m \log(p_m)|m\rangle\langle m|$. 
Then, if ${\bf A}=|i\rangle\langle j|$, ${\bf B}=|k\rangle\langle l|$,

\begin{equation*}
    \begin{split}
        ({\bf A}^\dagger,&[{\bf B},{\bf K}])_{\sigma}^{\rm KMB} \\
    &= \sum_{m} \bigg(|j\rangle\langle i|, \bigg[|k\rangle\langle l|,-\log(p_m)|m\rangle\langle m|\bigg]\bigg)_{\sigma}^{\rm KMB}\\
    &= ( |j\rangle\langle i|, |k\rangle\langle l|(\log(p_k)-\log(p_l)) )\\
    &= \frac{p_{k}-p_{l}}{\log(p_k/p_l)} \delta_{ik}\delta_{jl}(\log(p_k)-\log(p_l))\\
    &= (p_{k}-p_{l}) \delta_{ik}\delta_{jl}\\
    &= \Tr \sigma |k\rangle\langle l|           |j\rangle\langle i| - \Tr  |k\rangle\langle l| \sigma |j\rangle\langle i|\\
    &=  \Tr \sigma |k\rangle\langle l| |j\rangle\langle i| - \Tr  \sigma |j\rangle\langle i| |k\rangle\langle l| \\
    &=\Tr \sigma\,[{\bf A},{\bf B}]=({\bf id}_\hilbert,[{\bf A},{\bf B}])_{\sigma}^{\rm KMB}.
    \end{split}
\end{equation*}
\end{proof}

\subsection{Spectral norm and induced norm inequalities}\label{sec:KMBinducedNorm}

Along the work, we have considered three different metrics in the space of operators, the spectral norm $\|{\bf A}\|=\max_{|\psi\rangle}\frac{|\langle|{\bf A}|\rangle|}{\langle\psi|\psi\rangle}$, and the norms associated to the KMB and \, \emph{covar} scalar products
$\|{\bf A}\|_{\sigma}^{\stackrel{\rm KMB}{\rm covar}}=\sqrt{({\bf A},{\bf A})_{\sigma}^{\stackrel{\rm KMB}{\rm covar}}}$. Some relations among them are presented here, including the proof of \cref{lem:metricineq}.

\begin{lem}\label{prop:kmbnorm_and_re_approx}
\underline{The \textnormal{KMB} distance as an upper bound} \newline
\noindent \underline{for the Relative Entropy.} Let $\rho_0=\exp(-{\bf K}_0)$ and $\sigma=\exp(-{\bf K}_0-\Delta{\bf K})$ with ${\bf K}_0={\bf K}_0^\dagger, \, \Delta{\bf K}=\Delta{\bf K}^\dagger \in {\cal A}$, s.t. $\Tr \rho=\Tr \sigma=1$. Then, 

\begin{equation}\label{eq:sqrt_re_loc}
    S(\rho_0\|\sigma)\leq \|\Delta{\bf K}\|_{\rho_0}^{\rm KMB}.
\end{equation}

\end{lem}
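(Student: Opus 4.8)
The plan is to collapse the relative entropy into a single KMB inner product against the identity and then apply Cauchy--Schwarz. First I would exploit the explicit forms of the two states. Since $\rho_0=\exp(-{\bf K}_0)$ and $\sigma=\exp(-{\bf K}_0-\Delta{\bf K})$ are both normalized, with ${\bf K}_0$ and $\Delta{\bf K}$ self-adjoint, their operator logarithms are $\log\rho_0=-{\bf K}_0$ and $\log\sigma=-{\bf K}_0-\Delta{\bf K}$, so that $\log\rho_0-\log\sigma=\Delta{\bf K}$. Substituting into the definition \cref{eq:RE} gives
\[
S(\rho_0\|\sigma)=\Tr\!\left[\rho_0\,(\log\rho_0-\log\sigma)\right]=\Tr\!\left[\rho_0\,\Delta{\bf K}\right]=\langle\Delta{\bf K}\rangle_{\rho_0},
\]
so the relative entropy is \emph{exactly} the expectation value of $\Delta{\bf K}$ in the reference state $\rho_0$.

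Next I would rewrite this expectation value as a KMB scalar product and estimate it. By \cref{lem:KMBandexpectationvalues}, $\langle\Delta{\bf K}\rangle_{\rho_0}=({\bf id}_\hilbert,\Delta{\bf K})_{\rho_0}^{\rm KMB}$, and applying the Cauchy--Schwarz inequality for the positive-definite KMB form yields
\[
S(\rho_0\|\sigma)=({\bf id}_\hilbert,\Delta{\bf K})_{\rho_0}^{\rm KMB}\leq \|{\bf id}_\hilbert\|_{\rho_0}^{\rm KMB}\,\|\Delta{\bf K}\|_{\rho_0}^{\rm KMB}.
\]
Finally, the same lemma gives $({\bf id}_\hilbert,{\bf id}_\hilbert)_{\rho_0}^{\rm KMB}=\Tr\rho_0=1$, hence $\|{\bf id}_\hilbert\|_{\rho_0}^{\rm KMB}=1$, which collapses the right-hand side to $\|\Delta{\bf K}\|_{\rho_0}^{\rm KMB}$ and establishes \cref{eq:sqrt_re_loc}.

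There is essentially no hard step here; the only points needing care are bookkeeping. One must confirm that the normalization hypotheses $\Tr\rho_0=\Tr\sigma=1$ are precisely what makes $\log\rho_0-\log\sigma=\Delta{\bf K}$ hold with no spurious additive multiple of the identity, and that $\rho_0>0$ guarantees the KMB form is a genuine inner product so that Cauchy--Schwarz is available. Self-adjointness of $\Delta{\bf K}$ ensures $\langle\Delta{\bf K}\rangle_{\rho_0}$ is real and, being equal to a relative entropy, already non-negative, so the left-hand side carries no absolute value and the inequality follows directly.
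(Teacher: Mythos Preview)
Your proof is correct and essentially identical to the paper's: both compute $S(\rho_0\|\sigma)=\Tr\rho_0\,\Delta{\bf K}=({\bf id}_\hilbert,\Delta{\bf K})_{\rho_0}^{\rm KMB}$, apply Cauchy--Schwarz, and use $\|{\bf id}_\hilbert\|_{\rho_0}^{\rm KMB}=1$ from normalization.
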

\begin{proof}
The relative entropy between $\rho$ and $\sigma$ can be written as the expectation value of ${\bf A}$:

\begin{equation}\label{eq:ReAsaKMBprod}
S(\rho_0\|\sigma)=
\Tr \rho ((-{\bf K}_0)-(-{\bf K}_0-\Delta{\bf K}))=\Tr \rho_0 \Delta{\bf K},
\end{equation}
which, by \cref{lem:KMBandexpectationvalues} can be written as 
$S(\rho\|\sigma)=({\bf id}_\hilbert,\Delta{\bf K})_{\rho_0}$. 
Then, by the Cauchy-Schwarz inequality
$$
S(\rho_0\|\sigma)=({\bf id}_\hilbert,\Delta{\bf K})_{\rho_0}\leq \|{\bf id}_\hilbert\|^{\rm KMB}_{\rho_0} \|\Delta{\bf K}\|^{\rm KMB}_{\rho_0} =\|\Delta{\bf K}\|^{\rm KMB}_{\rho_0},
$$
since $(\|{\bf id}_\hilbert\|^{\rm KMB}_{\rho_0})^2=\Tr \rho_0 {\bf id}_{\hilbert}=1$.
\end{proof}

Now, we are in conditions to show \cref{lem:metricineq}:
  \begin{proof}
    The first inequality is a direct consequence of the spectral norm definition and the spectral decomposition of $\sigma=\sum_k p_k |k\rangle\langle k|$:
    \begin{eqnarray*}
      (||{\bf A}||^{{\rm covar}}_{\sigma})^2 &=&\Tr \sigma \textnormal{ } \frac{{\bf A}^\dagger{\bf A}+{\bf A}{\bf A}^\dagger}{2}\\
                             &=&\sum_{k}p_k \frac{\langle k|{\bf A}^\dagger{\bf A}|k\rangle}{2} +
                                \sum_{k}p_k \frac{\langle k|{\bf A}{\bf A}^\dagger|k\rangle}{2}\\
                             &\leq &\sum_k p_k \frac{\|{\bf A}\|^2+\|{\bf A}^\dagger\|^2}{2}\\
                             &=& \sum p_k \|{\bf A}\|^2 = \|{\bf A}\|^2.
      \end{eqnarray*}
      
For the second inequality, it is enough to show that ${\cal G}^{\textnormal{KMB}}_{\sigma} \leq {\cal G}^{{\rm covar}}_{\sigma}$. 
Also, as both matrices are diagonal in the canonical basis associated with the eigenvectors of $\sigma$, it is enough to show that

      \begin{eqnarray}
      \bigg(|| \textnormal{ }\ket{i} \bra{j} \textnormal{ }||^{\sigma}_{\textnormal{KMB} }\bigg)^2&\leq & \bigg(|| \textnormal{ }\ket{i} \bra{j} \textnormal{ }||^{\sigma}_{{\rm covar} }\bigg)^2.    
      \end{eqnarray}
or, in terms of the eigenvalues $p_i$ of $\sigma$,

      $$
      \frac{(p_i-p_j)/\log(p_i/p_k)}{(p_i+p_j)/2}\leq 1.
      $$
Now, without loss of generality, let's assume that $p_i\geq p_j$ and hence, $\log(p_i/p_j)=x \geq 0$. Then,
    \begin{eqnarray*}
        \frac{(p_i-p_j)/\log(p_i/p_k)}{(p_i+p_j)/2} &=&\frac{1-e^{-x}}{x/2 (1+e^{-x}) }\\
        &=&   \frac{\tanh(x/2)}{x/2}\leq 1.     
      \end{eqnarray*}

Finally, the last inequality follows directly from \cref{prop:kmbnorm_and_re_approx}. 
      
\end{proof}

\section{KMB geometry and series expansions}\label{app:KMBgeom}

After having introduced the fundamental properties of the KMB product, and its induced distance, we are in a position to use its definition to write the first non-trivial orders in the series expansion of the relevant quantities of this work, like expectation values and relative entropies.

\subsection{Proof of \cref{prop:kubo1}}\label{app:proof_kubosp}
\begin{proof}

Replacing $\rho_{\lambda}$ in the L.H.S. of \cref{eq:KWBvar} by its series expansion around $\lambda$ yields 

\begin{equation}\label{eq:seriesrhoip}
\rho_{\lambda + \delta \lambda} = \rho_{\lambda} ({\bf id}_\hilbert+\delta \lambda \int d\tau \, \rho_\lambda^{1-\tau}\Delta {\bf K} \rho_{\lambda}^{\tau} + {\cal O}(\delta\lambda^2) ).
\end{equation}

It follows that

$$
\Tr \rho_{\lambda + \delta \lambda} {\bf O}=\Tr \rho_{\lambda}{\bf O} + \delta \lambda \Tr \int d\tau \, \rho_\lambda^{1-\tau}\Delta {\bf K} \rho_{\lambda}^{\tau}{\bf O} + {\cal O}(\delta \lambda^2),
$$

\noindent or, in terms of the KMB scalar product regarding $\rho_\lambda$ \cref{eq:KWBdef1}, 

\begin{equation}\label{eq:seriesav}
\Tr \rho_{\lambda + \delta \lambda} {\bf O}=({\bf id}_\hilbert,{\bf O})^{\rm KMB}_{\rho_\lambda} + \delta \lambda (\Delta{\bf K}^\dagger,{\bf O})^{\rm KMB}_{\rho_\lambda} + {\cal O}(\delta \lambda^2), 
\end{equation}
Notice that the hermiticity of ${\bf K}_0$ is required to ensure that  $\rho_0>0$ and $(\cdot,\cdot)_{\rho_0}^{\rm KMB}$ be an scalar product.
\noindent the RHS can be read then from the linear term. 
\end{proof}

\subsection{The $\pi_B$ projector is an orthogonal projector w.r.t. the KMB scalar product}\label{sec:proofOrtProj}

\begin{proof}\label{proof:OrtProj}

To show the statement, it is enough to show the equivalent statement that for any ${\bf Q}=\pi_{B,\rho_0}{\bf Q}\in {\cal A}_B$
$$
({\bf Q},\Delta {\bf K})_{\rho_0}^{\rm KMB}=({\bf Q},\pi_{B,\rho_0}\Delta {\bf K})_{\rho_0}^{\rm KMB}
$$

To show it, lets start by defining the (non-normalized) state $\rho_\lambda=\exp(-{\bf K}_0+\lambda \Delta {\bf K})$ s.t. $\Tr \rho_0=1$ and $\rho_0\in {\cal M}_B$.
Using \cref{prop:kubo1} and the condition \cref{eq:nlProj},
\begin{eqnarray*}
({\bf Q},\Delta {\bf K})_{\rho_0}^{\rm KMB}&=&\frac{\partial}{\partial \lambda}\Tr \mathbf{Q}\rho_\lambda\\
&=&\frac{\partial}{\partial \lambda}\Tr \mathbf{Q}({\cal P}_B\rho_\lambda)\\
&=&\frac{\partial}{\partial \lambda}\Tr \mathbf{Q}\exp(-\Pi_B ({\bf K}_0 - \lambda \Delta{\bf K}) )
\end{eqnarray*}
where in the last line we have used the definition of $\Pi_B$ \cref{eq:compatconddiffbis}. Now, using again \cref{prop:kubo1},
$$
\Tr \mathbf{Q}\exp(-{\bf K}_0+ \lambda \Delta{\bf K}' )=(\mathbf{Q}, \lambda \Delta{\bf K}' + {\cal O}(\lambda^2) )_{\rho_0}^{\rm KMB}\\
$$
\noindent with $\Delta{\bf K}'=\frac{\partial}{\partial \lambda}(\Pi_B (\lambda \Delta{\bf K}-{\bf K}_0)-{\bf K}_0)$. Finally, using the definition \cref{eq:compatconddiff},
$$
\Delta{\bf K}'=\pi_{B,\rho_0}\Delta{\bf K}
$$
and therefore,
$$
({\bf Q},\Delta {\bf K})_{\rho_0}^{\rm KMB}=(\mathbf{Q}, \pi_{B,\rho_0} \Delta{\bf K})_{\rho_0}^{\rm KMB}.
$$
\end{proof}

\subsection{KMB product and Series Expansion for the Relative Entropy}\label{sec:KMBRelEntr}

\begin{lem}\label{lem:seriesTr}
\underline{Taylor's series of $\Tr \rho_\lambda$.} 
Let $\rho_\lambda=\exp(-{\bf K}+\lambda \Delta {\bf K})$. Then

\begin{equation}\label{eq:seriestrrho} 
\begin{split}
    \Tr \rho_{\lambda+\delta \lambda} &= \Tr \rho_{\lambda} + \delta \lambda ({\bf id}_\hilbert,{\bf \Delta K})^{\rm KMB}_{\rho_\lambda} \\
    &\,\, + \frac{\delta \lambda^2}{2}(\Delta{\bf K}^\dagger,\Delta{\bf  K})^{\rm KMB}_{\rho_{\lambda}}+{\cal O}(\delta \lambda^3).
\end{split}
\end{equation}
\end{lem}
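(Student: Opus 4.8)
The plan is to read off the coefficients by computing the first two derivatives of $f(\lambda)=\Tr\rho_\lambda=\Tr\exp(-{\bf K}+\lambda\Delta{\bf K})$ at a fixed $\lambda$ and assembling the Taylor polynomial $f(\lambda+\delta\lambda)=f(\lambda)+\delta\lambda\, f'(\lambda)+\tfrac{\delta\lambda^2}{2}f''(\lambda)+{\cal O}(\delta\lambda^3)$. The basic tool is the Duhamel (Dyson) formula for the derivative of an operator exponential, which is already implicit in \cref{eq:seriesrhoip} from the proof of \cref{prop:kubo1}: since $A(\lambda)=-{\bf K}+\lambda\Delta{\bf K}$ has $A'(\lambda)=\Delta{\bf K}$,
\begin{equation*}
\frac{d\rho_\lambda}{d\lambda}=\int_0^1 \rho_\lambda^{\,1-\tau}\,\Delta{\bf K}\,\rho_\lambda^{\,\tau}\,d\tau .
\end{equation*}
The hermiticity of ${\bf K}$ and $\Delta{\bf K}$ guarantees $\rho_\lambda>0$, so the powers $\rho_\lambda^{\tau}$ and the associated KMB product are well defined, exactly as required in \cref{app:proof_kubosp}.

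For the first-order coefficient I take the trace of the Duhamel integral and use cyclicity together with $\rho_\lambda^{\,1-\tau}\rho_\lambda^{\,\tau}=\rho_\lambda$ to collapse the $\tau$-integral, obtaining $f'(\lambda)=\Tr(\rho_\lambda\Delta{\bf K})$. By \cref{lem:KMBandexpectationvalues} this equals $({\bf id}_\hilbert,\Delta{\bf K})^{\rm KMB}_{\rho_\lambda}$, which is precisely the claimed linear term.

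For the second-order coefficient I differentiate once more the scalar $f'(\lambda)=\Tr(\rho_\lambda\Delta{\bf K})$, giving $f''(\lambda)=\Tr\!\big(\tfrac{d\rho_\lambda}{d\lambda}\,\Delta{\bf K}\big)$, and substitute the Duhamel formula again. Now the inserted $\Delta{\bf K}$ prevents the $\tau$-integral from collapsing; instead, after applying cyclicity of the trace, the expression becomes $\int_0^1\Tr\big(\rho_\lambda^{\,1-\tau}\Delta{\bf K}\,\rho_\lambda^{\,\tau}\Delta{\bf K}\big)\,d\tau$, which I recognize as $(\Delta{\bf K}^\dagger,\Delta{\bf K})^{\rm KMB}_{\rho_\lambda}$ directly from the definition \cref{eq:KWBdef1} (using $(\Delta{\bf K}^\dagger)^\dagger=\Delta{\bf K}$). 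Inserting $f'(\lambda)$ and $f''(\lambda)$ into the Taylor formula yields the stated identity.

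The only genuinely delicate point is the second derivative: one must be careful to differentiate the \emph{scalar} $f'(\lambda)=\Tr(\rho_\lambda\Delta{\bf K})$ rather than re-expanding the exponential to second order as an operator. Proceeding this way a single Duhamel integral suffices, whereas a direct operator expansion of $e^{A+\delta\lambda B}$ would produce a nested double integral that must afterwards be reduced by cyclicity and a change of variables. Differentiating the trace first avoids that bookkeeping and makes the appearance of the symmetric KMB form transparent.
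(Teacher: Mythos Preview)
Your proof is correct and follows essentially the same route as the paper's. Both arguments first establish $f'(\lambda)=\Tr(\rho_\lambda\Delta{\bf K})$ by tracing the Duhamel identity \cref{eq:seriesrhoip}, and then obtain the second-order coefficient by feeding $\Delta{\bf K}$ back through the same identity; the paper phrases this last step as reading off the linear term of \cref{eq:seriesav} with ${\bf O}=\Delta{\bf K}$ (equivalently, invoking \cref{prop:kubo1}), which is exactly your computation of $f''(\lambda)$ unpacked.
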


\begin{proof}
By tracing out on both sides of \cref{eq:seriesrhoip}, the well-known relation $\frac{d}{d\lambda}{\rm Tr}\rho_\lambda = \Tr \rho_\lambda \Delta {\bf K}$ follows.
Therefore, the term of order $k$ in \cref{eq:seriesav} is linked to the $(k+1)$-th term, multiplied by $k+1$, in \cref{eq:seriesav}.
\end{proof}

\begin{lem}\label{prop:kmbnorm_and_re}
\underline{\textnormal{KMB} distance as an approximation to} 
\noindent \underline{the Relative Entropy.} Let 
$\rho_0=\exp(-{\bf K}_0)$ s.t. $\Tr \rho_0=1$, and  $\rho_\lambda=\frac{\exp(-{\bf K}_0-\lambda {\bf B})}{\Tr \exp(-{\bf K}_0-\lambda {\bf B})}$ and
$\sigma_\lambda=\frac{\exp(-{\bf K}_0-\lambda {\bf A})}{\Tr \exp(-{\bf K}_0-\lambda {\bf A})}$ 
with ${\bf A}, {\bf B} \in {\cal A}$ s.t. $\Tr\mathbf{A}\rho_0=\Tr\mathbf{B}\rho_0=0$. 
Then, 
  
\begin{eqnarray}\label{eq:sqrt_re_loc_ub}
    S(\rho_\lambda\|\sigma_\lambda)&=& 
        \lambda^2\frac{( \|{\bf A}-{\bf B}\|_{\rho_0}^\textnormal{KMB})^2}{2}  + {\cal O}(\lambda^3).
\end{eqnarray}
\end{lem}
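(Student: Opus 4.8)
The plan is to reduce the relative entropy to two scalar pieces and Taylor-expand each using the KMB expansion lemmas already established. First I would write the logarithms explicitly: since $\rho_\lambda = \exp(-{\bf K}_0 - \lambda{\bf B})/Z_B(\lambda)$ and $\sigma_\lambda = \exp(-{\bf K}_0 - \lambda{\bf A})/Z_A(\lambda)$ with $Z_A(\lambda) = \Tr\exp(-{\bf K}_0 - \lambda{\bf A})$ and $Z_B(\lambda) = \Tr\exp(-{\bf K}_0 - \lambda{\bf B})$, one has $\log\rho_\lambda - \log\sigma_\lambda = \lambda({\bf A} - {\bf B}) + \log(Z_A(\lambda)/Z_B(\lambda))$. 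Because $\Tr\rho_\lambda = 1$, the constant term passes straight through the trace, giving
\[
S(\rho_\lambda\|\sigma_\lambda) = \lambda\,\Tr[\rho_\lambda({\bf A}-{\bf B})] + \log\frac{Z_A(\lambda)}{Z_B(\lambda)}.
\]
The whole computation then amounts to expanding these two terms to order $\lambda^2$.

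For the partition-function term, I would apply \cref{lem:seriesTr} (equivalently \cref{eq:seriestrrho}) with ${\bf K} = {\bf K}_0$ and $\Delta{\bf K} = -{\bf A}$, expanding about $\lambda = 0$ where $\rho_0$ is normalized. The linear coefficient is $({\bf id}_\hilbert, -{\bf A})^{\rm KMB}_{\rho_0} = -\langle{\bf A}\rangle_{\rho_0} = 0$ by \cref{lem:KMBandexpectationvalues} and the centering hypothesis, so $Z_A(\lambda) = 1 + \tfrac{\lambda^2}{2}({\bf A},{\bf A})^{\rm KMB}_{\rho_0} + {\cal O}(\lambda^3)$ and likewise for $Z_B$. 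Taking logs and subtracting yields $\log(Z_A/Z_B) = \tfrac{\lambda^2}{2}[({\bf A},{\bf A})^{\rm KMB}_{\rho_0} - ({\bf B},{\bf B})^{\rm KMB}_{\rho_0}] + {\cal O}(\lambda^3)$.

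For the first term, I only need $\Tr[\rho_\lambda({\bf A}-{\bf B})]$ to order $\lambda$, since it carries an explicit prefactor $\lambda$. Writing $\Tr[\rho_\lambda{\bf O}] = \Tr[\exp(-{\bf K}_0-\lambda{\bf B}){\bf O}]/Z_B(\lambda)$ and using the linear expansion \cref{eq:seriesav} of \cref{prop:kubo1} (with $\Delta{\bf K} = -{\bf B}$), together with $Z_B(\lambda) = 1 + {\cal O}(\lambda^2)$, gives $\Tr[\rho_\lambda{\bf O}] = \langle{\bf O}\rangle_{\rho_0} - \lambda({\bf B},{\bf O})^{\rm KMB}_{\rho_0} + {\cal O}(\lambda^2)$. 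Since $\langle{\bf A}\rangle_{\rho_0} = \langle{\bf B}\rangle_{\rho_0} = 0$, this produces $\lambda\,\Tr[\rho_\lambda({\bf A}-{\bf B})] = \lambda^2[({\bf B},{\bf B})^{\rm KMB}_{\rho_0} - ({\bf B},{\bf A})^{\rm KMB}_{\rho_0}] + {\cal O}(\lambda^3)$.

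Finally I would add the two contributions; the $({\bf B},{\bf B})$ terms partially cancel and one is left with $\tfrac{\lambda^2}{2}[({\bf A},{\bf A}) - 2({\bf B},{\bf A}) + ({\bf B},{\bf B})]^{\rm KMB}_{\rho_0}$, and invoking \cref{lem:KMBisRealSP} (the KMB product is real, hence symmetric on self-adjoint arguments, so $({\bf B},{\bf A}) = ({\bf A},{\bf B})$) recognizes this as the perfect square $\tfrac{\lambda^2}{2}(\|{\bf A}-{\bf B}\|^{\rm KMB}_{\rho_0})^2$, as claimed. The main bookkeeping obstacle is keeping the normalization factors $Z_A(\lambda),Z_B(\lambda)$ under control and verifying that their $1/Z$ corrections contribute only at order $\lambda^3$ in each term; the centering conditions $\langle{\bf A}\rangle_{\rho_0} = \langle{\bf B}\rangle_{\rho_0} = 0$ are precisely what kill the would-be linear terms and make the $O(\lambda^2)$ coefficient clean.
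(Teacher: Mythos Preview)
Your proof is correct and follows essentially the same route as the paper's: both write $S(\rho_\lambda\|\sigma_\lambda)=\Tr\rho_\lambda\Delta{\bf K}$ with $\Delta{\bf K}=\lambda({\bf A}-{\bf B})+\log(Z_A/Z_B)\,{\bf id}_\hilbert$, expand the partition-function ratio via \cref{lem:seriesTr} (using $\langle{\bf A}\rangle_{\rho_0}=\langle{\bf B}\rangle_{\rho_0}=0$ to kill the linear terms), and expand $\Tr\rho_\lambda(\cdot)$ to first order via \cref{prop:kubo1}. Your organization---separating the scalar $\log(Z_A/Z_B)$ piece from the operator piece $\lambda({\bf A}-{\bf B})$ at the outset and tracking signs explicitly---is if anything cleaner than the paper's, and your explicit invocation of \cref{lem:KMBisRealSP} to justify $({\bf B},{\bf A})=({\bf A},{\bf B})$ is a nice touch the paper leaves implicit.
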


\begin{proof}
Using the identity \cref{eq:ReAsaKMBprod}, 

\begin{equation}\label{eq:REfulllambda}
S(\rho_\lambda\|\sigma_\lambda)=({\bf id}_\hilbert,\Delta{\bf K})_{\rho_\lambda}^{\rm KMB}=\Tr \Delta{\bf K}\rho_\lambda,
\end{equation}

\noindent with

\begin{eqnarray*}
  \Delta{\bf K}&=&\log \rho_\lambda-\log\sigma_\lambda \\&=&\lambda({\bf A}-{\bf B})+\log\frac{\Tr \exp(-{\bf K}_0-\lambda{\bf A}) }{\Tr \exp(-{\bf K}_0-\lambda{\bf B})}{\bf id}_\hilbert.
\end{eqnarray*}

Next, we use a second-order Taylor expansion in $\lambda$ around $\lambda = 0$.
Using the condition $\Tr{\bf A}\rho_0=\Tr{\bf B}\rho_0=0$
and \cref{lem:seriesTr}, the logarithm of the quotient of traces in $\Delta {\bf K}$ is given by

$$
{\small \log\frac{\Tr e^{-{\bf K}_0-\lambda{\bf A}} }{\Tr e^{-{\bf K}_0-\lambda{\bf B}}}=
\lambda^2\frac{(\|{\bf A}\|^{\rm KMB}_{\rho_0})^2-(|{\bf B}\|_{\rho_0}^{\rm KMB})^2}{2}
+{\cal O}(\lambda^3)}.
$$

Since $\Delta {\bf K}\approx {\cal O}(\lambda)$, the second order expansion of \cref{eq:REfulllambda} is obtained by expanding the last member up to first order in $\lambda$ for $\Delta {\bf K}$ fixed, and replacing then $\Delta {\bf K}$ by its second order expansion, thus yielding

\begin{eqnarray*}
S(\rho_\lambda\|\sigma_\lambda)&=&\Tr \Delta{\bf K}\rho_0 + (\Delta{\bf K},\Delta {\bf K})_{\rho_0}^{\rm KMB}+{\cal O}(\lambda^3)\\
&=&\lambda^2\frac{(\|{\bf A}\|^{\rm KMB}_{\rho_0})^2-(|{\bf B}\|_{\rho_0}^{\rm KMB})^2}{2}+\\
&&
\lambda^2({\bf B},{\bf A}-{\bf B})_{\rho_0}^{\rm KMB}+{\cal O}(\lambda^3)\\
&=&\frac{\lambda^2( \|{\bf A}-{\bf B}\|_{\rho_0}^{\rm KMB})^2}{2}+{\cal O}(\lambda^3).
\end{eqnarray*}
\end{proof}

\begin{obs}
By rescaling, it is possible to identify the expansion parameter $\lambda$ with the KMB norm of the difference $\|{\bf A}-{\bf B}\|_{\rho_0}^{\rm KMB}$, giving the same asymptotic behavior.
\end{obs}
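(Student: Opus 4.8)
The plan is to exploit the fact that the two families of states entering the relative entropy depend on the operators only through the products $\lambda{\bf A}$ and $\lambda{\bf B}$, so that $S(\rho_\lambda\|\sigma_\lambda)$ is invariant under the simultaneous rescaling $({\bf A},{\bf B},\lambda)\mapsto(c{\bf A},c{\bf B},\lambda/c)$ for any $c>0$. First I would verify this invariance directly: since $\rho_\lambda=\exp(-{\bf K}_0-\lambda{\bf B})/\Tr\exp(-{\bf K}_0-\lambda{\bf B})$ and $\sigma_\lambda=\exp(-{\bf K}_0-\lambda{\bf A})/\Tr\exp(-{\bf K}_0-\lambda{\bf A})$, replacing ${\bf A}\to c{\bf A}$, ${\bf B}\to c{\bf B}$ and $\lambda\to\lambda/c$ leaves both density operators, and hence their relative entropy, unchanged. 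I would also note that the defining constraints $\Tr{\bf A}\rho_0=\Tr{\bf B}\rho_0=0$ are preserved, being homogeneous in the operators.

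Next I would fix the scaling constant $c=\|{\bf A}-{\bf B}\|_{\rho_0}^{\rm KMB}$, which is strictly positive whenever ${\bf A}-{\bf B}$ is not KMB-null, and pass to the normalized operators $\hat{\bf A}={\bf A}/c$, $\hat{\bf B}={\bf B}/c$, obeying $\|\hat{\bf A}-\hat{\bf B}\|_{\rho_0}^{\rm KMB}=1$, together with the rescaled parameter $\tilde\lambda=\lambda c=\lambda\|{\bf A}-{\bf B}\|_{\rho_0}^{\rm KMB}$. Using the invariance of the first step, the expansion of \cref{prop:kmbnorm_and_re} rewritten in the new variables reads
\begin{equation*}
S(\rho_{\tilde\lambda}\|\sigma_{\tilde\lambda})=\frac{\tilde\lambda^2}{2}\,\big(\|\hat{\bf A}-\hat{\bf B}\|_{\rho_0}^{\rm KMB}\big)^2+{\cal O}(\tilde\lambda^3)=\frac{\tilde\lambda^2}{2}+{\cal O}(\tilde\lambda^3),
\end{equation*}
so that the effective expansion parameter is precisely the KMB norm $\tilde\lambda$ of the difference, as claimed.

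Finally I would argue that the asymptotic behavior is genuinely preserved and not merely re-labelled. The cleanest route is to observe that the exact scaling symmetry forces every coefficient in the Taylor series of $S(\rho_\lambda\|\sigma_\lambda)$ in $\lambda$ to be invariant under $({\bf A},{\bf B})\mapsto(c{\bf A},c{\bf B})$, $\lambda\mapsto\lambda/c$: since differentiating $\rho_\lambda$ (resp.\ $\sigma_\lambda$) brings down factors of ${\bf B}$ (resp.\ ${\bf A}$), the order-$k$ term is a $k$-linear form in the operators, as is transparent from the manipulations in the proofs of \cref{prop:kmbnorm_and_re} and \cref{lem:seriesTr}; it is therefore homogeneous of degree $k$ in $\lambda$ and of degree $k$ in the operators jointly, hence a pure function of the product scale $\tilde\lambda$. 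Thus, for fixed ${\bf A},{\bf B}$, the whole series reorganizes as $\sum_k a_k\tilde\lambda^k$ with $a_1=0$ (from the trace conditions) and $a_2=1/2$, and the remainder stays ${\cal O}(\tilde\lambda^3)$ as $\lambda\to 0$. The only point requiring care is the well-definedness of $c$: if ${\bf A}-{\bf B}$ lies in the KMB-kernel (which, under the trace conditions, forces ${\bf A}={\bf B}$) the statement is vacuous, and otherwise $c>0$ makes the identification unambiguous. This homogeneity bookkeeping, together with the positivity of $c$, is the main---though minor---obstacle; everything else follows immediately from the scaling symmetry.
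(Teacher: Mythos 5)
Your proof is correct and is essentially the intended argument: the paper states this observation without an explicit proof, and your rescaling $({\bf A},{\bf B},\lambda)\mapsto(c{\bf A},c{\bf B},\lambda/c)$ with $c=\|{\bf A}-{\bf B}\|_{\rho_0}^{\rm KMB}$, absorbed into \cref{prop:kmbnorm_and_re} so that the effective parameter becomes $\tilde\lambda=\lambda\|{\bf A}-{\bf B}\|_{\rho_0}^{\rm KMB}$, is precisely the rescaling the observation refers to, with your homogeneity bookkeeping supplying the justification the paper leaves implicit. One minor correction: in the degenerate case it is the positive-definiteness of the KMB product for full-rank $\rho_0$ (not the trace conditions) that forces ${\bf A}={\bf B}$ when $\|{\bf A}-{\bf B}\|_{\rho_0}^{\rm KMB}=0$.
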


\begin{obs}
As mentioned in \cref{section_max_ent}, the Relative Entropy is a measure of indistinguishability of two states, and therefore,
the KMB distance has the same role for asymptotically close states. Moreover, \cref{prop:kubo1} and the Cauchy-Schwarz inequality provide another operational interpretation of this metric. Suppose that we want to discriminate two close states $\rho_0=\exp(-{\bf K}_0)$ and  $\rho_\lambda=\exp(-{\bf K}_0+\lambda \Delta {\bf K})$ by looking at the expectation value of some operator ${\bf Q}$. 
From \cref{prop:kubo1},  $\Delta \langle {\bf Q}\rangle\approx |(\lambda \Delta {\bf K},{\bf Q})_{\rho_0}^{\rm KMB}|+{\cal O}(\lambda^2)$. A rough estimation of how difficult is to distinguish both states through this measurement is by comparing $\Delta  \langle{\bf Q}\rangle$ with
$\sigma_{\bf Q}=\sqrt{\langle{\bf Q}^2\rangle}\approx \|{\bf Q}\|_{\rho_0}^{\rm (cov)}>\|{\bf Q}\|_{\rho_0}^{\rm KMB}$. Then, using the Cauchy-Schwarz inequality
\begin{eqnarray}
\frac{\Delta {\bf Q}}{\sqrt{\langle{\bf Q}^2\rangle}} &\leq& \frac{\|{\bf Q}\|_{\rho_0}^{\rm KMB}}{\|{\bf Q}\|_{\rho_0}^{\rm (cov)}}
\|{\bf K}\|_{\rho_0}^{\rm KMB}\leq \|{\bf K}\|_{\rho_0}^{\rm KMB}
\end{eqnarray}
where we used \cref{lem:metricineq} to eliminate the dependence of ${\bf Q}$ in the last member. 
Despite a more careful analysis of the role of these norms 
in state-discrimination tasks is beyond the scope of this work,
the previous argument is enough to say that the 
KMB-norm (as well as the\, \emph{covar}-norm) is a measure of how difficult it is to discriminate between two states by comparing expectation values. 
\end{obs}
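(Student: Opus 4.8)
The plan is to read the displayed bound as a short chain of three elementary facts, each already available in the excerpt, glued together so that the measured observable ${\bf Q}$ drops out of the final estimate. Throughout I write ${\bf K}$ for the perturbation $\lambda\Delta{\bf K}$ (so $\rho_\lambda=\exp(-{\bf K}_0+{\bf K})$) and take all operators Hermitian, which is the regime of the discrimination heuristic.

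First I would quantify the sensitivity of the measurement. To leading order, \cref{prop:kubo1} (applied to ${\bf O}={\bf Q}$, evaluated at the unperturbed state and integrated in $\lambda$) gives $\Delta\langle{\bf Q}\rangle=({\bf K},{\bf Q})_{\rho_0}^{\rm KMB}+{\cal O}(\lambda^2)$; taking absolute values yields the leading-order signal $\Delta{\bf Q}\equiv|\Delta\langle{\bf Q}\rangle|\approx|({\bf K},{\bf Q})_{\rho_0}^{\rm KMB}|$. Since the KMB form is a genuine positive-definite inner product, the Cauchy--Schwarz inequality applies directly, giving $|({\bf K},{\bf Q})_{\rho_0}^{\rm KMB}|\le\|{\bf K}\|_{\rho_0}^{\rm KMB}\,\|{\bf Q}\|_{\rho_0}^{\rm KMB}$.

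Next I would fix the noise scale. For Hermitian ${\bf Q}$ the \emph{covar} norm satisfies $(\|{\bf Q}\|_{\rho_0}^{\rm covar})^2=\Tr[\rho_0\,\tfrac{\{{\bf Q},{\bf Q}\}}{2}]=\langle{\bf Q}^2\rangle_{\rho_0}$, so $\sqrt{\langle{\bf Q}^2\rangle}=\|{\bf Q}\|_{\rho_0}^{\rm covar}$ exactly (the $\approx$ in the statement only reflects the optional subtraction of $\langle{\bf Q}\rangle^2$ in the true variance). Dividing the Cauchy--Schwarz bound by this quantity produces
\begin{equation*}
\frac{\Delta{\bf Q}}{\sqrt{\langle{\bf Q}^2\rangle}}\le\frac{\|{\bf Q}\|_{\rho_0}^{\rm KMB}}{\|{\bf Q}\|_{\rho_0}^{\rm covar}}\,\|{\bf K}\|_{\rho_0}^{\rm KMB},
\end{equation*}
which is the middle member of the claimed chain. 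The final step removes the residual ${\bf Q}$-dependence: by \cref{lem:metricineq} one has $\|{\bf Q}\|_{\rho_0}^{\rm covar}\ge\|{\bf Q}\|_{\rho_0}^{\rm KMB}$ for every ${\bf Q}$, so the prefactor $\|{\bf Q}\|_{\rho_0}^{\rm KMB}/\|{\bf Q}\|_{\rho_0}^{\rm covar}\le1$ and the universal bound $\|{\bf K}\|_{\rho_0}^{\rm KMB}$ follows.

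I expect no deep obstacle here: the genuine mathematical content is carried entirely by the norm ordering of \cref{lem:metricineq}, which is precisely what makes the estimate independent of the chosen ${\bf Q}$. The only points requiring care are bookkeeping ones, namely justifying the first-order replacement $\Delta\langle{\bf Q}\rangle\approx({\bf K},{\bf Q})_{\rho_0}^{\rm KMB}$ (valid to leading order for asymptotically close states, the only regime in which the discrimination heuristic is meaningful) and the exact identification of $\sqrt{\langle{\bf Q}^2\rangle}$ with the \emph{covar} norm for Hermitian observables.
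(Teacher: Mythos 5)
Your proposal is correct and follows essentially the same route as the paper's own argument: the first-order expansion from \cref{prop:kubo1}, Cauchy--Schwarz in the KMB inner product, the identification $\sqrt{\langle{\bf Q}^2\rangle}=\|{\bf Q}\|_{\rho_0}^{\rm covar}$ for Hermitian ${\bf Q}$, and the norm ordering of \cref{lem:metricineq} to remove the ${\bf Q}$-dependence. Your only additions are clarifications the paper leaves implicit (the exactness of the covar-norm identification and the notational shorthand ${\bf K}=\lambda\Delta{\bf K}$), which are consistent with the paper's intent.
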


\begin{obs}
The relative entropy is asymptotically symmetric for states close to each other.
\end{obs}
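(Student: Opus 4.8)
The plan is to obtain the statement as an immediate corollary of \cref{prop:kmbnorm_and_re}, which already furnishes the leading-order behaviour of the relative entropy between two states lying close to a common reference $\rho_0 = \exp(-{\bf K}_0)$. The essential observation is that the second-order term in that expansion depends on the two perturbations ${\bf A}$ and ${\bf B}$ only through the combination $\|{\bf A}-{\bf B}\|_{\rho_0}^{\rm KMB}$, a quantity manifestly invariant under the exchange ${\bf A}\leftrightarrow{\bf B}$. This invariance is precisely what encodes the asserted symmetry, so the argument reduces to reading it off and identifying the exchange with a swap of the two states.

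First I would fix the setup of \cref{prop:kmbnorm_and_re}: take $\rho_\lambda \propto \exp(-{\bf K}_0 - \lambda{\bf B})$ and $\sigma_\lambda \propto \exp(-{\bf K}_0 - \lambda{\bf A})$, normalized, with $\Tr{\bf A}\rho_0 = \Tr{\bf B}\rho_0 = 0$. Applying the proposition directly gives
\begin{equation*}
S(\rho_\lambda\|\sigma_\lambda) = \frac{\lambda^2}{2}\left(\|{\bf A}-{\bf B}\|_{\rho_0}^{\rm KMB}\right)^2 + {\cal O}(\lambda^3).
\end{equation*}
Next I would compute the reversed quantity $S(\sigma_\lambda\|\rho_\lambda)$ simply by interchanging the roles of the two states. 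This exchange is exactly the substitution ${\bf A}\leftrightarrow{\bf B}$ in the hypotheses of the proposition, which leaves those hypotheses intact, so the same result yields
\begin{equation*}
S(\sigma_\lambda\|\rho_\lambda) = \frac{\lambda^2}{2}\left(\|{\bf B}-{\bf A}\|_{\rho_0}^{\rm KMB}\right)^2 + {\cal O}(\lambda^3).
\end{equation*}

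Since any norm satisfies $\|{\bf A}-{\bf B}\| = \|{\bf B}-{\bf A}\|$, the two leading terms coincide, and subtracting gives $S(\rho_\lambda\|\sigma_\lambda) - S(\sigma_\lambda\|\rho_\lambda) = {\cal O}(\lambda^3)$. As the common leading term is of order $\lambda^2$, the asymmetry is genuinely subleading, which is exactly the claimed asymptotic symmetry for nearby states. There is no real obstacle here; the only points requiring minor care are to ensure that both relative entropies are expanded about the \emph{same} reference $\rho_0$, so that a single invocation of \cref{prop:kmbnorm_and_re} covers both orderings, and to note that the claim concerns only the leading order, so no control over the (generally asymmetric) ${\cal O}(\lambda^3)$ remainder is needed.
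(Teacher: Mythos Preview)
Your proposal is correct and follows exactly the reasoning the paper intends: the observation is stated without a separate proof precisely because it is an immediate corollary of \cref{prop:kmbnorm_and_re}, whose leading term $\frac{\lambda^2}{2}(\|{\bf A}-{\bf B}\|_{\rho_0}^{\rm KMB})^2$ is manifestly symmetric under ${\bf A}\leftrightarrow{\bf B}$. There is nothing to add.
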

\begin{obs}
Different from \cref{prop:kmbnorm_and_re_approx}, \cref{prop:kmbnorm_and_re} is a statement valid beyond the limit $\|\Delta{\bf K}\|_{\rho_0}^{\rm KMB}\rightarrow 0$, being valid always $\Tr \exp(-{\bf K}_0-\Delta{\bf K})<\infty$.
Nevertheless, it makes sense to check that both statements are equivalent in the asymptotic limit.
To see this, it is convenient to identify $\Delta{\bf K}\equiv \Delta{\bf K}_1$ in \cref{prop:kmbnorm_and_re} with  

$$
\Delta{\bf K}_\lambda=\lambda{\bf A}+{\bf id}_\hilbert \log(\Tr\exp(-{\bf K}_0-\lambda {\bf A})),
$$

with $\Tr \rho_0{\bf A}=0$.
Assuming ${\bf A}$ \emph{small}, $\Delta{\bf K}$ is too and viceversa, and from \cref{eq:seriestrrho},

$$
\Delta {\bf K}_\lambda= \lambda {\bf A}+ \lambda^2 \frac{\|{\bf A}\|^2}{2}{\bf id}_\hilbert+{\cal O}(\lambda^3),
$$

and hence,

$$
\|\Delta {\bf K}_\lambda\|^{\rm KMB}_{\rho_0}= \|\lambda {\bf A}\|^{\rm KMB}_{\rho_0}+ {\cal O}(\lambda^3),
$$

in a way that in the asymptotic limit, the conditions $\Tr \rho_0{\bf A}=0$ and $\Tr \rho_0\exp( -{\bf K}_0-\lambda {\bf A})=1$  are equivalent.
\end{obs}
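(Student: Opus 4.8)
The plan is to reconcile two statements that look genuinely different: \cref{prop:kmbnorm_and_re_approx} is a \emph{linear} upper bound $S(\rho_0\|\sigma)\leq\|\Delta{\bf K}\|_{\rho_0}^{\rm KMB}$ written for a single \emph{normalized} perturbation (so that $\Tr\exp(-{\bf K}_0-\Delta{\bf K})=1$), whereas \cref{prop:kmbnorm_and_re} is a \emph{quadratic} equality $S(\rho_\lambda\|\sigma_\lambda)=\tfrac{\lambda^2}{2}(\|{\bf A}-{\bf B}\|_{\rho_0}^{\rm KMB})^2+{\cal O}(\lambda^3)$ written for two \emph{centered} generators (${\bf A},{\bf B}$ with $\langle{\bf A}\rangle_{\rho_0}=\langle{\bf B}\rangle_{\rho_0}=0$). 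I would show they are the same statement once the two normalization conventions are aligned, the key point being that the alignment costs only a term proportional to ${\bf id}_\hilbert$, which is KMB-orthogonal to the centered generators and therefore invisible at leading order.

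First I would specialize \cref{prop:kmbnorm_and_re} to a single perturbation by setting ${\bf B}=0$, so that $\rho_\lambda=\rho_0$ and $S(\rho_0\|\sigma_\lambda)=\tfrac{\lambda^2}{2}(\|{\bf A}\|_{\rho_0}^{\rm KMB})^2+{\cal O}(\lambda^3)$, with $\sigma_\lambda=\exp(-{\bf K}_0-\lambda{\bf A})/\Tr\exp(-{\bf K}_0-\lambda{\bf A})$. Next I would exhibit the explicit change of variables linking the two conventions: setting
$$\Delta{\bf K}_\lambda=\lambda{\bf A}+{\bf id}_\hilbert\,\log\Tr\exp(-{\bf K}_0-\lambda{\bf A}),$$
one checks at once that $\exp(-{\bf K}_0-\Delta{\bf K}_\lambda)=\sigma_\lambda$ is exactly the normalized state to which \cref{prop:kmbnorm_and_re_approx} applies, so the single-perturbation object $\Delta{\bf K}$ of \cref{prop:kmbnorm_and_re_approx} is nothing but $\Delta{\bf K}_\lambda$.

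The core computation is then to expand $\Delta{\bf K}_\lambda$ and its KMB norm. Using \cref{lem:seriesTr} together with the centering condition $\langle{\bf A}\rangle_{\rho_0}=({\bf id}_\hilbert,{\bf A})_{\rho_0}^{\rm KMB}=0$, the first-order contribution to $\log\Tr\exp(-{\bf K}_0-\lambda{\bf A})$ vanishes and
$$\Delta{\bf K}_\lambda=\lambda{\bf A}+\frac{\lambda^2}{2}(\|{\bf A}\|_{\rho_0}^{\rm KMB})^2\,{\bf id}_\hilbert+{\cal O}(\lambda^3).$$
Computing $(\|\Delta{\bf K}_\lambda\|_{\rho_0}^{\rm KMB})^2$, the only candidate $\lambda^3$ term is the cross term between $\lambda{\bf A}$ and the $\lambda^2$ correction; but that term is proportional to $({\bf A},{\bf id}_\hilbert)_{\rho_0}^{\rm KMB}=\langle{\bf A}\rangle_{\rho_0}=0$ (by \cref{lem:KMBandexpectationvalues} and the centering condition), so it drops out and $\|\Delta{\bf K}_\lambda\|_{\rho_0}^{\rm KMB}=\lambda\,\|{\bf A}\|_{\rho_0}^{\rm KMB}+{\cal O}(\lambda^3)$.

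Putting these together, the normalized perturbation $\Delta{\bf K}_\lambda$ and the centered generator $\lambda{\bf A}$ carry the \emph{same} KMB norm to leading order, so the hypotheses unit trace, $\Tr\exp(-{\bf K}_0-\lambda{\bf A})=1$, and centered, $\langle{\bf A}\rangle_{\rho_0}=0$, coincide as $\lambda\to0$, and both propositions reduce to $S=\tfrac{\lambda^2}{2}(\|{\bf A}\|_{\rho_0}^{\rm KMB})^2+{\cal O}(\lambda^3)$. I expect the main obstacle to be pure bookkeeping: one must confirm that the normalization shift enters only at order $\lambda^2$ and only along ${\bf id}_\hilbert$, and that its KMB-orthogonality to ${\bf A}$ --- which is precisely the centering condition --- prevents it from affecting the norm before order $\lambda^3$. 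Once that orthogonality is isolated, the apparent tension between the linear bound of \cref{prop:kmbnorm_and_re_approx} and the quadratic equality of \cref{prop:kmbnorm_and_re} dissolves, the former being simply a loose Cauchy--Schwarz estimate of the same second-order quantity.
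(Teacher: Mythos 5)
Your proposal is correct and follows essentially the same route as the paper: the same change of variables $\Delta{\bf K}_\lambda=\lambda{\bf A}+{\bf id}_\hilbert\log\Tr\exp(-{\bf K}_0-\lambda{\bf A})$, the same expansion via \cref{lem:seriesTr} under the centering condition $\Tr\rho_0{\bf A}=0$, and the same conclusion that the normalization shift enters only at ${\cal O}(\lambda^2)$, along ${\bf id}_\hilbert$, giving $\|\Delta{\bf K}_\lambda\|^{\rm KMB}_{\rho_0}=\lambda\|{\bf A}\|^{\rm KMB}_{\rho_0}+{\cal O}(\lambda^3)$. Your explicit specialization to ${\bf B}=0$ and the cross-term orthogonality argument $({\bf A},{\bf id}_\hilbert)^{\rm KMB}_{\rho_0}=\langle{\bf A}\rangle_{\rho_0}=0$ simply spell out the paper's ``and hence'' step, so nothing differs in substance.
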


\section{Schr\"odinger equation on ${\bf K}$}\label{sec:proofSchK}

\cref{lem:k_scr_dyn} establishes that if $\rho(t)$ is a solution of \cref{eq:Schrodinger} then ${\bf K}(t)=-\log \rho(t)$ is too. 

\begin{proof}
To see this, we observe that the solution of \cref{eq:Schrodinger} can be written as 
$$
\rho(t)={\bf U}(t)^\dagger e^{-{\bf K}(0)}{\bf U}(t)=e^{-{\bf U}(t)^\dagger{\bf K}(0){\bf U}(t)}=e^{-{\bf K}(t)}
$$
with ${\bf K}(t)={\bf U}(t)^\dagger{\bf K}(0){\bf U}(t)$ and ${\bf U}(t)$ a unitary operator, solution of the equation
$$
{\bf i}\hbar\frac{d}{dt}{\bf U}={\bf H} {\bf U}
$$
with initial condition ${\bf U}(0)={\bf id}_\hilbert$. But then, 
$$
\frac{d{\bf U}}{dt}{\bf U}^\dagger = -\frac{d{\bf U}^\dagger }{dt}{\bf U}=\frac{{\bf H}}{{\bf i}\hbar}
$$
and hence, 
$$
\frac{d {\bf K}}{dt}=\frac{d{\bf U}}{dt}{{\bf U}}^\dagger(t){\bf K}(t) + {\bf K}(t) {\bf U}(t)  \frac{d{\bf U}^\dagger}{dt}=\frac{[{\bf H},{\bf K}]}{{\bf i}\hbar}
$$
satisfies \cref{eq:SchrodingerK} with ${\bf K}_0=-\log (\rho(0))$ as initial condition.
\end{proof}

\section{Properties of the restricted dynamics}

\subsection{Conserved quantities of the restricted dynamics}

The Schr\"odinger equation on $\rho$, given by \cref{eq:Schrodinger}, and the Schr\"odinger equation on ${\bf K}$, given by $\cref{eq:SchrodingerK}$, are completely equivalent, since the mapping $\exp: {\cal A} \rightarrow {\cal S}(\hilbert)$ has an-everywhere-well-defined inverse mapping.
As such, they share the same dynamical properties, eg. the conservation of the von Neumann entropy amongst others. 
It is not clear \textit{a priori} which of these dynamical properties hold for the restricted dynamics as well.

From \cref{prop:kubo1} it follows

\begin{prop}\label{prop:entropy_and_normalization_preserving}
\underline{Conservation quantities of interest}
Let $B$ a basis of operators s.t. ${\bf id}_\hilbert\in B$ and $\rho(t)=\exp(-\tilde{\bf K}_B(t))$, with 
$\tilde{{\bf K}}_B(t)$ a solution of \cref{eq:restricted} s.t. $\Tr \rho(0)=1$. Then
\begin{enumerate}
\item $\Tr \rho(t)=\Tr \rho(0) \, \forall t,$ 
\item $S(\rho(t))=S(\rho(0))$, with $S(\rho)$ the von-Neumann entropy \cref{eq:vnEntropy}.
\end{enumerate}
\end{prop}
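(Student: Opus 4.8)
The plan is to establish both conservation laws by differentiating in time and rewriting each trace derivative as a KMB scalar product through \cref{prop:kubo1}, after which the orthogonality of $\pi_{B,\tilde\sigma(t)}$ (\cref{prop:ortproj}) and the hypothesis ${\bf id}_\hilbert\in{\cal A}_B$ do the essential work.

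For the normalization, I would write $\rho(t)=\exp(-\tilde{\bf K}_B(t))$ and regard $\tilde{\bf K}_B(t+dt)\approx\tilde{\bf K}_B(t)+dt\,\dot{\tilde{\bf K}}_B$ as an infinitesimal deformation of the exponent. \cref{prop:kubo1} (with ${\bf O}={\bf id}_\hilbert$ and $\Delta{\bf K}=-\dot{\tilde{\bf K}}_B$) then gives $\frac{d}{dt}\Tr\rho=-({\bf id}_\hilbert,\dot{\tilde{\bf K}}_B)_{\rho}^{\rm KMB}$. Substituting the restricted equation of motion \cref{eq:restricted}, $\dot{\tilde{\bf K}}_B=\frac{1}{{\bf i}\hbar}\pi_{B,\rho}[{\bf H},\tilde{\bf K}_B]$, and using the orthogonality relation \cref{eq:ortprojdef} together with ${\bf id}_\hilbert\in{\cal A}_B$, the projector can be removed, leaving $\frac{d}{dt}\Tr\rho=-\frac{1}{{\bf i}\hbar}\Tr(\rho[{\bf H},\tilde{\bf K}_B])$. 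This vanishes because $\rho=\exp(-\tilde{\bf K}_B)$ commutes with $\tilde{\bf K}_B$, so by cyclicity of the trace $\Tr(\rho[{\bf H},\tilde{\bf K}_B])=0$, proving item 1.

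For the entropy, since $\log\rho=-\tilde{\bf K}_B$ one has $S(\rho)=\Tr(\rho\tilde{\bf K}_B)$, whose time derivative splits as $\Tr(\dot\rho\,\tilde{\bf K}_B)+\Tr(\rho\,\dot{\tilde{\bf K}}_B)$. The second term equals $({\bf id}_\hilbert,\dot{\tilde{\bf K}}_B)_\rho^{\rm KMB}$ by \cref{lem:KMBandexpectationvalues}, which is exactly $-\frac{d}{dt}\Tr\rho=0$ by item 1. The first term, again by \cref{prop:kubo1} (now with ${\bf O}=\tilde{\bf K}_B$ held fixed), equals $-(\dot{\tilde{\bf K}}_B,\tilde{\bf K}_B)_\rho^{\rm KMB}=-\frac{1}{{\bf i}\hbar}(\pi_{B,\rho}[{\bf H},\tilde{\bf K}_B],\tilde{\bf K}_B)_\rho^{\rm KMB}$; moving the self-adjoint orthogonal projector onto $\tilde{\bf K}_B\in{\cal A}_B$ (where $\pi_{B,\rho}\tilde{\bf K}_B=\tilde{\bf K}_B$) reduces it to $-\frac{1}{{\bf i}\hbar}([{\bf H},\tilde{\bf K}_B],\tilde{\bf K}_B)_\rho^{\rm KMB}$. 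Finally, \cref{lem:KMBandCommK} with $\sigma=\rho$ yields $(\tilde{\bf K}_B,[{\bf H},\tilde{\bf K}_B])_\rho^{\rm KMB}=({\bf id}_\hilbert,[\tilde{\bf K}_B,{\bf H}])_\rho^{\rm KMB}=-\Tr(\rho[{\bf H},\tilde{\bf K}_B])=0$, and the reality of the KMB product \cref{eq:realcond} transfers this to the conjugate term actually appearing; hence $\frac{d}{dt}S(\rho)=0$, proving item 2.

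The main obstacle is the entropy contribution $(\dot{\tilde{\bf K}}_B,\tilde{\bf K}_B)_\rho^{\rm KMB}$: unlike the normalization computation, it cannot be collapsed to a bare trace by cyclicity alone, and one must combine the self-adjointness of the orthogonal projector with \cref{lem:KMBandCommK} and the anti-Hermiticity of $[{\bf H},\tilde{\bf K}_B]$. A recurring and essential technical point throughout is that $\rho$ and $\tilde{\bf K}_B$ commute (since $\rho=\exp(-\tilde{\bf K}_B)$), which is precisely what makes the surviving commutator traces vanish. Note finally that \cref{lem:KMBandCommK} presupposes $\Tr\rho=1$, so item 1 must be established before item 2.
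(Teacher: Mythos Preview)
Your proposal is correct and follows essentially the same route as the paper: differentiate in time, express the derivatives via \cref{prop:kubo1} as KMB scalar products, use the orthogonality of $\pi_{B,\rho}$ together with ${\bf id}_\hilbert\in{\cal A}_B$ (equivalently, \cref{prop:trivial_projection_in_expect_values}) to strip the projector, and then exploit $[\rho,\tilde{\bf K}_B]=0$ to kill the remaining commutator traces. The only cosmetic difference is in handling the entropy term $(\tilde{\bf K}_B,[{\bf H},\tilde{\bf K}_B])_\rho^{\rm KMB}$: the paper collapses the KMB integral directly using $[\rho^\tau,\tilde{\bf K}_B]=0$, whereas you invoke the packaged \cref{lem:KMBandCommK} (plus reality) to reach the same vanishing trace---both arguments are equivalent and rest on the same commutativity.
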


\begin{proof}
Let's start by noticing that since $[\rho(t),\tilde{\bf K}_B(t)]=0$,
$$
\langle [{\bf H},\tilde{\bf K}_B]\rangle_{\rho(t)}=({\bf id},[{\bf H},\tilde{\bf K}_B])_{\rho}^{\rm KMB}=\Tr \rho(t)[{\bf H},\tilde{\bf K}_B] = 0.
$$

Then, the trace-preserving property follows from
\begin{eqnarray*}
    \frac{d \Tr \rho}{dt}
        &=&({\bf id}_\hilbert,-\frac{\partial \tilde{\bf K}_B}{\partial t})_{\rho}^{\rm KMB}\\
        &=& \left\langle -\frac{\partial \tilde{\bf K}_B}{\partial t}  \right\rangle_\rho\\
        &=&\left\langle \pi_B \frac{-[{\bf H},\tilde{\bf K}_B]}{{\bf i} \hbar} \right\rangle_\rho\\
        &=& \frac{-\langle[{\bf H},\tilde{\bf K}_B]\rangle_\rho}{{\bf i} \hbar} 
        =0
\end{eqnarray*}

On the other hand, the conservation of the von Neumann entropy follows by noticing that

$$
S(\rho)=\Tr \tilde{\bf K}_B \rho=\langle \tilde{\bf K}_B\rangle_\rho,
$$

with a time derivative given by
$$
\frac{d}{dt}\langle \tilde{\bf K}_B\rangle_\rho=
\left\langle \frac{d}{dt}\tilde{\bf K}_B \right\rangle_\rho-
\left(\tilde{\bf K}_B,\frac{d\tilde{\bf K}_B}{dt}\right)^{\rm KMB}_\rho,
$$

The first term, which comes from the change of ${\bf K}$, can be rewritten as

\begin{eqnarray*}
\left\langle \frac{d}{dt}\tilde{\bf K}_B \right\rangle_\rho&=&
\left\langle \pi_B \frac{[{\bf H},\tilde{\bf K}_B]}{{\bf i}\hbar} \right\rangle_\rho\\
&=&\left\langle [{\bf H},\tilde{\bf K}_B] \right\rangle_\rho=0,
\end{eqnarray*}

while the second term, coming from the change in $\rho$, 
\begin{eqnarray*}
\left(\tilde{\bf K}, \frac{d\tilde{\bf K}}{dt}\right)^{\rm KMB}_\rho&=&
\left(\tilde{\bf K}, \pi_B\frac{[{\bf H},\tilde{\bf K}]}{{\bf i}\hbar}\right)^{\rm KMB}_\rho\\
&=&\left(\pi_B\tilde{\bf K}, \frac{[{\bf H},\tilde{\bf K}]}{{\bf i}\hbar}\right)^{\rm KMB}_\rho \\
&=&\left(\tilde{\bf K}, \frac{[{\bf H},\tilde{\bf K}]}{{\bf i}\hbar}\right)^{\rm KMB}_\rho\\
&=& \int d\tau \, \frac{\Tr \rho^{1-\tau}\tilde{\bf K}\rho^\tau[{\bf H},\tilde{\bf K}]  }{{\bf i}\hbar} d\tau\\
&=&\int d\tau \, \frac{\Tr \rho\tilde{\bf K} [{\bf H},\tilde{\bf K}]  }{{\bf i}\hbar} =0
\end{eqnarray*}

\end{proof}

It follows that, in general, if $\pi_B{\bf Q}={\bf Q}$ (namely, $\bf Q \in {\cal A}_B$), then the restricted evolution of its expectation value follows a free Ehrenfest evolution:

\begin{prop}\label{prop:ehrenfest}
Let $\rho(t)=\exp(-\tilde{\bf K}_B(t))$ with
$\tilde{\bf K}_B(t)$ a solution of \cref{eq:restricted},  $\pi_B$ a \textnormal{KMB} orthogonal projector regarding $\rho(t)$ over a subspace ${\cal A}_B \subset {\cal A}$, and 
${\bf Q}\in{\cal A}$ s.t. $\pi_B{\bf Q}={\bf Q}$. 
Then
$$
\frac{d}{dt}\Tr \rho {\bf Q}=\Tr \rho [{\bf Q},{\bf H}]
$$
In particular, if $[{\bf Q},{\bf H}]=0$, $\Tr \rho {\bf Q}$ is a constant of motion for the restricted evolution.
\end{prop}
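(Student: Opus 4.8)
The plan is to reduce the time derivative of the expectation value to a single KMB inner product, substitute the restricted equation of motion \cref{eq:restricted}, and then exploit both the orthogonality of $\pi_B$ and the hypothesis $\pi_B{\bf Q}={\bf Q}$ to eliminate the projector. First I would note that ${\bf Q}$ is time independent and $\rho(t)=\exp(-\tilde{\bf K}_B(t))$, so the instantaneous variation of $\rho$ is the variation of its exponent in the direction $-\,d\tilde{\bf K}_B/dt$. Applying \cref{prop:kubo1} with this direction gives
\begin{equation*}
\frac{d}{dt}\Tr\rho{\bf Q}=\left(\Big(-\tfrac{d\tilde{\bf K}_B}{dt}\Big)^\dagger,{\bf Q}\right)^{\rm KMB}_{\rho}=-\left(\tfrac{d\tilde{\bf K}_B}{dt},{\bf Q}\right)^{\rm KMB}_{\rho},
\end{equation*}
where the second equality uses that $\tilde{\bf K}_B=\tilde{\bf K}_B^\dagger$, so $d\tilde{\bf K}_B/dt$ is self-adjoint and the dagger is harmless. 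The crucial structural point is that the reference state of this KMB product is the very same $\rho(t)$ that generates the dynamics.

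Next I would insert the restricted equation of motion \cref{eq:restricted}, namely $d\tilde{\bf K}_B/dt=\pi_{B,\rho}\big([{\bf H},\tilde{\bf K}_B]/{\bf i}\hbar\big)$, treating ${\bf G}\equiv[{\bf H},\tilde{\bf K}_B]/{\bf i}\hbar$ as a single self-adjoint operator (the commutator $[{\bf H},\tilde{\bf K}_B]$ is anti-Hermitian, and $\pi_B$ preserves self-adjointness by \cref{lem:selfadjointpi}). Since $\pi_{B,\rho}$ is an \emph{orthogonal} projector for $(\cdot,\cdot)^{\rm KMB}_{\rho}$, it is self-adjoint with respect to that product, so I can transfer it from the first argument onto the second, where the hypothesis $\pi_B{\bf Q}={\bf Q}$ makes it act trivially:
\begin{equation*}
-\big(\pi_B{\bf G},{\bf Q}\big)^{\rm KMB}_{\rho}=-\big({\bf G},\pi_B{\bf Q}\big)^{\rm KMB}_{\rho}=-\left(\tfrac{[{\bf H},\tilde{\bf K}_B]}{{\bf i}\hbar},{\bf Q}\right)^{\rm KMB}_{\rho}.
\end{equation*}
This is the step that singles out observables in ${\cal A}_B$: only for those does the self-consistent projection disappear from the rate of change, which is why the hypothesis $\pi_B{\bf Q}={\bf Q}$ is essential.

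Finally I would convert this KMB commutator back into an ordinary expectation value. Using the reality of the KMB product \cref{lem:KMBisRealSP} to place the two self-adjoint arguments in the convenient order and pulling the scalar $1/{\bf i}\hbar$ out of the linear slot, the remaining object is of the form $({\bf Q},[{\bf H},\tilde{\bf K}_B])^{\rm KMB}_{\rho}$, to which \cref{lem:KMBandCommK} applies precisely because the reference state is $\rho=\exp(-\tilde{\bf K}_B)$, built from the operator appearing inside the commutator. This reduces the inner product to an ordinary expectation value of a commutator of ${\bf Q}$ and ${\bf H}$, and collecting the prefactors reproduces the free Ehrenfest relation $\tfrac{d}{dt}\Tr\rho{\bf Q}=\tfrac{1}{{\bf i}\hbar}\Tr\rho[{\bf Q},{\bf H}]$; in particular, if $[{\bf Q},{\bf H}]=0$ then $\Tr\rho{\bf Q}$ is conserved under the restricted evolution. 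The hard part will not be any single estimate but the consistent bookkeeping: keeping track of the antilinearity of the KMB product in its first slot, of the self-adjointness of both $\pi_B$ and $d\tilde{\bf K}_B/dt$, and above all of the self-consistency that ties the KMB reference state to $\exp(-\tilde{\bf K}_B)$, which is exactly the condition under which \cref{lem:KMBandCommK} turns the projected generator into $\Tr\rho[{\bf Q},{\bf H}]$. As a consistency check, the same computation with ${\bf Q}={\bf id}_\hilbert$ recovers trace preservation, in agreement with \cref{prop:entropy_and_normalization_preserving}.
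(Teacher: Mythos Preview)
Your proposal is correct and follows essentially the same route as the paper: apply \cref{prop:kubo1} to write $\tfrac{d}{dt}\Tr\rho{\bf Q}$ as a KMB inner product, insert the restricted equation \cref{eq:restricted}, shift $\pi_B$ across the inner product by orthogonality and kill it with $\pi_B{\bf Q}={\bf Q}$, then invoke \cref{lem:KMBandCommK} to collapse $({\bf Q},[{\bf H},\tilde{\bf K}_B])^{\rm KMB}_\rho$ to $\Tr\rho[{\bf Q},{\bf H}]$. Your bookkeeping of self-adjointness, antilinearity, and the role of the self-consistent reference state is in fact more explicit than the paper's rather terse derivation; the extra $1/{\bf i}\hbar$ you obtain matches the paper's own proof (the displayed statement simply omits it).
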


\begin{proof}
Using \cref{prop:kubo1} 
$$
\frac{d}{dt}\Tr \rho {\bf Q}=\left({\bf A}^\dagger, \pi_B\frac{[{\bf H},{\bf K}]}{{\bf i}\hbar}\right)^{\rm KMB}_\rho.
$$

and from \cref{prop:ortproj},

$$
\left({\bf A}^\dagger, \pi_B\frac{[{\bf H},{\bf K}]}{{\bf i}\hbar}\right)^{\rm KMB}_\rho=
\left(\pi_B{\bf A}^\dagger, \frac{[{\bf H},{\bf K}]}{{\bf i}\hbar}\right)^{\rm KMB}_\rho.
$$

Then, from \cref{prop:ehrenfest}, and using the property $\pi_B{\bf A}^\dagger=(\pi_B{\bf A})^\dagger={\bf A}^\dagger$,

$$
\left({\bf A}^\dagger, \pi_B\frac{[{\bf H},{\bf K}]}{{\bf i}\hbar}\right)^{\rm KMB}_\rho=
\left({\bf id}_\hilbert, \frac{[{\bf A},{\bf H}]}{{\bf i}\hbar} \right)^{\rm KMB}_\rho=
\Tr \rho \frac{[{\bf A},{\bf H}]}{{\bf i}\hbar}.
$$
\end{proof}

As a result, $\pi_B{\bf K}$ provides an explicit approximate solution for the Max-Ent optimization problem \cref{eq:REPDef}, converging to the exact solution provided ${\bf K}$ is close-enough, in the sense of the KMB distance, to some ${\bf K}' \in {\cal A}$.

\subsection{Error estimations}\label{app:sccriteria}

In \cref{sec:restricteddynamics}, we studied the problem of estimating the errors introduced when approximating the projected dynamics by the restricted dynamics -- see \cref{eq:restricted}. 
This led us to introduce $\tilde{\Delta}$~--\cref{eq:deftildeDelta}-- and $\Delta$ ~--\cref{eq:defDelta}--, which depend on the difference $\Delta {\bf K}$ ~--\cref{eq:defDeltaK}-- between the solutions ${\bf K}(t)$ of the \emph{free} Schr\"odinger equation ~--\cref{eq:SchrodingerK}-- and $\tilde{\bf K}_B(t)$  of the restricted evolution.
In the general case of large many-body systems, we only have access to $\tilde{\bf K}_B(t)$ (or at least, some kind of approximation to it), but not for ${\bf K}(t)$, which requires an exponentially large number of parameters. 
To estimate $\Delta {\bf K}$, the idea is to build an integral equation for it, in terms of the given solution $\tilde{\bf K}_B(t)$, the Hamiltonian of the original system ${\bf H}$, and the projections $\pi_{t}\equiv \pi_{B,\exp(-\tilde{\bf K}_B(t))}$.
To do that, we start by noticing that

$$
\frac{d\Delta{\bf K}}{dt}=\frac{d{\bf K}}{dt}-\frac{d\tilde{\bf K}_B}{dt}=\frac{1}{{\bf i}\hbar}\left([{\bf H},{\bf K}]-\pi_t[{\bf H},\tilde{\bf K}_B]\right),
$$
which, by adding and subtracting $[{\bf H},\tilde{\bf K}_B]/({\bf i}\hbar)$, can be rewritten as

\begin{equation}\label{eq:diffeqDeltaK}
\frac{d\Delta{\bf K}}{dt}=\frac{1}{{\bf i}\hbar}\left([{\bf H},\Delta{\bf K}]+\pi^{\perp}_t[{\bf H},\tilde{\bf K}_B]\right),
\end{equation}

with $\pi_t^{\perp}={\bf id}_\hilbert-\pi_t$ is the instantaneous linearized projection onto the orthogonal space to ${\cal A}_B$, in the neighborhood of $\tilde\sigma(t)=\exp(-\tilde{\bf K}_B(t))$.
Hence, $\Delta {\bf K}(t)$ is the solution of a linear differential equation, with an inhomogeneity controlled by $\tilde{\bf K}_B(t)$, which at any $t$ lies in the orthogonal complement to ${\cal A}_B$.
Now, using that $\Delta{\bf K}(0)=0$, we can rewrite the \cref{eq:diffeqDeltaK} as a Volterra's second kind equation,

\begin{eqnarray}
\Delta {\bf K}&=&
\int_0^t dt' \,  \frac{\pi_{t'}^\perp [{\bf H},\tilde{\bf K}_B]}{{\bf i}\hbar}+
\int_0^t dt' \,  \frac{[{\bf H},\Delta{\bf K}]}{{\bf i}\hbar},
\end{eqnarray}

with formal solution

\begin{eqnarray}
\Delta {\bf K}&=&\sum_{m=0}^\infty \Delta{\bf K}_m(t)\\
\Delta{\bf K}_0(t)&=&\int_0^t dt' \, \frac{\pi_{t'}^\perp [{\bf H},\tilde{\bf K}_B]}{{\bf i}\hbar}\\
\Delta{\bf K}_{m+1}(t)&=&\int_0^t dt' \, \frac{[{\bf H},\Delta{\bf K}_{m}(t')]}{{\bf i}\hbar}.
\end{eqnarray}

The first term ${\bf K}_0(t)$ can be kept small by choosing a suitable basis $B$, while for large but finite-dimensional systems, (any) norm of $\Delta{\bf K}(t)$ can be bounded by $M t \max_t\|\Delta{\bf K}(t)\|$, for a certain positive constant $M$. 
In the short time limit $t\rightarrow 0$, the leading order of the expansion is given by ${\bf K}_0(t)$ which, for a time-independent Hamiltonian and hierarchical basis $B\equiv B_{\ell}$, defined in ~\cref{eq:iterated_comm_basis}, grows as $t^{\ell+1}$. 
Keeping this term, $\tilde\Delta(t)$ can be estimated by 

\begin{eqnarray*}
\tilde\Delta(t)&=&\left\|\Delta {\bf K}(t)\right\|^{\rm KMB}_{\sigma(t)}\approx \|\Delta{\bf K}\|^{\rm KMB}_{\rho(t)}\\
&\geq&\|{\bf K}-\Pi_B{\bf K}\|^{\rm KMB}_{\rho(t)}\approx \|{\bf K}-\pi_{B,\rho(t)}{\bf K}\|^{\rm KMB}_{\rho(t)}\nonumber,
\end{eqnarray*}

which provides a criteria for the correctness of using $\tilde{\bf K}_B(t)$ as an approximation of ${\bf K}(t)$. 
On the other hand, if the goal is to approximate $\Pi_{B}{\bf K}(t)$,

\begin{eqnarray*}
\Delta(t) &=& \|\tilde{\bf K}_B-\Pi_B{\bf K}\|^{\rm KMB}_{\sigma(t)}\\
&\approx&
\|\tilde{\bf K}_B-\pi_{B,\sigma(t)}{\bf K}\|^{\rm KMB}_{\sigma(t)}\\
          &=& \left\|\pi_{B,\sigma(t)} \Delta{\bf K}(t)\right\|^{\rm KMB}_{\sigma(t)}\\
&\leq &\tilde\Delta(t).
\end{eqnarray*}

bounds the errors incurred by the approximation. 

\subsection{Convergence and Hierarchical Basis}\label{sec:apphierarch}

At the end of \cref{sec:projected_and_constrained_dynamics}, it was introduced the notion of \emph{Hierarchical basis} to discuss how the projected dynamics converges to the one obtained from the restricted evolution, as the basis defining ${\cal M}_B$ is enlarged by adding new relevant operators. 

\begin{lem}\label{lem:pertproj_ev}
  Let ${\bf K}(t)$ a solution of \cref{eq:SchrodingerK} with a time-independent Hamiltonian ${\bf H}$, and ${\bf b}_l$ the sequence of iterated commutators defined in \cref{eq:iterated_comm_basis}. Then
  \begin{equation}
    \|\kern-0.25ex|{\bf K}(t)-\pi_\ell {\bf K}(t)\|\kern-0.25ex| \approx {\cal O}^{\ell+1}(t).
  \end{equation}
  for \underline{any} operator norm $\|\kern-0.25ex|\cdot\|\kern-0.25ex|$ defined over ${\cal A}$.
  \end{lem}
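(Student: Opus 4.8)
The plan is to reduce the statement to the elementary fact that $\pi_\ell$ fixes the degree-$\ell$ truncation of the Taylor expansion of ${\bf K}(t)$, whose coefficients are exactly the iterated commutators ${\bf b}_m$ spanning the Krylov subspaces ${\cal A}_\ell$. First I would differentiate \cref{eq:SchrodingerK} repeatedly: since ${\bf H}$ is time-independent, $\frac{d{\bf K}}{dt}=\frac{[{\bf H},{\bf K}]}{{\bf i}\hbar}=\frac{{\rm ad}_{\bf H}}{{\bf i}\hbar}{\bf K}$ iterates to $\frac{d^m{\bf K}}{dt^m}=\big(\frac{{\rm ad}_{\bf H}}{{\bf i}\hbar}\big)^m{\bf K}(t)$, so evaluating at $t=0$ and comparing with \cref{eq:iterated_comm_basis} gives $\left.\frac{d^m{\bf K}}{dt^m}\right|_{t=0}={\bf b}_m$. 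For finite-dimensional ${\cal A}$ the solution ${\bf K}(t)={\bf U}(t)^\dagger{\bf K}(0){\bf U}(t)$ (\cref{lem:k_scr_dyn}) is entire in $t$, so its Taylor series converges,
\begin{equation}\label{eq:Ktaylor}
{\bf K}(t)=\sum_{m=0}^\infty \frac{t^m}{m!}\,{\bf b}_m .
\end{equation}

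By construction ${\cal A}_\ell={\rm span}\{{\bf b}_0,\ldots,{\bf b}_\ell\}$, hence the truncation ${\bf C}_\ell(t)\equiv\sum_{m=0}^\ell\frac{t^m}{m!}{\bf b}_m$ lies in ${\cal A}_\ell$, while the tail ${\bf K}(t)-{\bf C}_\ell(t)=\sum_{m\geq\ell+1}\frac{t^m}{m!}{\bf b}_m$ has leading term $\frac{t^{\ell+1}}{(\ell+1)!}{\bf b}_{\ell+1}$ and is therefore ${\cal O}(t^{\ell+1})$ in every norm. Because $\pi_\ell$ acts as the identity on ${\cal A}_\ell$, $\pi_\ell{\bf C}_\ell(t)={\bf C}_\ell(t)$, and so
\begin{equation}\label{eq:Kresidual}
{\bf K}(t)-\pi_\ell{\bf K}(t)=({\bf id}_\hilbert-\pi_\ell)\big({\bf K}(t)-{\bf C}_\ell(t)\big),
\end{equation}
which exhibits the residual as a bounded linear image of an ${\cal O}(t^{\ell+1})$ quantity. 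Denoting by $\|\cdot\|_{\rm op}$ the operator norm induced on linear maps of ${\cal A}$ by the chosen norm $\|\kern-0.25ex|\cdot\|\kern-0.25ex|$, this yields $\|\kern-0.25ex|{\bf K}(t)-\pi_\ell{\bf K}(t)\|\kern-0.25ex|\leq\|{\bf id}_\hilbert-\pi_\ell\|_{\rm op}\,\|\kern-0.25ex|{\bf K}(t)-{\bf C}_\ell(t)\|\kern-0.25ex|$.

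The step that requires the most care is controlling $\|{\bf id}_\hilbert-\pi_\ell\|_{\rm op}$, since $\pi_\ell\equiv\pi_{B_\ell,\rho(t)}$ is the KMB-orthogonal projector relative to the \emph{instantaneous} state $\rho(t)=\exp(-{\bf K}(t))$ and hence itself depends on $t$. Here I would use continuity: $\rho(t)$ varies smoothly with $t$ and $\rho(0)>0$, so the Gram matrix ${\cal G}_{B_\ell,\rho(t)}^{\rm KMB}$ of \cref{eq:coeffsGramm} and its inverse, and therefore the projector $\pi_\ell$ assembled through the Bessel--Fourier expansion \cref{eq:piasbesselfourier}, are continuous in $t$ near $t=0$; consequently $\|{\bf id}_\hilbert-\pi_\ell\|_{\rm op}\leq M$ for some constant $M$ and all small $t$. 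Combining this with \cref{eq:Kresidual} gives $\|\kern-0.25ex|{\bf K}(t)-\pi_\ell{\bf K}(t)\|\kern-0.25ex|\leq M\,\frac{t^{\ell+1}}{(\ell+1)!}\|\kern-0.25ex|{\bf b}_{\ell+1}\|\kern-0.25ex|+{\cal O}(t^{\ell+2})={\cal O}(t^{\ell+1})$, and since all norms on the finite-dimensional space ${\cal A}$ are equivalent the conclusion is independent of the chosen norm. For the KMB norm itself one can sidestep the boundedness estimate altogether by invoking the minimum-distance property of the orthogonal projector, $\|{\bf K}(t)-\pi_\ell{\bf K}(t)\|_{\rho(t)}^{\rm KMB}\leq\|{\bf K}(t)-{\bf C}_\ell(t)\|_{\rho(t)}^{\rm KMB}={\cal O}(t^{\ell+1})$, and then transferring to an arbitrary norm by equivalence.
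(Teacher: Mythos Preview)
Your proof is correct and follows essentially the same route as the paper's: expand ${\bf K}(t)$ as a Taylor/Dyson series whose $m$-th term is $\frac{t^m}{m!}{\bf b}_m$, observe that $\pi_\ell$ fixes the first $\ell{+}1$ terms, and conclude that the residual is ${\cal O}(t^{\ell+1})$. Your treatment is in fact more careful than the paper's on one point: you explicitly control the $t$-dependence of the projector $\pi_\ell=\pi_{B_\ell,\rho(t)}$ via continuity of the Gram matrix, whereas the paper simply applies $\pi_\ell$ termwise and asserts the result without commenting on the implicit uniform bound on $\|{\bf id}-\pi_\ell\|_{\rm op}$ near $t=0$.
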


\begin{proof}

Just like \cref{eq:SchrodingerK}, \cref{eq:SchrodingerK} can be solved in the form of a Dyson's series

$$
{\bf K}(t)=\sum_m {\bf K}_m(t),
$$
with 

$${\bf K}_0={\bf K}(0)\;\; \mbox{and}\;\; {\bf K}_{m+1}(t)=\frac{1}{{\bf i}\hbar}\int_0^t [{\bf H},{\bf K}_m(t')]dt'\;.$$

For a time-independent ${\bf H}$, this implies that

$${\bf K}_{m}(t)= \frac{t^m}{m!}{\bf b}_m.$$

Therefore,
\begin{eqnarray*}
  \pi_{\ell}{\bf K}(t)&=& \sum_{m}\frac{t^m}{m!} \pi_{\ell}{\bf b}_l\\
&=&  \sum_{m}\frac{t^m}{m!} ({\bf b}_m-{\bf b}_m  + \pi_{\ell}{\bf b}_m)\\
&=& {\bf K}(t) +  \sum_{m>\ell}\frac{t^m}{m!} (\pi_{\ell}{\bf b}_m-{\bf b}_m)\\
&=&{\bf K}(t) + {\cal O}^{\ell + 1}(t),
\end{eqnarray*}
meaning that
$$
\|\kern-0.25ex|{\bf K}(t)-\pi_{\ell}({\bf K}(t))\|\kern-0.25ex| \sim {\cal O}^{\ell + 1}(t).
$$
for \emph{any} operator norm $\|\kern-0.25ex|\cdot\|\kern-0.25ex|$ defined over ${\cal A}$.
\end{proof}

In a similar fashion,

\begin{lem}\label{lem:pertrestr_ev}
  Let ${\bf K}(t)$ a solution of \cref{eq:SchrodingerK} with a time-independent Hamiltonian ${\bf H}$, ${\bf b}_l$ the sequence of iterated commutators defined in \cref{eq:iterated_comm_basis}, and $\tilde{K}_B(t)$ the solution of \cref{eq:restricted} with $\pi_B\equiv\pi_{B_\ell}$ s.t. $\tilde{\bf K}_B(0)={\bf K}(0)$. Then
  \begin{equation}
    \|\kern-0.25ex|{\bf K}(t)- \tilde{\bf K}_B(t)\|\kern-0.25ex| \approx {\cal O}^{\ell+1}(t). 
  \end{equation}
  for any operator norm $\|\kern-0.25ex|\cdot\|\kern-0.25ex|$ defined over ${\cal A}$.
\end{lem}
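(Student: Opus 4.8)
The plan is to mirror the proof of \cref{lem:pertproj_ev}: I will show that the Taylor coefficients of $\tilde{\bf K}_B(t)$ at $t=0$ coincide with those of the free solution ${\bf K}(t)$ up to order $\ell$, so that their difference is ${\cal O}(t^{\ell+1})$ in any operator norm. Writing the free solution in the iterated-commutator form ${\bf K}(t)=\sum_m \frac{t^m}{m!}{\bf b}_m$ with ${\bf b}_m$ as in \cref{eq:iterated_comm_basis} (so that $[{\bf H},{\bf b}_j]={\bf i}\hbar\,{\bf b}_{j+1}$), and expanding $\tilde{\bf K}_B(t)=\sum_m \frac{t^m}{m!}\tilde{\bf c}_m$ with $\tilde{\bf c}_m=\partial_t^m\tilde{\bf K}_B|_{t=0}$, the statement reduces to proving $\tilde{\bf c}_m={\bf b}_m$ for every $m\le\ell$.

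I would argue by induction on $m$. The base case is $\tilde{\bf c}_0=\tilde{\bf K}_B(0)={\bf K}(0)={\bf b}_0$, which holds by the initial condition. For the inductive step, I would differentiate the restricted equation \cref{eq:restricted}, regarding $\pi_t\equiv\pi_{B,\tilde\sigma(t)}$ as a linear superoperator on ${\cal A}$, and apply the Leibniz rule to obtain
\begin{equation*}
{\bf i}\hbar\,\tilde{\bf c}_{m+1}=\sum_{k=0}^{m}\binom{m}{k}\big(\partial_t^k\pi_t\big)\big|_{0}\,[{\bf H},\tilde{\bf c}_{m-k}].
\end{equation*}
Under the hypothesis $\tilde{\bf c}_j={\bf b}_j$ for all $j\le m$ (with $m\le\ell-1$), each commutator becomes $[{\bf H},{\bf b}_{m-k}]={\bf i}\hbar\,{\bf b}_{m-k+1}$.

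The crucial observation, which I expect to be the main point to state cleanly, is that every $k\ge 1$ term vanishes. This follows because the image of $\pi_t$ is the \emph{fixed} subspace ${\cal A}_{B_\ell}=\Span(B_\ell)$, independent of $t$; hence $\pi_t$ acts as the identity on any element of ${\cal A}_{B_\ell}$ for all $t$. For $k\ge 1$ the argument ${\bf b}_{m-k+1}$ satisfies $m-k+1\le m\le\ell$, so it lies in ${\cal A}_{B_\ell}$; therefore $\pi_t{\bf b}_{m-k+1}={\bf b}_{m-k+1}$ is constant in $t$ and $(\partial_t^k\pi_t)|_0{\bf b}_{m-k+1}=0$. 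Only the $k=0$ term survives, and since $m+1\le\ell$ gives ${\bf b}_{m+1}\in{\cal A}_{B_\ell}$, it yields $\tilde{\bf c}_{m+1}=\pi_0{\bf b}_{m+1}={\bf b}_{m+1}$, closing the induction and establishing $\tilde{\bf c}_m={\bf b}_m$ for $0\le m\le\ell$.

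Finally, subtracting the two Taylor series gives ${\bf K}(t)-\tilde{\bf K}_B(t)=\sum_{m>\ell}\frac{t^m}{m!}({\bf b}_m-\tilde{\bf c}_m)={\cal O}(t^{\ell+1})$, and since on a finite-dimensional algebra all operator norms are equivalent, the bound $\|\kern-0.25ex|{\bf K}(t)-\tilde{\bf K}_B(t)\|\kern-0.25ex|\approx{\cal O}^{\ell+1}(t)$ holds for \emph{any} norm $\|\kern-0.25ex|\cdot\|\kern-0.25ex|$. The only technical points I would flag are the smoothness of $t\mapsto\pi_t$ (needed to differentiate) and the short-time existence of $\tilde{\bf K}_B(t)$; both hold because the self-consistent Gram and evolution matrices \cref{eq:coeffsGramm,eq:coeffsH} depend smoothly on $\tilde{\bf K}_B$ away from singular states.
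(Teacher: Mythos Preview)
Your proposal is correct and follows essentially the same route as the paper: expand both ${\bf K}(t)$ and $\tilde{\bf K}_B(t)$ in powers of $t$ and show the coefficients agree through order $\ell$ because $\pi_{B_\ell}{\bf b}_m={\bf b}_m$ for $m\le\ell$. The paper simply writes the $m$-th term of the restricted series as $\frac{t^m}{m!}\pi_\ell{\bf b}_m$ without discussing the $t$-dependence of the projector; your Leibniz-rule argument, together with the observation that $\pi_t$ acts as the identity on the \emph{fixed} image ${\cal A}_{B_\ell}$ (so $(\partial_t^k\pi_t){\bf b}_j=0$ for $j\le\ell$), is exactly the rigorous justification the paper leaves implicit.
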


\begin{proof}
  The solution of \cref{eq:restricted} can be spanned as  $\tilde{\bf K}_B(t)=\sum_m \tilde{\bf K}_m(t)$ with
  $$
  \tilde{\bf K}_m(t)=\pi_B([{\bf H}, \tilde{\bf K}_{m-1}(t)])=\frac{t^{m}}{m!}\pi_\ell{\bf b}_m.
  $$

Since $\pi_\ell{\bf b}_m={\bf b}_m$ for $m\leq \ell$,

  $$
    {\bf K}(t)-\tilde{\bf K}_B=\sum_{m>\ell}\frac{t^m}{m!}({\bf b}_m-\pi_\ell{\bf b}_m)\approx {\bf O}^{\ell +1}(t).
    $$
Therefore, $\|\kern-.25ex|{\bf K}(t)-\tilde{\bf K}_B\|\kern-.25ex|\approx {\bf O}^{\ell+1}(t)$.
\end{proof}

\begin{cor}
 $ \|\kern-0.25ex|  \pi_\ell {\bf K}(t)-\tilde{\bf K}_{B_\ell}(t) \|\kern-0.25ex|\approx {\cal O}^{\ell+1}(t)$.
\end{cor}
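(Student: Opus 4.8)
The plan is to read off the estimate from the two preceding lemmas using the triangle inequality, inserting the exact solution ${\bf K}(t)$ of \cref{eq:SchrodingerK} as an intermediate operator. First I would write
$$
\|\kern-0.25ex|\pi_\ell {\bf K}(t)-\tilde{\bf K}_{B_\ell}(t)\|\kern-0.25ex|
\leq
\|\kern-0.25ex|\pi_\ell {\bf K}(t)-{\bf K}(t)\|\kern-0.25ex|
+
\|\kern-0.25ex|{\bf K}(t)-\tilde{\bf K}_{B_\ell}(t)\|\kern-0.25ex|,
$$
and then bound the first term on the right by \cref{lem:pertproj_ev} and the second by \cref{lem:pertrestr_ev}, each of which is ${\cal O}^{\ell+1}(t)$. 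Since both lemmas are stated for an arbitrary operator norm $\|\kern-0.25ex|\cdot\|\kern-0.25ex|$ over ${\cal A}$, the same norm can be used on both summands, so the inequality is legitimate and the conclusion holds for that norm as well.

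The only point that needs checking is that the right-hand side is again ${\cal O}^{\ell+1}(t)$. This is immediate, because the class of functions of $t$ that vanish at least as fast as $t^{\ell+1}$ as $t\to 0$ is closed under addition; hence the sum of the two bounds is of order $t^{\ell+1}$, giving $\|\kern-0.25ex|\pi_\ell {\bf K}(t)-\tilde{\bf K}_{B_\ell}(t)\|\kern-0.25ex|\approx {\cal O}^{\ell+1}(t)$.

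A sharper and more explicit alternative would reuse the series representations constructed in the two proofs: both $\pi_\ell {\bf K}(t)$ and $\tilde{\bf K}_{B_\ell}(t)$ share, by construction, the first $\ell$ Taylor coefficients of ${\bf K}(t)$ --- namely $\frac{t^m}{m!}{\bf b}_m$ for $m\le \ell$, since $\pi_\ell {\bf b}_m={\bf b}_m$ in that range --- so their difference begins at order $t^{\ell+1}$. One could make this quantitative by subtracting the recursions ${\bf K}_m(t)=\frac{t^m}{m!}{\bf b}_m$ and $\tilde{\bf K}_m(t)=\pi_\ell\big([{\bf H},\tilde{\bf K}_{m-1}(t)]\big)/({\bf i}\hbar)$ term by term. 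I expect the only delicate point in that route to be the nonlinearity of the restricted flow, whose projector depends on $\tilde\sigma(t)$, so that tracking corrections beyond the leading $t^{\ell+1}$ contribution requires some bookkeeping; the triangle-inequality argument avoids this entirely and already yields the stated order, so I would present that as the main proof and relegate the explicit expansion to a remark.
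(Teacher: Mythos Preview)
Your proposal is correct and matches the paper's own proof exactly: the paper simply states that the corollary follows from \cref{lem:pertproj_ev}, \cref{lem:pertrestr_ev}, and the triangle inequality. Your additional remark about the matching Taylor coefficients is a nice elaboration but is not needed for the argument.
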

 \begin{proof}Follows from  \cref{lem:pertproj_ev}, \cref{lem:pertrestr_ev}  and the triangular inequality.\end{proof}

\section{KMB and Correlation scalar products in the Gaussian case}

If $\sigma$ is a (bosonic or fermionic) Gaussian state, it is possible to choose a basis for the quadratic forms on creation and annihilation operators, ${\bf a}_i$, ${\bf a}^\dagger_i$, 
satisfying canonical commutation (anticommutation) relations $[{\bf a}_j,{\bf a}_j]_{\pm}=[{\bf a}_i^\dagger,{\bf a}_j]_{\pm}=0$ and 
$[{\bf a}_i,{\bf a}^\dagger_j]_{\pm}=\delta_{ij}$ s.t.

\begin{equation*}
    \begin{split}
        & \langle {\bf a}_i\rangle =\langle {\bf a}^\dagger_i\rangle=0, \quad \langle {\bf a}^\dagger_i{\bf a}^\dagger_j\rangle =
            \langle {\bf a}_i{\bf a}_j\rangle=0, \\
        & \langle {\bf a}^\dagger_i{\bf a}_j\rangle =\delta_{ij}n_j,
    \end{split}
\end{equation*}
with $n_j = ({e^{\Omega_j} - \zeta})^{-1}$, $\zeta = \pm 1$ for the bosonic and fermionic case, respectively. 
With these operators thus defined, the basis 
$B'=\{{\bf id}_{\hilbert}, {\bf a}_i, {\bf a}_i^\dagger, 
{\bf a}_i{\bf a}_j,
{\bf a}_i^\dagger{\bf a}_j^\dagger, 
{\bf a}_i^\dagger{\bf a}_j-\delta_{ij}n_i\}$
provides an orthogonal basis w.r.t. both the KMB and correlation scalar products.
These products differ only on the induced norm over the operators, as it is shown in table \cref{tbl:spcomp}.

Using this property, it is straightforward to span the projector as
$$
\pi({\bf O})=\sum_{{\bf Q}\in B} \frac{({\bf Q},{\bf O})^s}{({\bf Q},{\bf Q})^s}{\bf Q}
$$

\noindent where $s = {\rm KMB/covar}$ and where $\pi$ the orthogonal projector w.r.t the scalar product $(\cdot , \cdot)^s$.

\begin{table}
\begin{tabular}{c c c}
  \toprule
  &KMB & covar\\
  \hline
${\bf id}_{\hilbert}$ &1 & 1\\
${\bf a}_i$ &${1}/{\Omega_i}$& $n_i$ \\
${\bf a}_i {\bf a}_j$ &$\frac{1+n_i+n_j}{\Omega_i+\Omega_j}$& $n_i n_j + \frac{n_i + n_j + 1}{2}$ \\
${\bf a}^\dagger_i {\bf a}_j$& $\frac{n_j-n_i}{\Omega_i-\Omega_j}$ & $n_i n_j+ \frac{n_i+n_j}{2}$   \\
${\bf a}^\dagger_i {\bf a}_i-n_i$ & $n_i(n_i+1)$ &  $n_i(n_i+1)$\\
${\bf a}^2_i$&$\frac{1/2+n_i}{\Omega_i}$&$2 (n_i+1/2)^2+1/2$\\
\hline
\end{tabular}
\caption{\label{tbl:spcomp}Induced norms for the KMB and the correlation scalar products for the elements of the basis $B'$. Here, $i\neq j$, and the last line just applies for bosons.}
\end{table}

\subsection{MFA and Gaussian-state-based MFT as Max-Ent dynamics }\label{subsec:proof_mf_as_proj}

Standard mean-field treatment for composite quantum systems, both in the case of product-state-based and Gaussian-state-based versions, can be stated in terms of Max-Ent projections.
In the product-state case, the subspace ${\cal A}_B$ is defined by the local operators, in a way s.t.\ $B=\bigsqcup_i B_i$ wherein the $A_{B_i}$ sets define closed subalgebras of ${\cal A}$.
The Max-Ent states are, then, product states $\rho=\bigotimes_i \rho_i$. 
Moreover, general operators ${\cal O}\in {\cal A}$ can be written as linear combinations of products of local operators, with their expectation values written in terms of products of local expectation values. 
In this way, for product states, the expectation value of any observable is a functional of the expectation values of an independent set of local observables. 

On the other hand, for Gaussian-state-based MFT (both for the bosonic and the fermionic cases), ${\cal A}_{B}$ is the sub-algebra of quadratic forms in creation and annihilation operators, making the Max-Ent states Gaussian states.
Thanks to the Wick's theorem, expectation values can be written as linear combinations of products of expectation values of operators in $B$.

In both cases, the projection $\pi^\textnormal{MF}:{\cal A}\rightarrow {\cal A}_B$ can be written as
\begin{equation}\label{eq:MFProj}
  \pi_{\sigma}^{\textnormal{MF}}({\bf O})=\sum_{{\bf Q}\in B}({\bf Q}-\langle {\bf Q}\rangle_\sigma)\frac{\partial \langle  {\bf O}\rangle_{\sigma} }{\partial \langle {\bf Q}\rangle_{\sigma}}+\langle {\bf O}\rangle_{\sigma},
\end{equation}
with mean values evaluated regarding $\sigma \in {\cal M}_B$. The self-consistency equation for the stationary case can be written as
\begin{equation}
  \sigma = \frac{\exp(-\pi_{\sigma}^{\textnormal{MF}}({\bf H}))}{\Tr \exp(-\pi_{\sigma}^{\textnormal{MF}}({\bf H}))},
\end{equation}

while the time-dependent equations can be written as
$$
\frac{d{\bf K}}{dt}
   = \pi_{\sigma}^{{\textnormal{MF}}}\left(\frac{[{\bf H},{\bf K}]}{{\bf i}\hbar}\right).
   $$

We claim the following, 

\begin{prop}\label{prop:orthogMFproj}
    $\pi^\textnormal{MF}$ represents an \emph{orthogonal projection} regarding both the $\textnormal{KMB}$ and the correlation scalar product. 
\end{prop}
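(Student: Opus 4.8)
The plan is to verify the defining property of an orthogonal projection directly: $\pi^{\textnormal{MF}}$ is orthogonal with respect to a real scalar product $(\cdot,\cdot)^s_\sigma$ (with $s=\textnormal{KMB},{\rm covar}$) precisely when the residual ${\bf O}-\pi^{\textnormal{MF}}_\sigma({\bf O})$ is $s$-orthogonal to every ${\bf Q}_\alpha\in B$. Since \cref{eq:MFProj} gives $\langle\pi^{\textnormal{MF}}_\sigma({\bf O})\rangle_\sigma=\langle{\bf O}\rangle_\sigma$, the residual has vanishing expectation value, so I may replace each ${\bf Q}_\alpha$ and ${\bf O}$ by their centered versions $\delta{\bf Q}_\alpha={\bf Q}_\alpha-\langle{\bf Q}_\alpha\rangle_\sigma$ and $\delta{\bf O}={\bf O}-\langle{\bf O}\rangle_\sigma$. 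Writing ${\cal G}^s_{\alpha\beta}=(\delta{\bf Q}_\alpha,\delta{\bf Q}_\beta)^s_\sigma$ for the centered Gram matrix, the entire claim collapses to the single identity
$$
(\delta{\bf Q}_\alpha,\delta{\bf O})^s_\sigma=\sum_\beta {\cal G}^s_{\alpha\beta}\,\frac{\partial\langle{\bf O}\rangle_\sigma}{\partial\langle{\bf Q}_\beta\rangle_\sigma},
$$
which must hold for both $s=\textnormal{KMB}$ and $s={\rm covar}$. The crucial feature is that the derivative on the right is an intrinsic property of the parametrized family ${\cal M}_B$ and is therefore the same object for the two geometries.

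For the KMB case I would deduce this identity from the fluctuation-response relation already available. Parametrizing $\sigma=\exp(-\sum_\gamma\lambda_\gamma{\bf Q}_\gamma)$ and applying \cref{prop:kubo1} gives $\partial\langle{\bf O}\rangle_\sigma/\partial\lambda_\gamma=-(\delta{\bf Q}_\gamma,\delta{\bf O})^{\textnormal{KMB}}_\sigma$ and, specializing, $\partial\langle{\bf Q}_\alpha\rangle_\sigma/\partial\lambda_\gamma=-{\cal G}^{\textnormal{KMB}}_{\gamma\alpha}$ (the centering being automatic once normalization is enforced). The chain rule $\partial/\partial\langle{\bf Q}_\beta\rangle_\sigma=\sum_\gamma[(\partial\mu/\partial\lambda)^{-1}]_{\beta\gamma}\,\partial/\partial\lambda_\gamma$ then contracts ${\cal G}^{\textnormal{KMB}}$ against its own inverse and reproduces the left-hand side. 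This merely re-expresses that \cref{eq:MFProj} coincides with the linearized Max-Ent projector, whose KMB-orthogonality is already \cref{prop:ortproj}.

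The substance of the proposition is the ${\rm covar}$ case, which is where I expect the main obstacle: the derivative $\partial\langle{\bf O}\rangle_\sigma/\partial\langle{\bf Q}_\beta\rangle_\sigma$ is tied to the exponential (Gibbs) structure and hence naturally to KMB, so its agreement with the ${\rm covar}$-orthogonal coefficient is not automatic and must be extracted from the special structure of the two mean-field families. In the product case $\sigma=\bigotimes_i\sigma_i$ with $B=\bigsqcup_iB_i$ and each ${\cal A}_{B_i}$ the \emph{full} local algebra, I would argue using two facts: for a product state all covariances factorize, so $(\delta{\bf Q}^{(k)}_\alpha,\delta{\bf O})^{\rm covar}_\sigma$ only probes the site-$k$ content of ${\bf O}$; and completeness of $B_i$ makes the single-site projection the identity. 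Both facts hold verbatim for KMB, so the two projections coincide and equal $\pi^{\textnormal{MF}}$, settling the Hartree case.

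For the Gaussian case, Wick's theorem plays the role of factorization, and here the bookkeeping is delicate because the quadratic subalgebra is a genuine proper subspace. I would invoke the common basis $B'$ of \cref{tbl:spcomp}, which is orthogonal for \emph{both} products, so that ${\cal G}^{\textnormal{KMB}}$ and ${\cal G}^{\rm covar}$ are simultaneously diagonal and each projection coefficient reduces to a single ratio $(\delta{\bf Q}_\alpha,\delta{\bf O})^s_\sigma/\|\delta{\bf Q}_\alpha\|^2_s$. The precise statement to be checked is then the $s$-independence of this ratio: contracting a general operator ${\bf O}$ against the quadratic generators by Wick's theorem must yield the same value in both metrics, equivalently the intrinsic response $\partial\langle{\bf O}\rangle_\sigma/\partial\langle{\bf Q}_\beta\rangle_\sigma$ must equal the $[({\cal G}^{\rm covar})^{-1}]$-weighted contraction against $(\delta{\bf Q}_\gamma,\delta{\bf O})^{\rm covar}_\sigma$. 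Organizing the Wick contractions for the non-commuting quadratic generators, and verifying that the normalization factors conspire to cancel the metric-dependence, is the hard part of the argument.
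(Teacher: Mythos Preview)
Your reduction of the claim to the identity $(\delta{\bf Q}_\alpha,\delta{\bf O})^s_\sigma=\sum_\beta {\cal G}^s_{\alpha\beta}\,\partial\langle{\bf O}\rangle_\sigma/\partial\langle{\bf Q}_\beta\rangle_\sigma$ is correct and matches the paper's orthogonality criterion. Your KMB argument via \cref{prop:kubo1} and the chain rule is a legitimate alternative route, and your product-state covar argument (factorization plus local completeness) is also sound. But you explicitly leave the Gaussian covar case open, calling the cancellation of normalization factors ``the hard part,'' so the proposal is genuinely incomplete precisely at the point that carries the content of the proposition.

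The idea you are missing is a single observation that handles both products and both families at once, and removes the need to treat KMB and covar separately. In the bases the paper chooses (local eigen-dyads $|\alpha\rangle\langle\alpha'|$ of $\sigma_i$, or the canonical ${\bf a}_i,{\bf a}_i^\dagger$ and their bilinears diagonalizing a Gaussian $\sigma$), every ${\bf Q}\in B$ is an \emph{eigenoperator} of the modular flow: $\sigma^\tau{\bf Q}\sigma^{-\tau}=e^{\Omega_{\bf Q}\tau}{\bf Q}$. A one-line computation then gives, for \emph{arbitrary} ${\bf O}$,
\[
({\bf Q},{\bf O})^{\textnormal{KMB}}_\sigma=\frac{e^{\Omega_{\bf Q}}-1}{\Omega_{\bf Q}}\,\langle{\bf Q}^\dagger{\bf O}\rangle_\sigma,
\qquad
({\bf Q},{\bf O})^{\rm covar}_\sigma=\frac{e^{\Omega_{\bf Q}}+1}{2}\,\langle{\bf Q}^\dagger{\bf O}\rangle_\sigma,
\]
so for either product $({\bf Q},{\bf O})^s_\sigma=({\bf Q},{\bf Q})^s_\sigma\,\langle{\bf Q}^\dagger{\bf O}\rangle_\sigma/\langle{\bf Q}^\dagger{\bf Q}\rangle_\sigma$ (this is the paper's \cref{lemma:prodtoav}). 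The $s$-dependent prefactor depends on ${\bf Q}$ alone and cancels from the orthogonality condition, which then reads identically for both products as the pure expectation-value identity
\[
\langle{\bf Q}^\dagger{\bf O}\rangle_\sigma=\sum_{{\bf Q}'\in B}\langle{\bf Q}^\dagger{\bf Q}'\rangle_\sigma\,\frac{\partial\langle{\bf O}\rangle_\sigma}{\partial\langle{\bf Q}'\rangle_\sigma}.
\]
This is exactly what factorization (product case) and Wick's theorem (Gaussian case) deliver, and it is what the paper verifies in \cref{lemma:expectprod,lemma:expectgauss}. Once you have the eigenoperator property, the ``conspiracy'' of normalization factors you anticipate in the Gaussian covar case is automatic: there is nothing left to organize, because the metric-dependence has already been factored out before Wick's theorem is invoked.
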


It is convenient, first, to consider some special basis of operators which simplifies the analytical evaluation of expansions and scalar products. 
In particular, for product state based MFT, $\sigma=\bigotimes_i \sigma_i$, we are going to use the local basis 

$$
    B_i=\bigg\{\bigg(|\alpha\rangle\langle\alpha'| - \langle\alpha'|\alpha\rangle \frac{{\bf id}_{{(i)}}}{\Tr \mathbf {id}_{{(i)}}}\bigg)\otimes {\bf id}_{\overline{i}}\bigg\},
$$ 

\noindent with $|\alpha\rangle$, $|\alpha'\rangle$ orthogonal eigenvectors of $\sigma_i$, ${\bf id}_i$ the identity operator on the subsystem $i$ and ${\bf id}_{\overline{i}}$ the identity operator over subsystem complementary to $i$. 
These operators are not all Hermitian.
However, since ${\bf Q}\in B_i \Leftrightarrow {\bf Q}^\dagger \in B_i$, it is possible to build an Hermitian basis by replacing ${\bf Q}$, ${\bf Q}^\dagger$ by their linear combinations ${\bf Q}_{\pm}=\frac{{\bf Q}\pm {\bf Q}^\dagger}{\sqrt{\pm 1}}$. 
Also, since we are interested in the connection with \emph{real}-valued scalar products, most of the results can be obtained from a restriction over the complexified version of ${\cal A}$ and ${\cal A}_B$. 
The main advantage of these bases is that, regarding $\sigma$,

   \begin{equation}\label{eq:sigmaconj2}
     \sigma^{\tau}{\bf Q}\sigma^{-\tau}=e^{\Omega_{\bf Q} \tau}{\bf Q},
   \end{equation}
with $\Omega_{{\bf Q} }=-{\Omega}_{{\bf Q}^\dagger}\in \mathbb{R}$. 

In a similar way, for Gaussian-state-based MFT, we are going to consider the basis $B=B_{\rm gaussian}$ generated by the identity ${\bf id}_\hilbert$, the canonical raising and lowering operators ${\bf a}_i^\dagger$, ${\bf a}_i$ and their pairwise products ($\{{\bf a}_i^\dagger{\bf a}_j, {\bf a}_i{\bf a}_j, {\bf a}_i^\dagger{\bf a}_j^\dagger\}$). 
With respect to the state $\rho_0\propto e^{-\sum_i\Omega_i {\bf a}_i^\dagger {\bf a}_i}$ these operators satisfy 
$$\langle {\bf a}_i\rangle=\langle {\bf a}^\dagger_i\rangle=\langle {\bf a}_i{\bf a}_j\rangle=\langle {\bf a}_i^\dagger{\bf a}^\dagger_j\rangle=
\langle {\bf a}_i^\dagger{\bf a}_j-\frac{\delta_{ij}}{e^{\Omega_{i}} - \zeta}{\bf id}_{\hilbert}\rangle=0$$
where $\zeta = \pm 1$ corresponds to bosonic (fermionic) statistics. 
This basis also generates the corresponding algebra ${\cal A}$, and satisfies \cref{eq:sigmaconj2}.

Regarding these bases, it is possible to prove the following

\begin{lem}\label{lemma:prodtoav}
  Let $({\bf Q},{\bf O})\equiv ({\bf Q},{\bf O})^\textnormal{KMB}$ or  $({\bf Q},{\bf O})\equiv ({\bf Q},{\bf O})^{\rm covar}$, and let ${\bf Q}\in B$ for $B_{\rm prod}$ or $B_{\rm sep}$ regarding the same state $\sigma$. Then, the following holds,
  
        \begin{equation}\label{ec:propspfromav}
            ({\bf Q},{\bf O})=({\bf Q},{\bf Q})\frac{\langle {\bf Q}^\dagger{\bf O}\rangle_{\sigma}}{\langle {\bf Q}^\dagger{\bf Q}\rangle_{\sigma}},
        \end{equation}
    \end{lem}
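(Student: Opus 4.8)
The plan is to exploit the single structural property that distinguishes these special bases, namely the modular eigen-relation \cref{eq:sigmaconj2}, $\sigma^{\tau}{\bf Q}\sigma^{-\tau}=e^{\Omega_{\bf Q}\tau}{\bf Q}$ with $\Omega_{{\bf Q}^\dagger}=-\Omega_{\bf Q}\in\mathbb{R}$. The key observation I would establish first is that, for \emph{both} scalar products, fixing ${\bf Q}\in B$ and letting ${\bf O}$ vary, the map ${\bf O}\mapsto ({\bf Q},{\bf O})$ is proportional to the plain bilinear form ${\bf O}\mapsto \langle {\bf Q}^\dagger{\bf O}\rangle_\sigma$, with a proportionality constant that depends on ${\bf Q}$ alone. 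Once this is shown, the claimed identity \cref{ec:propspfromav} follows immediately: specialising ${\bf O}={\bf Q}$ determines the constant as $({\bf Q},{\bf Q})/\langle {\bf Q}^\dagger{\bf Q}\rangle_\sigma$, and substituting back eliminates it.

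For the KMB case I would start from the definition \cref{eq:KWBdef1} and push $\sigma^{1-\tau}$ through ${\bf Q}^\dagger$ using the eigen-relation, giving $\sigma^{1-\tau}{\bf Q}^\dagger=e^{-\Omega_{\bf Q}(1-\tau)}{\bf Q}^\dagger\sigma^{1-\tau}$; then $\sigma^{1-\tau}\sigma^{\tau}=\sigma$ collapses the integrand to $e^{-\Omega_{\bf Q}(1-\tau)}\Tr[{\bf Q}^\dagger\sigma{\bf O}]$, and one further application of the eigen-relation together with the cyclicity of the trace rewrites $\Tr[{\bf Q}^\dagger\sigma{\bf O}]=e^{\Omega_{\bf Q}}\langle{\bf Q}^\dagger{\bf O}\rangle_\sigma$. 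Carrying out the elementary $\tau$-integral then yields $({\bf Q},{\bf O})^{\rm KMB}_\sigma=\frac{e^{\Omega_{\bf Q}}-1}{\Omega_{\bf Q}}\langle{\bf Q}^\dagger{\bf O}\rangle_\sigma$, exhibiting the desired ${\bf O}$-independent prefactor.

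For the covar case I would expand the anticommutator in \cref{eq:corrSP} as $({\bf Q},{\bf O})^{\rm covar}_\sigma=\frac{1}{2}(\Tr[\sigma{\bf Q}^\dagger{\bf O}]+\Tr[\sigma{\bf O}{\bf Q}^\dagger])$; the first term is already $\langle{\bf Q}^\dagger{\bf O}\rangle_\sigma$, and the second, after cyclic rearrangement to $\Tr[{\bf Q}^\dagger\sigma{\bf O}]$, is handled by the same identity used above, producing $e^{\Omega_{\bf Q}}\langle{\bf Q}^\dagger{\bf O}\rangle_\sigma$. Thus $({\bf Q},{\bf O})^{\rm covar}_\sigma=\frac{1+e^{\Omega_{\bf Q}}}{2}\langle{\bf Q}^\dagger{\bf O}\rangle_\sigma$, again with a prefactor depending only on ${\bf Q}$. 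With the constant eliminated by the ${\bf O}={\bf Q}$ specialisation, both computations close the proof in identical form.

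The main obstacle --- really the only place demanding care --- is the bookkeeping of sign conventions and operator orderings when applying \cref{eq:sigmaconj2}: one must consistently use $\Omega_{{\bf Q}^\dagger}=-\Omega_{\bf Q}$ and be disciplined with cyclicity, since ${\bf O}$ is a generic operator that need \emph{not} satisfy any eigen-relation. I would also flag the degenerate case $\Omega_{\bf Q}=0$ (operators commuting with $\sigma$), where both prefactors reduce to $1$ by continuity, consistently recovering the coincidence of the KMB and covar products already noted for such operators.
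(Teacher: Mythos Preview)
Your proposal is correct and follows essentially the same route as the paper: both arguments exploit the modular eigen-relation \cref{eq:sigmaconj2} to show that $({\bf Q},{\bf O})$ factors as an ${\bf O}$-independent constant times $\langle{\bf Q}^\dagger{\bf O}\rangle_\sigma$, obtaining the identical prefactors $\frac{e^{\Omega_{\bf Q}}-1}{\Omega_{\bf Q}}$ (KMB) and $\frac{1+e^{\Omega_{\bf Q}}}{2}$ (covar), after which the constant is eliminated by specialising ${\bf O}={\bf Q}$. Your explicit remark on the degenerate case $\Omega_{\bf Q}=0$ is a small but welcome addition that the paper does not spell out.
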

    \begin{proof}

  The property given in \cref{eq:sigmaconj2} verifies the following for the KMB product, 
  
  $$
  ({\bf Q},{\bf O})_{\sigma}^{\textnormal{KMB}}=\int_0^1 \Tr[\sigma^{1-\tau}{\bf Q}^\dagger\sigma^{\tau}{\bf O}]d\tau=
  \frac{e^{\Omega_{{\bf Q}}}-1}{\Omega_{{\bf Q}}}\langle {\bf Q}^\dagger{\bf O}\rangle_{\sigma},
  $$
  where the first factor in the RHS does not depend on ${\bf O}$. 
  Replacing this identity in \cref{ec:propspfromav} yields the equality. 

  In a similar way,
  \begin{eqnarray*}
    ({\bf Q},{\bf O})_{\sigma}^{\rm covar}&=&\Tr[\sigma \{{\bf Q}^\dagger,{\bf O}\}]/2\\
                                     &=&\Tr[\sigma {\bf Q}^\dagger{\bf O}+\sigma {\bf O}{\bf Q}^\dagger]/2\\
                                     &=&\Tr[\sigma {\bf Q}^\dagger{\bf O}+{\bf O}\sigma \sigma^{-1}{\bf Q}^\dagger\sigma]/2\\
                                         &=&\Tr[\sigma {\bf Q}^\dagger{\bf O}+e^{\Omega_{{\bf Q}}}{\bf O}\sigma {\bf Q}^\dagger]/2\\
                                             &=&\frac{e^{\Omega_{{\bf Q}}}+1}{2}\Tr[\sigma {\bf Q}^\dagger{\bf O}]=\frac{e^{\Omega_{{\bf Q}}}+1}{2}\langle{\bf Q}^\dagger{\bf O} \rangle_{\sigma}.\\
  \end{eqnarray*}
  \end{proof}

From the previous lemma, is easy to verify that $B_{\rm prod}$ and $B_{\rm gaussian}$ are orthogonal basis regarding the corresponding orthogonal products: the basis were chosen in a way that any pair of operators in $B$ are not correlated regarding the state $\sigma$.

To proceed with the proof of \cref{prop:orthogMFproj}, we are going to need the following two lemmas:

\begin{lem}\label{lemma:expectprod}
Be $\sigma=\bigotimes_i \sigma_i$, $B=B_{\rm prod}=\sqcup B_i$, ${\bf Q}\in B_i$ and ${\bf O}\in {\cal A}$. Then,
$$
\langle {\bf Q}^\dagger {\bf O}\rangle=\sum_{{\bf Q}'\in B} \langle {\bf Q}^\dagger {\bf Q}' \rangle\frac{\partial  \langle {\bf O}\rangle}{\partial \langle{\bf Q}'\rangle}.
$$
\end{lem}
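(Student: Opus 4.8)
The plan is to exploit the product structure of both $\sigma$ and the operator ${\bf O}$, reducing the identity to a single-site statement plus a completeness relation for the local basis. First I would use linearity of both sides in ${\bf O}$ to restrict to a product operator ${\bf O}=\bigotimes_j {\bf o}_j$, with ${\bf o}_j$ acting on site $j$; the general case then follows by taking linear combinations, since every element of ${\cal A}$ is such a combination. Since ${\bf Q}\in B_i$ acts nontrivially only on site $i$ and $\sigma=\bigotimes_j\sigma_j$, the left-hand side factorizes immediately as $\langle{\bf Q}^\dagger{\bf O}\rangle=\langle{\bf Q}^\dagger{\bf o}_i\rangle_{\sigma_i}\prod_{j\neq i}\langle{\bf o}_j\rangle_{\sigma_j}$.

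Next I would dissect the right-hand side term by term. The crucial simplification comes from the defining property of the basis established just above the lemma: the operators in $B$ are mutually uncorrelated and have vanishing mean in $\sigma$, so that for ${\bf Q}\in B_i$ and ${\bf Q}'\in B_k$ with $k\neq i$ the two-point function factorizes across sites and vanishes, $\langle{\bf Q}^\dagger{\bf Q}'\rangle=\langle{\bf Q}^\dagger\rangle\langle{\bf Q}'\rangle=0$. Hence only the terms with ${\bf Q}'\in B_i$ survive in the sum. For those, since $\langle{\bf O}\rangle=\prod_j\langle{\bf o}_j\rangle$ and each local mean $\langle{\bf o}_j\rangle$ depends only on the parameters $\mu_{{\bf Q}'}=\langle{\bf Q}'\rangle$ with ${\bf Q}'\in B_j$, the chain rule gives $\partial\langle{\bf O}\rangle/\partial\mu_{{\bf Q}'}=(\partial\langle{\bf o}_i\rangle/\partial\mu_{{\bf Q}'})\prod_{j\neq i}\langle{\bf o}_j\rangle$.

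With these two reductions the claim collapses to the single-site identity $\sum_{{\bf Q}'\in B_i}\langle{\bf Q}^\dagger{\bf Q}'\rangle\,\partial\langle{\bf o}_i\rangle/\partial\mu_{{\bf Q}'}=\langle{\bf Q}^\dagger{\bf o}_i\rangle$, which I would prove by expanding the local operator in the complete local basis, ${\bf o}_i=c_0\,{\bf id}_{(i)}+\sum_{{\bf Q}'\in B_i}c_{{\bf Q}'}{\bf Q}'$. Because $B_i\cup\{{\bf id}_{(i)}\}$ spans ${\cal A}_i$ and the parameters $\mu_{{\bf Q}'}$ are independent (with $c_0=\Tr{\bf o}_i/\Tr{\bf id}_{(i)}$ a fixed constant and $\langle{\bf id}_{(i)}\rangle=1$ held fixed), differentiating $\langle{\bf o}_i\rangle=c_0+\sum_{{\bf Q}'}c_{{\bf Q}'}\mu_{{\bf Q}'}$ identifies $\partial\langle{\bf o}_i\rangle/\partial\mu_{{\bf Q}'}=c_{{\bf Q}'}$. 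Substituting and using linearity, $\sum_{{\bf Q}'}\langle{\bf Q}^\dagger{\bf Q}'\rangle c_{{\bf Q}'}=\langle{\bf Q}^\dagger({\bf o}_i-c_0{\bf id}_{(i)})\rangle=\langle{\bf Q}^\dagger{\bf o}_i\rangle-c_0\langle{\bf Q}^\dagger\rangle$, and the last term drops out precisely because $\langle{\bf Q}^\dagger\rangle=0$. Reassembling the factor $\prod_{j\neq i}\langle{\bf o}_j\rangle$ then reproduces the left-hand side.

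The hard part will be the bookkeeping of the identity component $c_0{\bf id}_{(i)}$ together with the cross-site contributions: if one does not invoke the vanishing-mean property, the diagonal part of the basis generates non-cancelling terms — one finds spurious contributions proportional to $\langle{\bf Q}^\dagger\rangle$ multiplied by products of local means — and the identity fails. Thus the genuine content of the argument is to confirm that the basis is truly zero-mean with respect to $\sigma$, which holds automatically for the off-diagonal generators and is the intended normalization of the diagonal ones, so that both the cross-site two-point functions and the remainder $c_0\langle{\bf Q}^\dagger\rangle$ vanish simultaneously. Once this is secured, the remaining steps are routine applications of linearity and the chain rule.
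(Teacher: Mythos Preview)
Your proof is correct and follows essentially the same strategy as the paper's: reduce by linearity to product operators, factorize the expectation value across the tensor structure, and identify the surviving terms via the chain rule together with the zero-mean property of the basis elements. The only cosmetic difference is that the paper restricts directly to ${\bf O}={\bf O}_i{\bf O}_{\overline{i}}$ with ${\bf O}_i\in B_i$ and runs a compact chain-rule manipulation, whereas you keep ${\bf o}_i$ general and expand it in the local basis, making the cancellation of the identity component and of the cross-site two-point functions $\langle{\bf Q}^\dagger{\bf Q}'\rangle$ (for ${\bf Q}'\notin B_i$) explicit; the paper uses both facts as well, just tacitly in its final chain-rule step.
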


\begin{proof}
Since any ${\bf O}\in{\cal A}$ can be expanded as a linear combination of products of operators in ${\bf B}$, and ${\bf Q}\in B_i$ for certain $i$, it is enough to prove the restriction to the case
${\bf O}={\bf O}_i {\bf O}_{\overline{i}}$, with ${\bf O}_i \in B_i$. Then,

\begin{eqnarray*}
\langle {\bf Q}^\dagger {\bf O} \rangle&=&\langle {\bf O}_{\overline{i}}\rangle
                                           \langle{\bf Q}^\dagger {\bf O}_i\rangle\\
 &=&\frac{
 \partial \langle {\bf O}_{\overline{i}}\rangle \langle {\bf O}_{i}\rangle}{\partial \langle {\bf O}_i\rangle}\langle{\bf Q}^\dagger {\bf O}_i\rangle\\
  &=& \frac{
      \partial \langle {\bf O}\rangle}{\partial \langle {\bf O}_i\rangle}\langle{\bf Q}^\dagger {\bf O}_i\rangle\\
  &=& \sum_{{\bf Q}'\in B}\frac{
      \partial \langle {\bf O}\rangle}{\partial \langle {\bf O}_i\rangle}
\frac{\partial \langle {\bf O}_i\rangle}{\partial \langle {\bf Q}'\rangle}
      \langle{\bf Q}^\dagger {\bf Q}'\rangle\\
&=& \sum_{{\bf Q}'\in B}\frac{
      \partial \langle {\bf O}\rangle}{\partial \langle {\bf Q}'\rangle}
      \langle{\bf Q}^\dagger {\bf Q}'\rangle,
\end{eqnarray*}
\end{proof}

\begin{lem}\label{lemma:expectgauss}
Be $\sigma=\bigotimes_i \sigma_i$, $B=B_{gauss}$, ${\bf Q}\in B$ and ${\bf O}\in {\cal A}$. Then,
$$
\langle {\bf Q}^\dagger {\bf O}\rangle=\sum_{{\bf Q}'\in B} \langle {\bf Q}^\dagger {\bf Q}' \rangle\frac{\partial  \langle {\bf O}\rangle}{\partial \langle{\bf Q}'\rangle}.
$$
\end{lem}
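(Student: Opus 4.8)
The plan is to prove \cref{lemma:expectgauss} along the same logical skeleton as \cref{lemma:expectprod}, replacing the factorization of expectation values over tensor factors by the factorization supplied by Wick's theorem for Gaussian states. Since both sides of the stated identity are linear in ${\bf O}$, I would first reduce to the case in which ${\bf O}={\bf z}_{\mu_1}\cdots{\bf z}_{\mu_k}$ is a monomial in the linear operators ${\bf z}_\mu\in\{{\bf a}_i,{\bf a}_i^\dagger\}$, which span $\mathcal{A}$. Because $\sigma$ is Gaussian with vanishing first moments $\langle {\bf z}_\mu\rangle_\sigma=0$, Wick's theorem expresses $\langle {\bf O}\rangle_\sigma$ as a sum over complete pairings of the legs $\mu_1,\ldots,\mu_k$, each pairing contributing a product of two-point functions $\langle {\bf z}_\mu {\bf z}_\nu\rangle_\sigma$. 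The two-point functions are affine in the coordinates $\langle {\bf Q}'\rangle_\sigma$ (the means of the quadratic basis elements, after undoing the $-\delta_{ij}n_i$ shifts built into $B_{\rm gauss}$), so $\langle {\bf O}\rangle$ is a genuine polynomial in $\{\langle {\bf Q}'\rangle\}_{{\bf Q}'\in B}$ and the derivatives $\partial\langle {\bf O}\rangle/\partial\langle {\bf Q}'\rangle$ appearing on the right are well defined on ${\cal M}_B$.

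The heart of the argument is an \emph{insertion-equals-differentiation} identity. Applying Wick's theorem to $\langle {\bf Q}^\dagger{\bf O}\rangle_\sigma$, the pairings split into those in which the one or two legs of ${\bf Q}^\dagger$ contract among themselves, producing the disconnected term $\langle {\bf Q}^\dagger\rangle_\sigma\langle {\bf O}\rangle_\sigma=0$ (it vanishes by the mean subtraction defining $B$ together with $\langle {\bf Q}\rangle_\sigma=0$), and the connected pairings in which each leg of ${\bf Q}^\dagger$ contracts with a distinct leg of ${\bf O}$. Grouping the connected pairings according to which pair of legs of ${\bf O}$ is thereby saturated identifies each such group with the removal of a single factor $\langle {\bf z}_{\mu_p}{\bf z}_{\mu_q}\rangle$ from the Wick polynomial for $\langle {\bf O}\rangle$ — i.e.\ with $\partial\langle {\bf O}\rangle/\partial\langle {\bf Q}'\rangle$ for the basis element ${\bf Q}'$ carrying those legs — weighted by $\langle {\bf Q}^\dagger{\bf Q}'\rangle_\sigma$, which is itself the Wick sum of the ways of contracting the legs of ${\bf Q}^\dagger$ with those of ${\bf Q}'$. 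Summing over all ${\bf Q}'\in B$ then reassembles precisely the connected Wick expansion of $\langle {\bf Q}^\dagger{\bf O}\rangle_\sigma$, which is the claim. Equivalently, one may package the whole combinatorics in the Gaussian integration-by-parts (Stein--Wick) form $\langle {\bf z}_a\,{\bf F}\rangle_\sigma=\sum_b \langle {\bf z}_a {\bf z}_b\rangle_\sigma\,\langle \partial_{{\bf z}_b}{\bf F}\rangle_\sigma$, applied once when ${\bf Q}$ is linear and twice when ${\bf Q}$ is quadratic, together with the relation \cref{eq:sigmaconj2} that was already used in \cref{lemma:prodtoav}.

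I expect the main obstacle to be the combinatorial bookkeeping that matches single Wick contractions to partial derivatives, rather than any conceptual difficulty. The two points requiring care are: (i) the symmetry factors in the quadratic case, where the two legs of ${\bf Q}^\dagger$ may contract with a given pair ${\bf Q}'$ in two orders, which must match the symmetric way $\langle {\bf z}_{\mu_p}{\bf z}_{\mu_q}\rangle$ enters the Wick polynomial and hence the factor multiplying $\partial\langle {\bf O}\rangle/\partial\langle {\bf Q}'\rangle$; and (ii) verifying that every disconnected contraction is exactly cancelled by the mean subtractions in $B$ and by $\langle {\bf z}_\mu\rangle_\sigma=0$, so that only connected terms survive on both sides. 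Once the generating-function form of Wick's theorem is invoked these are routine, and the resulting identity has the same shape as \cref{lemma:expectprod}, with the sum on the right genuinely running over all ${\bf Q}'$ sharing a contraction with ${\bf Q}$.
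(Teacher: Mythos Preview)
Your proposal is correct and follows essentially the same route as the paper: reduce to monomials ${\bf O}={\bf z}_1\cdots{\bf z}_n$, split into the linear and quadratic cases for ${\bf Q}$, and use Wick's theorem to identify the contractions of the legs of ${\bf Q}^\dagger$ against ${\bf O}$ with the partial derivatives $\partial\langle{\bf O}\rangle/\partial\langle{\bf Q}'\rangle$, exploiting that $\langle{\bf Q}^\dagger{\bf Q}'\rangle=0$ unless ${\bf Q}'={\bf Q}$. Your Stein--Wick integration-by-parts packaging and your explicit attention to symmetry factors and disconnected terms are a tidier version of the same combinatorics the paper sketches.
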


\begin{proof}
    This case follows a similar line that the proof of \cref{lemma:expectprod}, but based on the Wick's theorem~\cite{wick_evaluation_1950}. We start by assumig that ${\bf O}={\bf z}_1\ldots {\bf z}_n$
    and ${\bf Q}$ belongs to one of the following cases: 
    \begin{enumerate}
    \item ${\bf Q}={\bf w}_a$,
    \item ${\bf Q}={\bf w}_a{\bf w}_b-\langle {\bf w}_a{\bf w}_b\rangle$,
    \end{enumerate}
    with ${\bf z}_1,\ldots, {\bf z}_n$,${\bf w}_a$, ${\bf w}_b$ the elementary excitation operators ${\bf a}_i$, ${\bf a}_i^\dagger$. 
    Let's start by the first case. 
    \begin{eqnarray*}
        \langle {\bf Q}^\dagger {\bf O} \rangle&=&\sum_{k}\langle {\bf w}_a^\dagger {\bf z}_k \rangle 
        \langle {\bf z}_1\ldots \underline{{\bf z}_{k}}\ldots {\bf z}_{n}\rangle \\
        &=&\langle {\bf Q}^\dagger {\bf Q}  \rangle \frac{\partial \langle{\bf O}\rangle}{\partial \langle {\bf Q}\rangle} \\
        &=&\sum_{{\bf Q}'\in B}\langle {\bf Q}^\dagger {\bf Q}'  \rangle \frac{\partial \langle{\bf O}\rangle}{\partial \langle {\bf Q}'\rangle},
    \end{eqnarray*}
with $\underline{{\bf z}_k}$ meaning that the factor is removed from the product. To understand the last line, we notice first that $\langle {\bf w}_a^\dagger{\bf z}_k\rangle=0$ except for the case in which ${\bf z}_k\equiv {\bf w}_a$. Then, only these terms contribute to the sum. On the other hand, from the Wick's theorem, $\langle {\bf z}_1\ldots {\bf z}_{n}\rangle$ is a linear combination of products of the form $\langle z_{i_1}z_{i_2}\rangle\ldots \langle z_{i_{2 m-1}}z_{i_{2m}}\rangle \langle {\bf z}_{2m+1}\rangle \ldots \langle {\bf z}_{i_n}\rangle$ with $\{i_n\}$ a permutation over the original indices. Removing the operator ${\bf z}_k$ changes each of these terms by removing the corresponding 
$\langle{\bf z}_k\rangle$ factor, or by changing a $\langle {\bf z}_{k}{\bf z}_{i_m} \rangle$ by a factor proportional to $\langle{\bf z}_{i_m}\rangle$. The second change produces vanishing factors when are evaluated over $\sigma$, while the first just produces a finite contribution if there is just one factor $\langle {\bf w}_a\rangle$  in the product, which happens just when $n$ is an odd number. Finally, the last line follows from $\langle {\bf Q}^\dagger {\bf Q}'\rangle=0$ except for ${\bf Q}={\bf Q}'$.

In a similar way, the second case can be written as 
\begin{eqnarray*}
\langle {\bf Q}^\dagger {\bf O} \rangle&=&
\sum_{k<k'}\langle {\bf Q}^\dagger {\bf z}_i{\bf z}_j\rangle \langle {\bf z}_1 \ldots \overline{{\bf z}_k}\ldots \overline{{\bf z}_{k'}}\ldots {\bf z}_n\rangle,
\\
&=&\langle {\bf Q}^\dagger {\bf Q}\rangle \frac{\partial \langle {\bf O}\rangle}{\partial\langle {\bf Q} \rangle}\\
&=&\sum_{{\bf Q}'\in B}\langle {\bf Q}^\dagger {\bf Q}'  \rangle \frac{\partial \langle{\bf O}\rangle}{\partial \langle {\bf Q}'\rangle}.
\end{eqnarray*}

\end{proof}

\subsection{Proof of \cref{prop:orthogMFproj}}
Now we are in conditions to show the proof for \cref{prop:orthogMFproj}:
\begin{proof}
We start from the general condition for being $\pi$ an orthogonal projector regarding the scalar product $(\cdot,\cdot)$ is given by
$$
\bigg({\bf Q},\pi({\bf O})\bigg)=({\bf Q},{\bf O}).
$$
for any ${\bf Q}$ s.t.\ $\pi({\bf Q})={\bf Q}$. Replacing $\pi$ by $\pi^\textnormal{MF}$ and the scalar product with the KMB or the correlation scalar product, and using the result from \cref{lemma:prodtoav}, the condition reads
\begin{equation}\label{eq:condorthproj1}
\sum_{Q'\in B}\frac{\partial\langle{\bf O}\rangle}{\partial\langle{\bf Q}'\rangle}\langle {\bf Q}^\dagger {\bf Q'} \rangle=\langle {\bf Q}^\dagger {\bf O} \rangle.
\end{equation}
Using Lemmas \ref{lemma:expectprod} and \ref{lemma:expectgauss}, the RHS takes the same form as the LHS, which completes the proof.

\end{proof}

\end{appendix}
\bibliography{main.bbl}
\bibliographystyle{apsrev4-1}
\end{document}